\documentclass[a4paper,onecolumn,11pt,accepted=2024-12-20]{quantumarticle}
\pdfoutput=1

\usepackage[breaklinks=true]{hyperref}
\hypersetup{colorlinks,citecolor=blue,allcolors=blue}

\usepackage{enumitem}
\usepackage{amsfonts}
\usepackage[english]{babel}
\usepackage[utf8]{inputenc}
\usepackage{graphicx}
\usepackage{mathrsfs}
\usepackage[mathscr]{eucal}
\usepackage[numbers,compress]{natbib}

\usepackage{mathtools}
\usepackage{bm}
\usepackage{adjustbox}
\usepackage{subfig}
\usepackage{multirow}

\usepackage{tabularx,colortbl}
\usepackage{colortbl}
\usepackage{listings}
\usepackage[capitalize]{cleveref}
\usepackage{soul}
\usepackage{url}
\usepackage{setspace}
\usepackage{makecell}
\usepackage{algorithm}
\usepackage{algorithmic}
\usepackage{amsthm}

\geometry{margin=0.9in} 
\fancyheadoffset{0cm} 

\def\BibTeX{{\rm B\kern-.05em{\sc i\kern-.025em b}\kern-.08emT\kern-.1667em\lower.7ex\hbox{E}\kern-.125emX}}

\usepackage{thmtools}

\declaretheorem[numberwithin=section]{theorem}
\declaretheorem[sibling=theorem]{lemma}

\declaretheorem[style=definition]{definition}

\declaretheorem[numbered=no,style=definition,name=Fact]{fact*}

\newtheorem*{theorem*}{Theorem}
\newtheorem*{corollary*}{Corollary}
\newtheorem*{proposition*}{Proposition}
\newtheorem*{lemma*}{Lemma}
\newtheorem*{claim*}{Claim}
\newtheorem*{problem*}{Problem}

\newcommand{\R}{\mathbb{R}}

\newcommand{\tsr}[1]{\boldsymbol{\mathscr{#1}}}
\newcommand{\Set}[1]{\mathcal{#1}}

\newcommand{\mat}[1]{\mathbf{#1}}

\newcommand{\ktdist}[1]{d_{\text{KT}}\left(#1\right)}
\newcommand{\cut}{{\mathsf{cut}}}
\newcommand{\mincut}{{\mathsf{mincut}}}
\newcommand{\pth}{{\mathsf{path}}}
\newcommand{\tn}{{\mathsf{tn}}}

\newcommand{\name}[1]{{\color{blue}[name] }}

\newcommand {\comment}[1]{{\small{\textit{//   #1}}}}
\usepackage[normalem]{ulem}

\definecolor{mygreen}{rgb}{0,0.2,0}
\definecolor{mygray}{rgb}{0.5,0.5,0.5}
\definecolor{mymauve}{rgb}{0.58,0,0.82}
\definecolor{mypurple}{rgb}{0.38,0,0.32}
\definecolor{myblue}{rgb}{0.1,0,0.32}

\begin{document}

\title{Approximate contraction of arbitrary tensor networks with a flexible and efficient density matrix algorithm
}

\author{Linjian Ma}
\affiliation{Department of Computer Science, University of Illinois Urbana-Champaign, Urbana, IL 61801, USA}
\email{lma16@illinois.edu}
\author{Matthew T. Fishman}
\affiliation{Center for Computational Quantum Physics, Flatiron Institute, New York, New York 10010, USA}
\email{mfishman@flatironinstitute.org}
\author{E. M. Stoudenmire}
\affiliation{Center for Computational Quantum Physics, Flatiron Institute, New York, New York 10010, USA}
\email{mstoudenmire@flatironinstitute.org}
\author{Edgar Solomonik}
\affiliation{Department of Computer Science, University of Illinois Urbana-Champaign, Urbana, IL 61801, USA}
\email{solomon2@illinois.edu}

\maketitle

\begin{abstract}

Tensor network contractions are widely used in statistical physics, quantum computing, and computer science.
We introduce a method to efficiently approximate tensor network contractions using low-rank approximations, where each intermediate tensor
generated during the contractions is approximated as a low-rank
binary tree tensor network. The proposed algorithm has the flexibility to incorporate a large portion
of the environment when performing low-rank approximations, which can lead to high accuracy for a given rank. Here, the environment
refers to the remaining set of tensors in the network, and low-rank approximations with larger environments
can generally provide higher accuracy.
For contracting tensor networks defined on lattices, the proposed algorithm can be viewed as a generalization of the standard boundary-based algorithms.
In addition, the algorithm includes a cost-efficient density matrix algorithm for approximating a tensor network with a general graph structure into a tree structure, whose computational cost is asymptotically upper-bounded by that
of the standard algorithm that uses canonicalization.
Experimental results indicate that the proposed technique outperforms previously proposed approximate tensor network contraction algorithms for multiple problems
in terms of both accuracy and efficiency.

\end{abstract}

\section{Introduction}
\label{sec:intro}

A tensor network~\cite{orus2014practical,vidal2003efficient} uses a set of (small) tensors, where some or all of their modes are contracted according to some pattern, to implicitly represent the structure of high-dimensional tensors that are expensive to form explicitly.
Tensor network techniques have been widely used in computational quantum physics~\cite{vidal2003efficient,verstraete2008matrix,white1992density,verstraete2004renormalization,shi2006classical,schollwock2005density},
where low-rank tensor networks can be used to both represent Hamiltonians and quantum states.
These techniques are also applied in
multiple other applications, including quantum circuit simulation~\cite{pan2020contracting,guo2019general,pang2020efficient,ma2022low,zhou2020limits},  data mining via tensor methods~\cite{kolda2009tensor,cichocki2014tensor}, machine learning~\cite{stoudenmire2016supervised,reyes2021multi,li2020understanding}, and so on.

The \textit{tensor network contraction} operation explicitly evaluates the single tensor represented by a given tensor network, and it has multiple applications. In quantum computing, each quantum
circuit execution
can be viewed as a tensor network contraction, making this method a useful tool for simulating quantum computers~\cite{ma2022low,zhou2020limits,pang2020efficient,pan2020contracting}.
In statistical physics, tensor network contraction has been used to evaluate the classical partition function of physical models defined on specific graphs~\cite{levin2007tensor}.
Tensor network contraction has also been used for counting satisfying assignments
of constraint satisfaction problems (\#CSPs)~\cite{kourtis2019fast}.  In this approach, an arbitrary \#CSP formula is transformed into a tensor network, where its full contraction yields the number of satisfying assignments of that formula.
Tensor network contraction is typically achieved through a sequence of pairwise tensor contractions. This sequence, known as the \textit{contraction path}, is determined by a topological sort of the underlying \textit{contraction tree}. The contraction tree is a rooted binary tree that depicts the complete contraction of the tensor network. In this tree, the leaves correspond to the tensors in the network, and each internal vertex represents the tensor contraction of its two children.

In the general case, contracting tensor networks with arbitrary structure is \#P-hard because of the potential production of intermediate tensors with high orders or large modes, leading to significant computational costs for accurate contraction \cite{damm2002complexity,o2019parameterization,biamonte2015tensor}.
Nonetheless, in some applications such as many-body physics, it has been observed that tensor networks built on top of specific models can often be approximately contracted with satisfactory accuracy, without incurring large computational costs~\cite{orus2019tensor}.

 \begin{figure}[htb]
\centering

\subfloat[MPS]{
\hspace{10mm}
\includegraphics[width=.2\textwidth, keepaspectratio]
{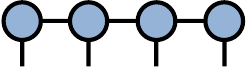}\hspace{10mm}
\label{subfig:mps_vis}}
\subfloat[Binary tree tensor network]{\hspace{10mm}\includegraphics[width=.2\textwidth, keepaspectratio]{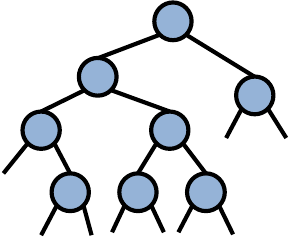}
\hspace{10mm}
\label{subfig:binary_tree_tn}}
\subfloat[TTNS]{\hspace{10mm}\includegraphics[width=.2\textwidth, keepaspectratio]{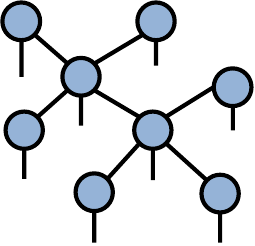}
\hspace{10mm}
\label{subfig:ttn}}

\caption{
Illustration of the matrix product state (MPS), the (full) binary tree tensor network, and the tree tensor network state (TTNS). MPS is a maximally-unbalanced binary tree tensor network if contracting the tensor at one end with its neighbor. Both MPS and the binary tree tensor network are special cases of TTNS, where each tensor has an order of at most 3.}
\label{fig:trees}
\end{figure}

A common approach to approximately contract a tensor network is to approximate large intermediate tensors as (low-rank) tensor networks, which reduces the memory usage and computational overhead for subsequent contractions. Widely used tensor networks for approximation including the matrix product state (MPS~\cite{verstraete2008matrix}, also called tensor train~\cite{oseledets2011tensor}), the binary tree tensor network~\cite{shi2006classical}, and the  tree tensor network state (TTNS)~\cite{nakatani2013efficient,murg2015tree,felser2021efficient}, which are visualized in \cref{fig:trees}.
For tensor network contractions defined on regular structures, such as projected entangled pair states (PEPS) with 2D lattice structures~\cite{verstraete2004renormalization,verstraete2008matrix}, many efficient approximate contraction algorithms based on MPS approximations~\cite{lubasch2014unifying,lubasch2014algorithms} have been proposed.
However, many of these methods have not been extended to other general tensor network structures.

 \begin{figure}[htb]
\centering
\includegraphics[width=.9\textwidth, keepaspectratio]{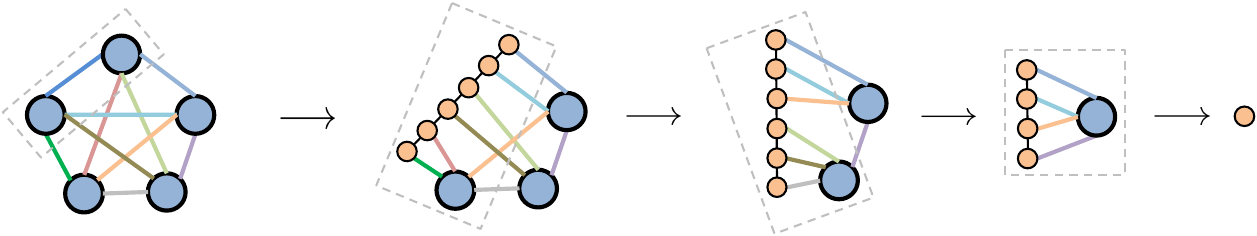}

\caption{
Illustration of the approximate contraction technique used in \cite{jermyn2020automatic,pan2020contracting,chubb2021general}. Each intermediate is approximated as an MPS, which has an unbalanced binary tree structure. The left diagram is the tensor diagram of the input tensor network.
Each dashed box denotes the part of the tensor network that is approximated as an MPS.
}
\label{fig:contract_illustrate}
\end{figure}

Recent works have proposed automated approximation algorithms for contracting tensor networks with more general graph structures~\cite{jermyn2020automatic,pan2020contracting,chubb2021general,gray2022hyper,sahu2022efficient,alkabetz2021tensor}, and many  of these methods employ low-rank approximation/truncation techniques.
In \cite{jermyn2020automatic,pan2020contracting,chubb2021general}, each intermediate tensor produced during the contraction is approximated as a binary tree tensor network, and we illustrate this approach in \cref{fig:contract_illustrate}.
In particular, \cite{jermyn2020automatic} approximates each intermediate tensor as a general binary tree tensor network,
while the algorithm proposed in \cite{pan2020contracting} called ``contracting arbitrary tensor network'' (CATN) approximates each intermediate tensor as an MPS.
When contracting two MPSs, CATN swaps/permutes the modes that connect both MPSs to the boundaries. Then, it contracts these modes to obtain the output MPS.
The adjacent mode swaps are the bottleneck for complexity in CATN.
In another algorithm proposed in \cite{chubb2021general} called ``SweepContractor", each intermediate tensor is also approximated as an MPS,
and the algorithm leverages an embedding of the tensor network graph into 2D space to find an effective contraction path.

Several factors can significantly impact the efficiency and accuracy of the approximate tensor network contraction process.
To begin with, the choice of contraction path plays a crucial role. Ref.~\cite{gray2022hyper} demonstrates that selecting different contraction paths using various heuristics can lead to substantial variations in both runtime and accuracy for different problems.
Additionally, for both CATN~\cite{pan2020contracting} and SweepContractor~\cite{chubb2021general},
it is essential to carefully select the binary tree/MPS structures and permutations (i.e., a mapping from tensor modes onto binary tree vertices)~\cite{li2022permutation}. These choices should yield accurate low-rank approximations while enabling efficient subsequent contractions. However, previous works such as \cite{jermyn2020automatic,pan2020contracting,chubb2021general} have not systematically explored these parts of the design space.

The low-rank truncation algorithm used to reduce the tensor size in approximate contraction is another important factor.
Let $\mat{M}$ represent the part of the network that requires approximation, and let $\mat{E}$ denote the remaining set of tensors in the network, which is commonly referred to as the \textit{environment}.
The optimal way to truncate is to minimize the global error by solving $\min_{\mat{X}}\|\mat{E}\mat{X} - \mat{E}\mat{M}\|_F$ with the constraint that $\mat{X}$ has a specific low-rank tensor network structure, where $\|\cdot\|_F$ denotes the Frobenius norm.
Two standard algorithms for solving the low-rank approximation problem are the canonicalization-based algorithm and the density matrix algorithm. In the canonicalization-based algorithm, one first performs a QR decomposition on $\mat{E}$, $\mat{Q},\mat{R}\leftarrow \texttt{QR}(\mat{E})$, then updates $\mat{X}$ based on the low-rank approximation of $\mat{Q}^T\mat{M}$.
In the density matrix algorithm, the leading eigenvectors of the density matrix (also called the Gram matrix/normal equations), $\mat{M}^T\mat{E}^T\mat{E}\mat{M}$, is computed, and $\mat{X}$ is computed by projecting $\mat{M}$ to the subspace spanned by the leading eigenvectors. Both algorithms have the same output but can have different computational costs.

If the environment tensor network $\mat{E}$ contains a large number of tensors, minimizing the global error could be computationally expensive. In such cases, one typically resorts to minimizing the local error by solving $\min_{\mat{X}}\|\mat{X} - \mat{M}\|_F$, or by replacing $\mat{E}$ with a smaller environment $\hat{\mat{E}}$ so the optimization problem is easier to solve.

Achieving a balance between accuracy and efficiency requires favoring different structures and sizes of the environment $\hat{\mat{E}}$ for different problems. Hence, it becomes crucial to provide an automated tensor network contraction algorithm with the necessary flexibility to accommodate different environments. This flexibility enables the algorithm to adapt and optimize the contraction process according to the specific requirements of each problem.

In previous studies \cite{pan2020contracting, chubb2021general}, the selection of environments was implicitly determined by the algorithm.
For instance, in the CATN algorithm \cite{pan2020contracting}, truncation takes place during adjacent swaps of MPS modes, with the environment consisting of all tensors in the target MPS. Similarly, the SweepContractor algorithm \cite{chubb2021general} performs truncation while contracting an input MPS with a single tensor, incorporating both the MPS and the tensor into the environment.
The method proposed in \cite{gray2022hyper}  introduces user-specified environment sizes, and utilizes tree-structured environments $\hat{\mat{E}}$ that are constructed by including a
spanning tree of tensors around the pair of tensors to be truncated. Ref. \cite{gray2022hyper} demonstrates that including a larger environment leads to more accurate contraction results for multiple problems.
In this work, we generalize the strategies presented in the previous works and propose a tensor network contraction algorithm that allows more flexible environment incorporation.
For contracting tensor networks defined on lattices, the proposed strategy can be viewed as a generalization of the standard boundary-based algorithms~\cite{verstraete2004renormalization}.

\subsection{Our contributions}

We
propose a new approach, \texttt{partitoned\_contract}, for performing approximate contractions of arbitrary tensor networks.
We illustrate the approach in \cref{fig:ovw}. This approach follows the technique used in \cite{jermyn2020automatic,pan2020contracting,chubb2021general}, where each intermediate tensor produced during the contraction is approximated as a binary tree tensor network.
Moreover, our approach is composed of the following two novel components.

 \begin{figure}[htb]
\centering

\subfloat[Complete contraction tree]{
\hspace{18mm}
\includegraphics[width=.2\textwidth, keepaspectratio]
{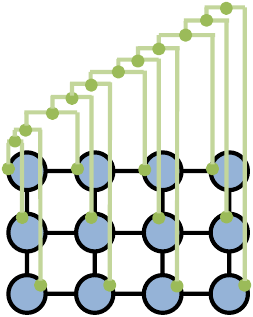}\hspace{18mm}\label{subfig:peps1}}
\subfloat[Contraction tree on the partitioned network]{\hspace{18mm}\includegraphics[width=.2\textwidth, keepaspectratio]{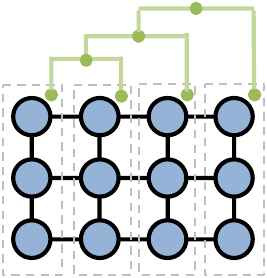}
\hspace{18mm}
\label{subfig:peps2}}

\caption{
Illustration of different contraction trees. Each blue vertex denotes a tensor, and the green lines and dots denote the binary contraction tree.
The contraction tree visualization has been adapted from \cite{gray2022hyper}.
In (b), each dotted box denotes a partition of the tensor network. The partial contraction sequence shown in (b) corresponds to a standard left-to-right boundary MPS contraction \cite{verstraete2004renormalization}.}
\label{fig:contract_peps}
\end{figure}

First, unlike prior works~\cite{jermyn2020automatic,pan2020contracting,chubb2021general,gray2022hyper} that contract the tensor network based on a \textit{complete} contraction tree with each  leaf corresponding to a tensor in the network, our technique relies on a contraction tree of parts
of the tensor network, which is a
    \textit{partial} contraction tree and each leaf vertex corresponds to a partition.
    We illustrate complete and partial contraction trees in \cref{fig:contract_peps}.
    In the algorithm, each low-rank approximation considers all tensors in the input partitions as the environment, thus utilizing a larger partition means using a larger environment and can potentially lower the truncation error.
    In practical applications, one has the option of either utilizing automated graph partitioning libraries like KaHyPar~\cite{schlag2023high} and Metis~\cite{karypis1997metis} for partitioning the tensor network,
    or manually selecting suitable partitions for specific problems.
    In \cref{subsec:2d}, we will demonstrate how the utilization of the partial contraction tree abstraction enables the straightforward extension of various contraction algorithms designed for 2D grids with different environments, including those that have not been automated in the prior work \cite{jermyn2020automatic, pan2020contracting, chubb2021general,gray2022hyper}.

 \begin{figure}[htb]
\centering

\subfloat [Illustration of the \texttt{partitoned\_contract} algorithm]{\includegraphics[width=.98\textwidth, keepaspectratio]{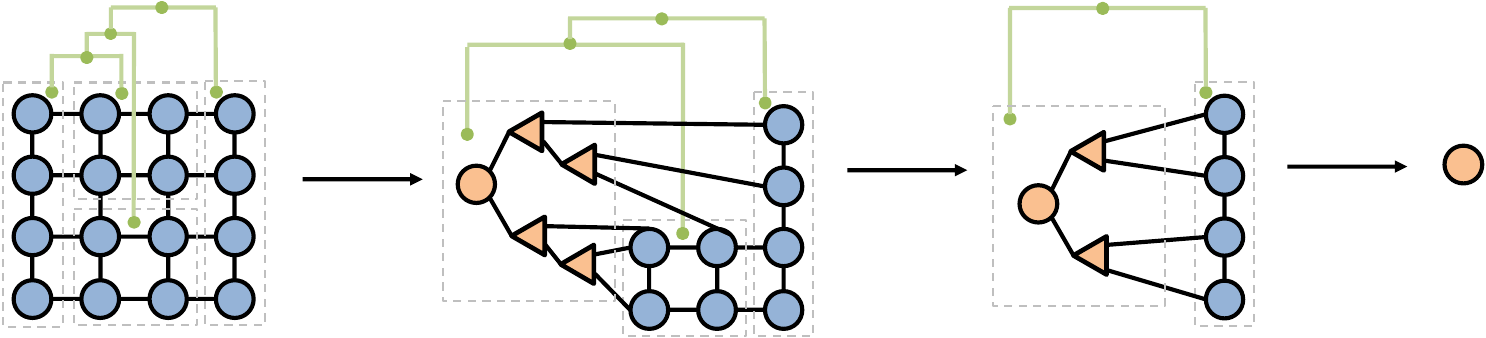}
\label{fig:pipeline1}}

\vspace{3mm}

\subfloat [Illustration of the process to approximate the contraction of two partitions into a binary tree tensor network]{\includegraphics[width=.98\textwidth, keepaspectratio]{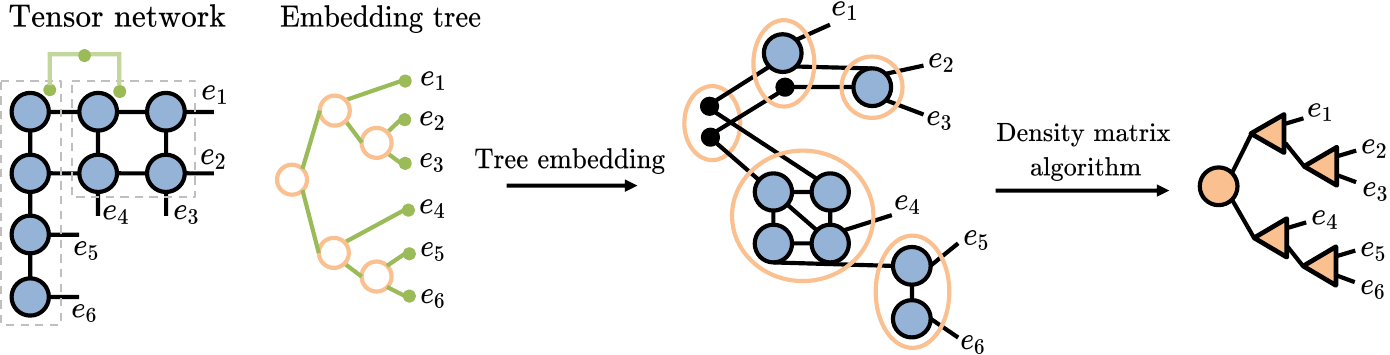}
\label{fig:pipeline}}

\caption{(a) Illustration of the \texttt{partitoned\_contract} algorithm.
The algorithm takes as inputs a tensor network, a partitioning of that tensor network, and a partial contraction tree. The algorithm proceeds by traversing the partial contraction tree and approximately contracting a pair of tensor network partitions into a binary tree tensor network.
(b) Illustration of the process to approximate the input tensor network (left diagram) into a binary tree tensor network (right diagram). The embedding tree is a rooted binary tree that represents the output tree structure. The tree embedding step maps a partition of the input tensor network to each non-leaf (orange) vertex in the embedding tree.
Finally, the density matrix algorithm (or the canonicalization-base algorithm)
approximates the embedded tensor network into a binary tree tensor network. Each black dot in the diagrams represents an identity matrix.
}
\label{fig:ovw}
\end{figure}

    Second, we provide a new approach to approximate a given tensor network into a binary tree structure, as depicted in \cref{fig:pipeline}.
    This approach is composed of the following three novel components.
    \begin{itemize}
        \item It encompasses a new heuristic for generating binary tree structures and permutations (i.e., a mapping from tensor modes onto binary tree vertices \cite{li2022permutation}) of intermediate tensor networks.
        The binary tree structure is also called the embedding tree in \cref{fig:pipeline} and throughout the paper.
Unlike previous studies that relied on arbitrary choices for such structures and permutations, our approach takes into consideration the efficiency of subsequent contractions. This is achieved by ensuring that the embedding tree aligns with a contraction path-generated tree, which imposes constraints on the adjacency relations of binary tree modes.
Moreover, we ensure that the selected structure is similar to the given sub-tensor network by solving a graph embedding problem that minimizes the congestion~\cite{hruska2008tree,bienstock1990embedding,matsubayashi2015separator,bezrukov2000congestion,manuel2009exact},
allowing for an accurate approximation with low ranks in the resulting tree tensor network.
 The details of the algorithm can be found in \cref{sec:binary_tree}.

\item It includes a density matrix algorithm to approximate a given tensor network into the target embedding tree. The algorithm uses a sequence of density matrix algorithms for low-rank approximation to output the embedding tree tensor network, and
includes all tensors in the input tensor network as the environment.
 When compared to the canonicalization-based algorithm that employs the same environment,
 the density matrix algorithm exhibits the same or lower asymptotic cost, making it more efficient.
In particular, the density matrix algorithm exhibits the potential to significantly reduce the asymptotic cost when dealing with large environment sizes. The detail of the algorithm can be found in \cref{sec:dm}.

The tensor network contraction framework proposed in \cite{gray2022hyper} also offers the capability to handle large environments. However, the framework in \cite{gray2022hyper} approximates the environments as trees from the outset by cutting certain bonds in the environment, which ignores certain loop correlations in the environment. In contrast, our density matrix algorithm directly works with the full environment including loops and then approximates the result of the contraction as a tree, which for a given environment should be more accurate but potentially more computationally expensive.

\item
In scenarios where the mode ordering of the selected tree structure intended for efficient later contractions does not align with the input structure, our approach employs a hybrid algorithm that integrates the density matrix algorithm and a swap-based algorithm to perform the tree approximation.
Swap-based algorithms, extensively utilized in MPS-based tensor network contraction algorithms such as when applying long-range gates \cite{stoudenmire2010minimally} and in other general approximate contraction algorithms like CATN and SweepContractor, use a sequence of adjacent swaps of MPS  modes to permute the ordering of the MPS tensors.
Within our algorithm, a  sequence of local swap operations are performed using the density matrix algorithm, each time progressively modifying the structure by a small amount to ensure that the overall cost remains manageable.
The detail of the algorithm can be found in \cref{sec:swap}.

    \end{itemize}

In \cref{sec:exp}, we assess the performance of the proposed algorithm. Regarding the sub-problem of approximating a general tensor network into a tree tensor network, our experimental results show the superior efficiency of the density matrix algorithm compared to the canonicalization-based algorithm when applied to multiple input tensor network structures. These empirical findings consistently align with our theoretical analysis.

To evaluate the efficacy of our contraction algorithm, we conduct experiments on various tensor network structures. The results demonstrate that by leveraging environments and employing the density matrix algorithm, we achieve significant reductions in overall execution time and improvements in accuracy when dealing with tensor networks defined on lattices and random regular graphs. Notably, our algorithm outperforms both the CATN algorithm proposed in \cite{pan2020contracting} and the SweepContractor proposed in \cite{chubb2021general} when considering tensor networks defined on lattices representing the classical Ising model. Specifically, our approach achieves an order of magnitude
speed-up in execution time while maintaining the same level of accuracy. This
improvement in speed demonstrates the efficiency of our
approach.

\subsection{Organization}

This paper is organized as follows. In \cref{sec:defn,sec:background}, we introduce the definitions, the computational cost model, and background for the proposed algorithm.
\cref{sec:method} provides an overview of the proposed approximate tensor network contraction algorithm, with detailed components discussed in \cref{sec:binary_tree,sec:dm,sec:swap}.
In \cref{sec:exp}, we present the results of a series of experiments to evaluate the performance of the proposed algorithm.

\section{Definitions and the computational cost model}
\label{sec:defn}

\subsection{Tensor network definitions}

We introduce the tensor network notation here.
The structure of a tensor network can be described by an undirected graph $G=(V,E)$, where each tensor of the tensor network is associated with a vertex in $V$ and each mode of the tensors is associated with an edge in $E$.
We refer to edges with a dangling end (one end not adjacent to any vertex) as uncontracted edges, and those without dangling ends as contracted edges.
We use $w$ to denote an edge weight function such that for each edge $e\in E$, $w(e) = \log(s)$ is the natural logarithm of the mode size $s$ associated with an edge $e$.
For an edge set $E$, we use $w(E)=\sum_{e\in E}w(e)$ to denote the weighted sum of the edge set. The weight $w(E)$ is related to the cost of contracting neighboring tensors along the modes associated with the edge set $E$, which will be discussed in more detail in the next section.

\subsection{The computational cost model}\label{subsec:cost_model}

We summarize the computational cost model used throughout the paper. We assume that all tensors in the tensor network are dense.
The contraction of two general dense tensors $\tsr{A}$ and $\tsr{B}$, represented as vertices $v_a$ and $v_b$ in $G=(V,E)$, can be cast as a matrix multiplication, and the overall asymptotic cost is
\begin{equation}
\Theta\left(\exp\left(w(E(v_a)) + w(E(v_b)) - w(E(v_a,v_b))\right)\right),
\end{equation}
where $E(v_a),E(v_b)$ denotes the edges adjacent to $v_a$, $v_b$, respectively, and $E(v_a,v_b)$ denotes the edge connecting $v_a$ and $v_b$.
Above we assume the classical matrix multiplication algorithm is used rather than fast algorithms such as Strassen's algorithm~\cite{strassen1969gaussian}.

To canonicalize the tree tensor network, a series of QR factorizations is employed. Given a matrix $\mat{A}\in \R^{m\times n}$, performing the QR factorization incurs an asymptotic cost of $\Theta(mn\cdot \min(m, n))$. In order to reduce the bond size or rank within the tensor network, we utilize low-rank factorization. Given a matrix $\mat{A}\in \R^{m\times n}$, low-rank factorization aims to find two matrices, $\mat{B}\in \R^{m\times r}$ and $\mat{C}\in \R^{r\times n}$, with $r$ being less than the minimum of $m$ and $n$, while minimizing the Frobenius norm $\|\mat{A}-\mat{BC}\|_F$. In our cost analysis, we assume the use of the standard low-rank factorization algorithm that employs a rank-revealing QR factorization~\cite{gu1996efficient}. The asymptotic cost of this algorithm is $O(mnr)$.

\section{Background}
\label{sec:background}

This section offers background for the proposed approach. In \cref{subsec:survey_tn_structure}, we provide a short survey of several common tensor networks discussed in the paper.
In \cref{subsec:can_dm},
we review both the canonicalization-based algorithm and the density matrix algorithm for low-rank approximation of tensor networks.
This review serves as motivation for the density matrix algorithm explained in detail in \cref{sec:dm}.

We cover additional backgrounds in the appendix.
\cref{subsec:swap} covers the standard swap-based algorithm used to permute MPS modes, which serves as a motivation for our  algorithm that combines the density matrix algorithm and the swap-based algorithm, as outlined in \cref{sec:swap}. Furthermore, in  \cref{subsec:recur_bisection}, we delve into the definition and heuristics of the graph embedding problem, which is utilized in \cref{sec:binary_tree} to select an efficient binary tree structure.

\subsection{A survey of common tensor network structures}\label{subsec:survey_tn_structure}

We survey both tree tensor networks and tensor networks defined on lattices. The matrix product state (MPS)~\cite{verstraete2008matrix,oseledets2011tensor}, a binary tree tensor network~\cite{shi2006classical}, and a general tree tensor network state (TTNS)~\cite{nakatani2013efficient,murg2015tree,felser2021efficient} are illustrated in \cref{fig:trees}.
An MPS is a tensor network with a linear structure, with each tensor having one uncontracted mode.
The binary tree tensor network has a rooted binary tree structure, and all non-root vertices have an order of three. In a general TTNS, each tensor can have uncontracted modes, and the network has a general tree structure.

In this work, we focus on discussing both MPS and the binary tree tensor network. These networks are considered as special cases of TTNS, where each tensor has a maximum order of three. This characteristic makes them more memory-efficient compared to more general TTNS, especially when considering a fixed rank $r$. When the uncontracted mode size $s$ is much smaller than $r$, each MPS tensor has a size of $O(sr^2)$. This memory requirement is more efficient than that of the general binary tree tensor network, whose tensor size is $O(r^3)$.

 \begin{figure}[htb]
\centering

\subfloat[MPO]{
\hspace{10mm}
\includegraphics[width=.2\textwidth, keepaspectratio]
{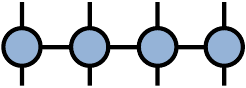}\hspace{10mm}
\label{subfig:mpo}}
\subfloat[PEPS]{\hspace{10mm}\includegraphics[width=.2\textwidth, keepaspectratio]{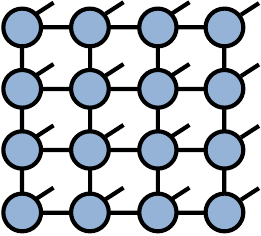}
\hspace{10mm}
\label{subfig:peps}}
\subfloat[3D lattice tensor network]{\hspace{10mm}\includegraphics[width=.2\textwidth, keepaspectratio]{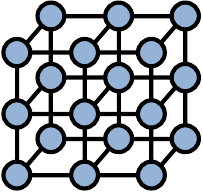}
\hspace{10mm}
\label{subfig:2d_lattice}}

\caption{
Illustration of the matrix product operator (MPO), the projected entangled pair states (PEPS), and the $3 \times 3 \times 2$ 3D lattice tensor network. }
\label{fig:tns}
\end{figure}

\cref{fig:tns} and \cref{fig:contract_peps} provide visual representations of other tensor networks, including the matrix product operator (MPO), the projected entangled pair states (PEPS)~\cite{verstraete2004renormalization,verstraete2008matrix}, and a closed tensor network defined on a 3D lattice.
In the 2D lattice tensor network, each row is either an MPS or an MPO. In the 3D lattice, each slice is either a PEPS or a PEPO. The PEPO has a similar structure to PEPS, but with the distinction that each tensor has two uncontracted edges.

\subsection{The canonicalization-based algorithm and the density matrix algorithm}\label{subsec:can_dm}

Let $\mat{A}\in \R^{b\times R},\mat{B}\in \R^{R\times c}$ denote two tensors in a tensor network, and let $\mat{E}\in \R^{a\times b}$ denote the environment tensor network. The low-rank approximation problem that is widely used in this work can be stated as
\begin{equation}
\min_{\hat{\mat{A}}\in \R^{b\times r},\mat{V}\in \R^{c\times r}} \left\|\mat{E}\mat{A}\mat{B} - \mat{E}\hat{\mat{A}}\mat{V}^T\right\|_F,
\quad \text{s.t. } \mat{V}^T\mat{V} = \mat{I},
\end{equation}
where $r<R$.
For the canonicalization-based algorithm, one first performs a QR decomposition on $\mat{EA}$ and gets $\mat{Q}\in \R^{a\times R}, \mat{R}\in \R^{R\times R}$, and then computes the right $r$ leading singular vectors of $\mat{R}\mat{B}$ to obtain $\mat{V}$.
For the density matrix algorithm, one first computes the Gram matrix (normal equations)  $\mat{L}=(\mat{EAB})^T\mat{EAB}$,
commonly known as the density matrix in the physics literature (and is at the heart of the original formulation of the density matrix renormalization group (DMRG) algorithm \cite{white1992density}),
and then computes the right $r$ leading singular vectors/eigenvectors of $\mat{L}$ to obtain $\mat{V}$.

For the case where $\mat{E}$ is a single matrix, both algorithms yield the same asymptotic cost with the computational cost introduced in \cref{subsec:cost_model}.
However, when $\mat{E}$ takes the form of a tensor network containing a large number of  tensors, the density matrix algorithm is more advantageous in terms of simplicity and efficiency. In particular, the density matrix $\mat{L}=(\mat{EAB})^T\mat{EAB}$ can be easily computed using the existing exact tensor network contraction algorithms, while orthogonalizing $\mat{EA}$ is usually hard when $\mat{E}$ does not have a tree structure. One potential approach for orthogonalizing $\mat{EA}$ involves directly performing orthogonalization on the matrix resulting from the contraction of $\mat{EA}$, but this method is inefficient when $\mat{E}$ is not path-like.

In \cref{subsec:cano_tree}, we review the canonicalization-based algorithm to reduce the mode sizes of tree tensor networks.
We will show in \cref{subsec:cost_analysis} that the cost of the density matrix algorithm is upper-bounded by the canonicalization-based algorithm.
In \cref{subsec:2d}, we provide a review of existing algorithms employed in truncating the MPO-MPS contraction, a common tensor network contraction and a special case of our more general algorithm.

\subsubsection{The canonicalization-based algorithm for truncating tree tensor networks}
\label{subsec:cano_tree}

We review the canonicalization-based algorithm to truncate a tree tensor network~\cite{zhang2020stability}.
We first introduce the canonical form in \cref{def:canonical_form}. For a given matrix $\mat{M}$ that is implicitly represented by a tree tensor network, its canonical form makes the whole tree orthogonal and uses another matrix to store the non-orthogonal part.

 \begin{figure}[!ht]
\centering
\includegraphics[width=.9\textwidth, keepaspectratio]{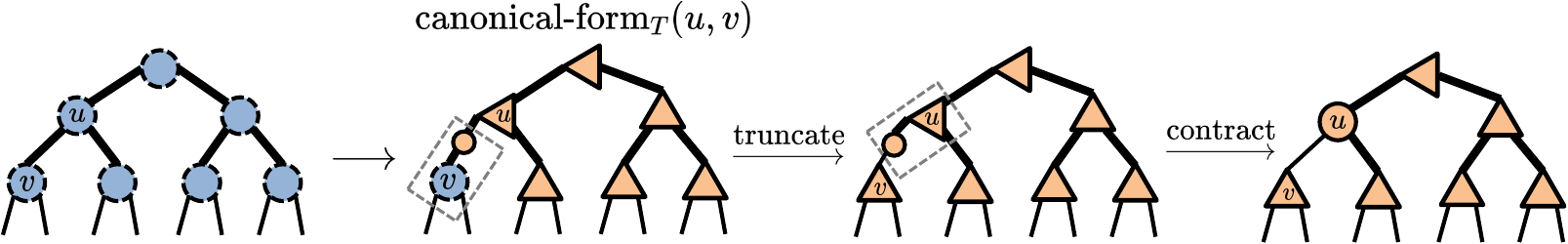}

\caption{
Illustration of truncating the mode represented by the edge $(u,v)$ through canonicalization.
}
\label{fig:canonical}
\end{figure}

\begin{definition}[Canonical form]\label{def:canonical_form}
    Consider a tensor network with a tree structure $T=(V,E)$.
    For a given vertex $u\in V$ and an edge $(u,v)$,
    let $S\subseteq V$ denote the vertices connected to $u$ when the edge $(u,v)$ is removed from $T$.
    $\texttt{canonical\_form}_T(u, v)$ means that all tensors represented by vertices in $S$ are orthogonalized towards the edge $(u,v)$, and a new vertex is added between $u$ and $v$ whose tensor contains the non-orthogonal part. An illustration of $\texttt{canonical\_form}_T(u, v)$ is in \cref{fig:canonical}.
\end{definition}

The canonicalization-based algorithm is shown in \cref{alg:canonize}.
It proceeds by computing the truncated network through a post-order depth-first search (DFS) traversal of the tree structure. At each vertex $v$, the algorithm constructs the canonical form around $v$ while truncating the edge connected to $v$. The resulting orthogonal tensor $\mat{U}_v$ is then computed. This iterative process continues until only the root vertex remains, which contains the comprehensive non-orthogonal information of the entire network.

\begin{algorithm}[!ht]
\setstretch{1.1}
\caption{The canonicalization-based algorithm for truncating a tree tensor network}
\begin{algorithmic}[1]
\STATE{\textbf{Input:} The tree tensor network $T= (V, E)$, the maximum mode size $\chi$, and the root vertex $r$
}
\STATE{$T_r\leftarrow$ a directed tree of $T$ with a root vertex $r$}
\FOR{each $v\in V\setminus \{r\}$ based on a post-order DFS traversal of $T_r$}
\STATE{$u\leftarrow \texttt{parent}(T_r,v)$}
\STATE{Change the tree tensor network to $\texttt{canonical\_form}_T(u,v)$ with the non-orthogonal matrix $\mat{R}_{u}$}
\STATE{$\mat{M}_v\leftarrow$ matricization of the tensor at $v$ with the mode connecting $u,v$ combined into the column}
\STATE{$\mat{U}_v\hat{\mat{R}}_{u}\leftarrow$ rank-$\chi$ approximation of $\mat{M}_v\mat{R}_u$ with $\mat{U}_v$ being orthogonal}
\STATE{Update the tensor at $u$ as $\hat{\mat{R}}_u\mat{M}_u$
}
\ENDFOR
\RETURN the tree tensor network that contains all $\mat{U}_v$ and the root tensor $\mat{M}_r$
\end{algorithmic}
\label{alg:canonize}
\end{algorithm}

\subsubsection{Existing algorithms for
truncating MPO-MPS multiplication}
\label{subsec:2d}

We provide a review of a set of algorithms to truncate the output of MPO-MPS multiplication. These algorithms are widely used in the boundary-based algorithm to approximately contract 2D lattice tensor networks like those surveyed in \cref{subsec:survey_tn_structure}. The boundary-based contraction algorithm initiates the process with a boundary MPS of the 2D network (e.g., the leftmost MPS in \cref{subfig:peps2}). At each step, the adjacent MPO is applied to the MPS and the result is approximated as a low-rank MPS.
The boundary-based contraction algorithm serves as the basis for motivating the proposed  partial contraction tree abstraction and the generalized density matrix algorithm for contracting arbitrary tensor networks.

Previous studies~\cite{paeckel2019time,itensornetwork} have explored various algorithms for MPO-MPS multiplication. These algorithms include approaches based on canonicalization~\cite{stoudenmire2010minimally,paeckel2019time}, the density matrix algorithm~\cite{itensornetwork,fishman2022itensor}, and the iterative fitting algorithm~\cite{verstraete2004renormalization,stoudenmire2010minimally}.
In this work,
we specifically concentrate on the first two types of algorithms. This choice is driven by the fact that both are one-pass algorithms and theoretical error bounds can be derived for the resulting output of both algorithms. The iterative fitting algorithm could have better scaling and lead to better performance in some cases but the use of that approach within our new algorithm is left for future work.

\paragraph{Algorithms that use canonicalization}

We review two different canonicalization-based algorithms: the zip-up algorithm~\cite{stoudenmire2010minimally} and the canonicalization algorithm with full environment~\cite{paeckel2019time}.
The zip-up algorithm uses a smaller environment compared to the other algorithm, which consider all tensors in the input MPO and MPS when performing truncations.
Throughout the analysis we use $r$ to denote the MPS rank, use $a$ to denote the MPO rank, and use $s$ to denote the size of all the other modes. The computational cost comparison between the algorithms is summarized in \cref{tab:compare}.

\begin{table}[!ht]
  \begin{center}
    \renewcommand{\arraystretch}{1.2}
    {
    \begin{tabular}{l|l|l|l}
\hline
       Algorithm  & Asymptotic cost & $s\ll a=\Theta(r)$ & $s= \Theta(a)\ll r$ 
       \\ \hline
      Zip-up  & $\Theta(N(s^2a^2r^2 + sar^3))$ & $\Theta(N(s^2r^4))$ &
      $\Theta(N(s^2r^3))$
      \\ \hline
      Canonicalization w/ full env &
      $\Theta(N(s^2a^2r^2+sa^3r^3))$
      &
      $\Theta(N(sr^6))$
      & $\Theta(N(s^4r^3))$
      \\ \hline
      Density matrix &  $\Theta(N(sa^2r^3+s^2a^3r^2+s^2ar^3))$  &
$\Theta(N(s^2r^5))$  & $\Theta(N(s^3r^3))$
\\ \hline
    \end{tabular}
    }
    \renewcommand{\arraystretch}{1}
  \end{center}
\caption{Comparison of asymptotic algorithmic complexity between the zip-up algorithm, the canonicalization-based algorithm that uses the full environment, and the density matrix algorithm. $s=\Theta(a)$ means $s$ is asymptotically bounded by $a$ both above and below.
}
\label{tab:compare}
\end{table}

 \begin{figure}[htb]
\centering
\includegraphics[width=.85\textwidth, keepaspectratio]{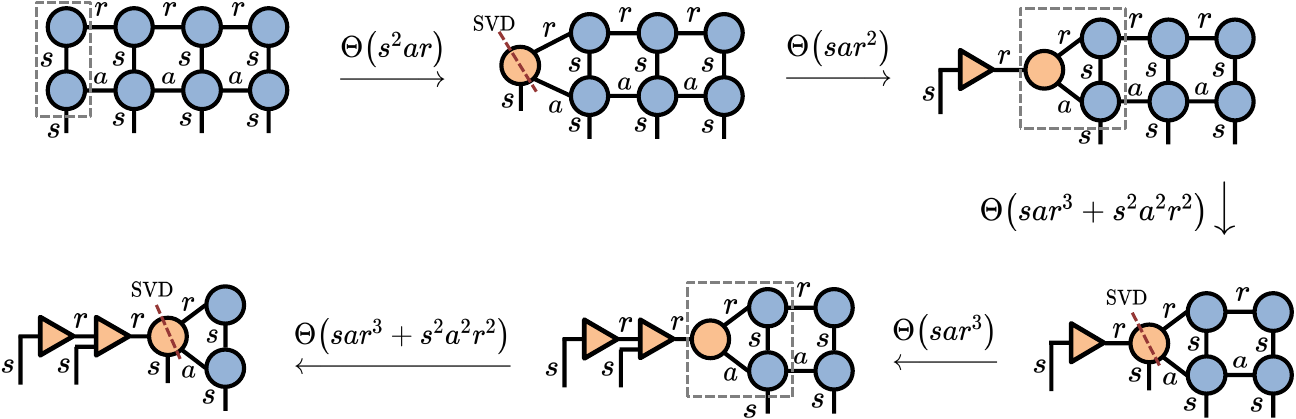}

\caption{
Illustration of the zip-up algorithm. Each dashed block includes the tensors to be contracted at a given step. Each tensor represented by a triangular vertex denotes a tensor with an orthogonality property.
}
\label{fig:zipup}
\end{figure}

The zip-up algorithm~\cite{stoudenmire2010minimally,paeckel2019time} is illustrated in \cref{fig:zipup}.
We also let the output truncated MPS have rank $r$.
The algorithm begins by contracting the leftmost pair of tensors. A truncated singular value decomposition (SVD) is then performed to obtain the left leading singular vectors $\mat{U}_1$ and the remaining non-orthogonal component $\mat{V}_1$. Next, $\mat{V}_1$ is combined with the second leftmost pair of tensors, and another truncated SVD is performed. This process continues until it reaches the right boundary of both the MPO and MPS.
When the resulting MPS has an order of $N$, the algorithm's asymptotic computational cost is $\Theta(N(s^2a^2r^2 + sar^3))$.
It should be noted, as depicted in \cref{fig:zipup}, that the truncation at the $i$th step employs an environment including all $i$ left MPO and MPS tensors, but not the full environment (all tensors in the MPS and MPO).

 \begin{figure}[htb]
\centering
\includegraphics[width=.99\textwidth, keepaspectratio]{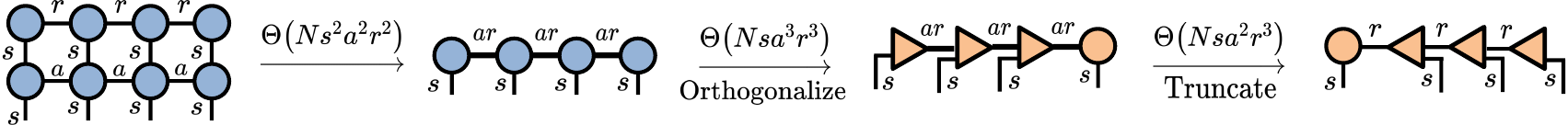}

\caption{
Illustration of the application and truncation algorithm.
}
\label{fig:apply_truncate}
\end{figure}

The canonicalization-based algorithm that uses the full environment is illustrated in \cref{fig:apply_truncate}.
The algorithm first multiplies the MPS and MPO,
resulting in an MPS with a rank of $ar$.
Subsequently, the MPS is truncated via the canonicalization-based algorithm reviewed in \cref{subsec:cano_tree}.
When the output MPS has an order $N$, the algorithm has an asymptotic cost of $\Theta(N(s^2a^2r^2+sa^3r^3))$, which is $O(a^2)$ times the cost of the zip-up algorithm.
However, this algorithm offers better accuracy since each truncation utilizes the full environment. Furthermore, the algorithm maintains a theoretical upper
bound on the truncation error~\cite{oseledets2011tensor}.

 \begin{figure}[htb]
\centering

\subfloat[The first step]{
\includegraphics[width=.85\textwidth, keepaspectratio]
{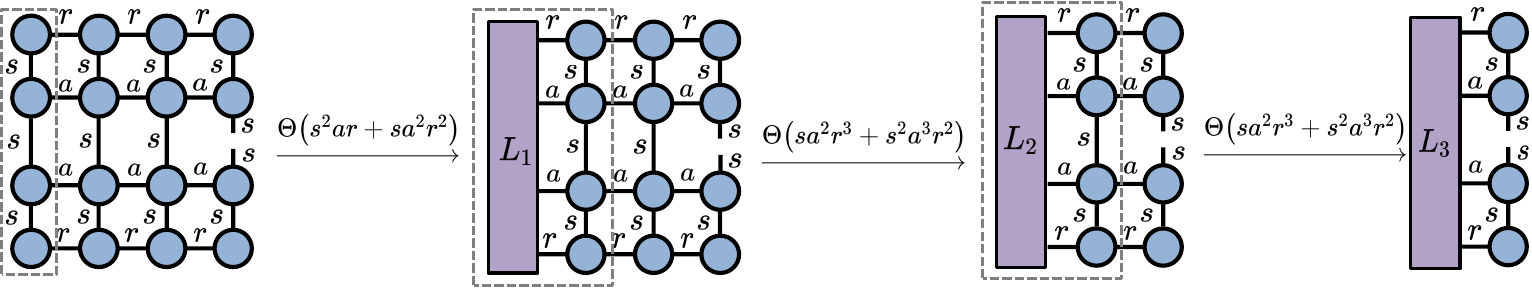}
\label{subfig:dm_step1}}

\subfloat[The second step]{
\includegraphics[width=.85\textwidth, keepaspectratio]
{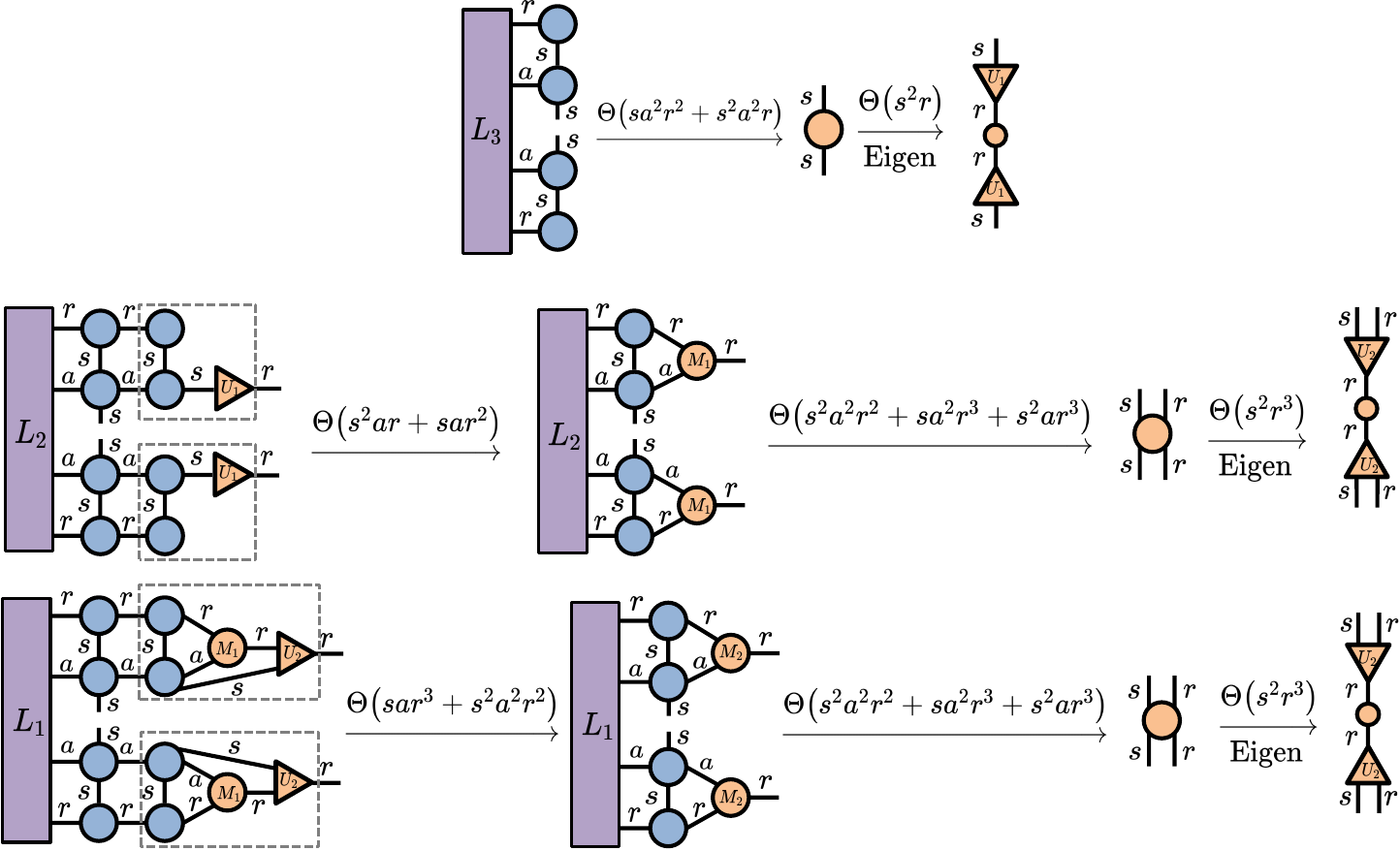}
\label{subfig:dm_step2}}

\subfloat[The third step]{
\includegraphics[width=.85\textwidth, keepaspectratio]
{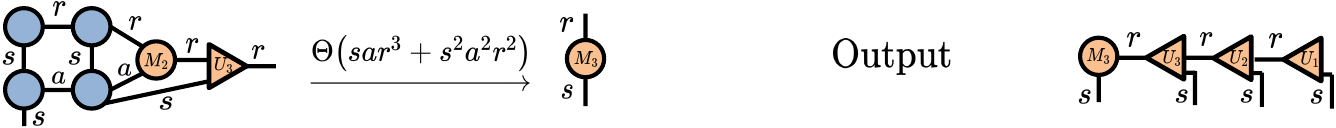}
\label{subfig:dm_step3}}

\caption{
Illustration of the density matrix algorithm. Each triangular vertex represents a tensor with an orthogonal property.
}
\label{fig:density_matrix}
\end{figure}

\paragraph{The density matrix algorithm}

The density matrix algorithm produces an equivalent truncated MPS as the application and truncation algorithm, and we illustrate the algorithm in \cref{fig:density_matrix}. The algorithm contains three steps,
\begin{enumerate}
    \item Computing matrices $\mat{L}_i$, as is shown in \cref{subfig:dm_step1}. These matrices are
computed by sequentially contracting the network from left to right, and intermediates $\mat{L}_i$ are saved during the contractions.
    \item Performing a sweep of contractions from right to left and use $\mat{L}_i$ to compute all the leading singular vectors $\mat{U}_i$ for $i\in \{1,\ldots, N-1\}$.
    Specifically, $\mat{L}_N$ is firstly used to compute the density matrix with the last pair of uncontracted modes left open, and truncated
eigendecomposition is performed on the density matrix to yield the leading singular vectors $\mat{U}_1$.
Next, the intermediates $\mat{L}_{N-1}$ is utilized to compute the density matrix with the right two uncontracted modes left open. Additionally, the basis of this density matrix is transformed by applying $\mat{U}_1$, as shown in \cref{subfig:dm_step2}. This process is repeated until $N-1$ tensors $\mat{U}_i$ are obtained.
    \item Getting the leftmost matrix $\mat{M}_N$ that encompasses all the non-orthogonal information through the contraction depicted in  \cref{subfig:dm_step3}, and form the output MPS by combining all $\mat{U}_i$ and $\mat{M}_N$.
\end{enumerate}

When the output MPS has an order $N$, the density matrix algorithm has an asymptotic cost of $\Theta(N(sa^2r^3+s^2a^3r^2+s^2ar^3))$. In applications arising in statistical physics and quantum computing, the size $s$ is commonly the smallest. As is shown in \cref{tab:compare}, for the case where $s\ll a=\Theta(r)$, the cost of density matrix algorithm is $\Theta(s^2r^5)$, which is $\Theta(r/s)$ better than the canonicalization with full environment algorithm.
For the other case, where $s=\Theta(a)\ll r$, the cost of the density matrix algorithm is $\Theta(s^3r^3)$, which is $\Theta(s)$ better than the canonicalization with full environment algorithm.

\paragraph{Automation and generalization of the MPO-MPS multiplication algorithms}

There is an opportunity to generalize the MPO-MPS multiplication algorithms to arbitrary graphs. In particular, SweepContractor~\cite{chubb2021general}
generalizes the MPO-MPS zip-up algorithm, and uses a subroutine that contracts a single tensor with an MPS into a
new MPS to contract arbitrary tensor networks.
In contrast, our proposed algorithm includes a subroutine that contracts a general tensor network (such as an MPO) with a binary tree tensor network into a binary tree network, allowing the generalizing of all three MPO-MPS multiplication algorithms.

The analysis and observations above suggest that the density matrix algorithm has
greater efficiency compared to the canonicalization-based algorithm. As a result, we generalize the density matrix algorithm for the MPO-MPS multiplication and implement one that is able to approximate a general tensor network into a  tree tensor network.
Generalization of the
density matrix algorithm to trees presents two challenges. Firstly, determining how to efficiently perform memoization (accelerating the contraction by caching partial contraction results) to reduce costs becomes less straightforward. In order to address this issue, we have introduced a strategy that utilizes graph partitioning in \cref{sec:dm}. Secondly, selecting an appropriate output tree structure that enhances the efficiency of the approximation poses a challenge. For the MPO-MPS multiplication, it is evident that the MPS ordering consistent with the input MPS and MPO would yield favorable results.  In \cref{sec:binary_tree}, we propose algorithms to select efficient tree structures for general graphs.

\section{The proposed tensor network contraction algorithm}
\label{sec:method}

In this section, we present the proposed approximate tensor network contraction algorithm.

\subsection{Definitions}\label{subsec:Def}

We use $G[S]=(S,E_S)$ to denote a sub tensor network defined on $S\subseteq V$, where $E_S$ contains all edges in $E$ adjacent to any $v\in S$.
For two disjoint subsets of $V$ denoted as $X,Y$, we let $E(X,Y)$ denote the set of edges connecting $X,Y$.
We let $E(X)$ denote the set of uncontracted edges of $G[X]$.

For the tensor network represented by $G = (V, E)$, we use $\mathcal{V} = \{V_1,\ldots,V_N\}$ to denote a  graph partitioning that partitions $V$ into $V_1,\ldots,V_N$.
    A \textit{contraction tree} of the partitioned network is a directed binary tree showing how vertex subsets in $\mathcal{V}$ are contracted, and it is denoted $T^{(\mathcal{V})}$. Each leaf of $T^{(\mathcal{V})}$ is a vertex subset in $\mathcal{V}$, and each non-leaf vertex in $T^{(\mathcal{V})}$ can be represented by a subset of the vertices, $W_1\cup W_2$, where its two children are represented by $W_1$ and $W_2$,  respectively. 

    For a given vertex in the contraction tree $T^{(\Set{V})}$ that is represented by $V'\subset V$, $\pth(T^{(\Set{V})}, V')$ denotes a sub-contraction path of $T^{(\Set{V})}$.
This sub-contraction path is a subgraph of $T^{(\Set{V})}$ that contains all vertices in $T^{(\mathcal{V})}$ that are ancestors of $V'$  as well as the children of these ancestors.
To illustrate, we provide an example of the sub-contraction path of $V_4$ in \cref{fig:path}.

\begin{figure}[htb]
\centering
\includegraphics[width=.55\textwidth, keepaspectratio]{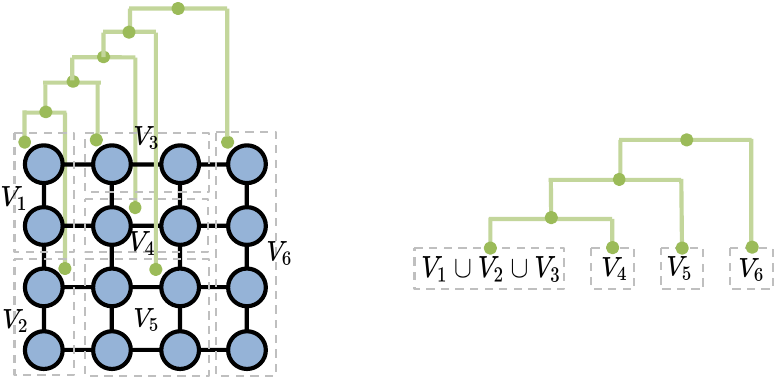}

\caption{
Illustration of the sub-contraction path. The left diagram denotes the graph partitioning and the contraction tree $T^{(\Set{V})}$, and the right diagram denotes the sub-contraction path
$\pth(T^{(\Set{V})}, V_4)$.
}
\label{fig:path}
\end{figure}

\subsection{An overview of the algorithm}\label{subsec:ovw}

In this section we present an overview of the algorithm. The algorithm takes as inputs a tensor network, a partitioning of that tensor network, a partial contraction tree of the partitioned tensor network, an ansatz for the structure of intermediate network contractions (for example, an MPS or a comb tree), and parameters for performing intermediate approximate tensor network contractions which tune the level of accuracy of the method. The algorithm proceeds by traversing the partial contraction tree and approximately contracting pairs of tensor network partitions specified by the contraction tree. The pair of tensor network partitions are approximately contracted using a specified algorithm such as the density matrix algorithm, resulting in a tensor network with the specified structure, such as an MPS or comb tree structure. The algorithm proceeds until all partitions are contracted. An example of the algorithm is shown in \cref{fig:ovw}.

\begin{algorithm}[!ht]
\setstretch{1.1}
\caption{\texttt{partitioned\_contract}: approximate tensor network contraction based on its given partition}
\begin{algorithmic}[1]
\STATE{\textbf{Input:} The tensor network $\tsr{T}$ with graph $G= (V, E)$, its partition $\Set{V}=\{V_1,\ldots, V_N\}$ and its contraction path $T^{(\Set{V})}$,
 ansatz $A$, maximum bond size $\chi$, and swap batch size $r$ \comment{The ansatz $A$ can be either ``MPS" or ``Comb"}
}
\STATE{$\tn\leftarrow$ a mapping that maps each vertex set to its approximated tensor network}
\STATE{$\Set{E}\leftarrow\left\{E(V_i,V_j): i,j\in \{1,\ldots,N\}\right\}$ \comment{The set where each element is an edge subset connecting two different partitions}
\label{line:edge_partitions}}
\STATE{\comment{Lines \ref{line:forloop_ansatz}-\ref{line:endforloop_ansatz}: construct edge/edgeset linear orderings that define the embedding tree}}
\STATE{$\left\{\sigma^{(E')}: E'\in \Set{E}\right\}\leftarrow $  selecting an ordering for each edgeset in $\Set{E}$ via recursive bisection\label{line:forloop_ansatz}}
\FOR{each contraction $(U_s,W_s)\in T^{(\Set{V})}$}
\STATE{$\Set{E}_s\leftarrow$ The subset of $\mathcal{E}$ that is adjacent to the sub tensor network with vertices $U_s\cup W_s$}
\STATE{
$ \sigma^{(\Set{E}_s)}
\leftarrow \texttt{embedding\_tree\_ordering}\left(G[U_s\cup W_s], \pth\left(T^{\Set{V}}, U_s\cup W_s\right), \Set{E}_s\right)$
  \comment{Select an ordering for edgesets in $\Set{E}_s$}}
\ENDFOR\label{line:endforloop_ansatz}
\FOR{each contraction $(U_s,W_s)\in T^{(\Set{V})}$\label{line:forloop_approx}}
\STATE{$\tn(U_s\cup W_s) \leftarrow \texttt{approx\_tensor\_network}\left(\tn(U_s)\cup \tn(V_s), \sigma^{(\Set{E}_s)}, \{\sigma^{(E')}: E'\in \Set{E}_s\}, \chi, r, A\right)$
\comment{Approximate the input tensor network $\tn(U_s)\cup \tn(V_s)$ as a binary tree tensor network $\tsr{X}_s$, \cref{alg:approx_tensor_network}}
}
\ENDFOR\label{line:endforloop_approx}
\RETURN the final approximated tensor network, $\tn(V)$
\end{algorithmic}
\label{alg:ovw}
\end{algorithm}

Pseudocode providing more details of steps of the algorithm is presented in \cref{alg:ovw}. Definitions of all notations are summarized in \cref{tab:notations}.
The algorithm takes as input the tensor network partition $\mathcal{V}$ and its contraction path $T^{(\mathcal{V})}$. During each contraction step along the path, all tensors within the input partitions are treated as the environment. Consequently, larger partitions typically lead to higher approximation accuracy, but at the cost of increased computational complexity.

When two tensor network partitions are contracted, an embedding tree is first constructed which specifies the structure of the network that will result from the contraction. The embedding tree is a full binary tree where each leaf vertex is associated with a dangling edge/mode of the subnetwork made from composing the two partitions that are being contracted. Furthermore, each non-leaf vertex in the embedding tree corresponds to a tensor within the resulting binary tree tensor network. All tensors within this network have an order of three, except for the tensor located at the root vertex.
An example of such an embedding tree is illustrated in the second left diagram of \cref{fig:pipeline}.

The selection of the embedding tree is guided by an analysis of the structure of the input tensor network graph $G$, its partitioning, and the contraction path. This analysis aims to identify a tree structure that optimizes the efficiency of both the current contraction and any subsequent contractions involving the contracted output. The determination of each embedding tree structure occurs in lines \ref{line:forloop_ansatz}-\ref{line:endforloop_ansatz}.
Note that the generation of the embedding tree only depends on the tensor network graph structure, rather than the actual tensor data.
The relationship between the embedding tree and the orderings of the edges is further explained in \cref{subsec:embed_tree}.

While our method introduces a novel perspective on the construction of the embedding tree, it is worth noting that alternative approaches have been proposed for determining tree structures based on various other heuristics. For instance, Seitz et al. \cite{seitz2023simulating} propose a method specifically for determining a tree structure based on a quantum circuit. Other studies, such as those by Nakatani and Chan \cite{nakatani2013efficient}, Murg et al. \cite{murg2015tree}, Szalay et al. \cite{szalay2015tensor}, and Ferrari et al. \cite{ferrari2022adaptive}, explore different heuristics for tree construction, leveraging factors like entanglement or interaction strength.

After selecting an embedding tree, we proceed to embed the tensor network comprising two partitions into the embedding tree and truncate it to ensure that the maximum bond size remains below $\chi$. This process is performed in lines \ref{line:forloop_approx}-\ref{line:endforloop_approx}.
In-depth explanations of the hybrid algorithm, which combines the density matrix algorithm and the swap-based algorithm to obtain the approximated binary tree tensor network, can be found in  \cref{sec:dm} and \cref{sec:swap}. This hybrid algorithm involves multiple iterations of the density matrix algorithm, each progressively modifying the structure of the tensor network to a degree controlled by the swap batch size $r$. The choice of $r$ allows the user to find a balance between accuracy and computational cost for specific problem instances.

\subsection{Determination of the embedding tree}\label{subsec:embed_tree}

We explain the embedding tree structure used in \cref{alg:ovw}. As is defined in \cref{subsec:ovw}, an embedding tree is a rooted full binary tree, with each leaf vertex representing an uncontracted edge in the tensor network.

Let
$\mathcal{E}=\left\{E(V_i,V_j): i,j\in \{1,\ldots,N\}\right\}$, so that each element in $\mathcal{E}$ is an edge subset connecting two different partitions. For a specific contraction $(U_s,W_s)$ defined in \cref{tab:notations}, we let $\mathcal{E}_s$ be the subset of $\mathcal{E}$ that is adjacent to the tensor network represented by $U_s\cup W_s$.
We design the embedding tree structure so  that the leaves that represent each $E_i\in \mathcal{E}_s$ are in close proximity to one another. This arrangement is advantageous because all edges within each $E_i$ are always contracted together in the same contraction. Placing them close to each other simplifies the contraction process and eliminates the need for unnecessary permutation of modes.

Two structures we use for the embedding tree are the MPS (maximally-unbalanced full binary tree) and the comb~\cite{bauernfeind2017fork,chepiga2019comb}.
The comb tensor network is a tree tensor network arranged in a linear chain with branches.
Both structures are based on a linear orderings $\sigma^{(\mathcal{E}_s)}$ for $\mathcal{E}_s$
and linear orderings $\sigma^{(E')}$ for $E'\in \mathcal{E}_s$, and they are generated in lines~\ref{line:forloop_ansatz}-\ref{line:endforloop_ansatz} of \cref{alg:ovw}.
We formally define the embedding tree with an MPS and a comb structure in \cref{sec:appendix_tree_def}.
We visualize both the embedding tree with an MPS structure and with a comb structure in \cref{fig:ansatz}.

 \begin{figure}[!ht]
\centering

\subfloat[MPS tree]{
\hspace{3mm}
\includegraphics[width=.17\textwidth, keepaspectratio]
{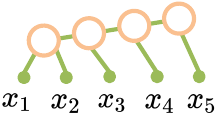}\hspace{1mm}\label{contract_subfig:mps_tree}}

\subfloat[Embedding tree with an MPS structure]{
\hspace{6mm}
\includegraphics[width=.36\textwidth, keepaspectratio]
{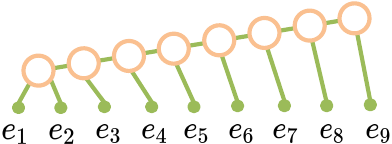}\hspace{6mm}\label{contract_subfig:mps}}
\subfloat[Embedding tree with a comb structure]{\hspace{8mm}\includegraphics[width=.33\textwidth, keepaspectratio]{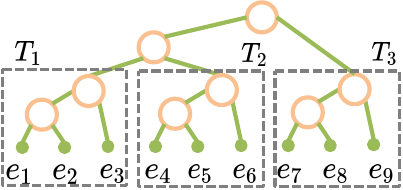}
\hspace{6mm}
\label{subfig:comb}}

\caption{(a) Visualization of the MPS tree defined on $\sigma^S$ with $\sigma^S(x_i) = i$. (b)(c) Visualization of the embedding tree with an MPS structure and a comb structure.
The input orderings are $\sigma^{(\mathcal{E}_i)} = (E_1, E_2, E_3)$ with
$E_1 = \{e_1, e_2, e_3\}$,
$E_2 = \{e_4, e_5, e_6\}$, and
$E_3 = \{e_7, e_8, e_9\}$. $\sigma^{(E_1)},\sigma^{(E_2)},\sigma^{(E_3)}$ are defined so that in $\hat{\sigma} = \sigma^{(E_1)}\oplus\sigma^{(E_2)}\oplus\sigma^{(E_3)}$, $\hat{\sigma}(e_i) = i$.
}
\label{fig:ansatz}
\end{figure}

In comparison to the MPS structure, the comb structure has a smaller diameter, representing the maximum distance between any two vertices. However, the comb structure also has a larger maximum tensor size of $\Theta(\chi^3)$, where $\chi$ is the maximum bond size. This is larger than the maximum tensor size of the MPS structure, which is $\Theta(s\chi^2)$, where $s$ represents the uncontracted mode size and is typically much smaller than $\chi$. In \cref{sec:exp}, we conduct experimental comparisons between the performance of the MPS structure and the comb structure.

Various heuristics can be used to obtain the linear ordering $\sigma^{(E')}$ for each $E' \in \mathcal{E}$. In this work, we utilize the recursive bisection algorithm described in \cref{subsec:recur_bisection}, on a partition of the input graph $G$ that is connected to $E'$. The recursive bisection algorithm is a heuristic that aims to minimize congestion in the linear ordering. By applying this algorithm, we obtain an ordering that results in the embedding tree tensor network having low ranks.
The algorithm for selecting the ordering $\sigma^{(\mathcal{E}_s)}$ is explained in detail in \cref{sec:binary_tree}.

\section{The algorithm to select the edge subset ordering of the embedding tree}\label{sec:binary_tree}

For a given contraction $(U_s,W_s)$,
we detail the algorithm to select the linear ordering $\sigma^{(\mathcal{E}_s)}$ for the intermediate tensor network $G_s = (V_s, E_s)$, where $V_s=U_s\cup W_s$. $\sigma^{(\mathcal{E}_s)}$ is generated based on both $G_s$ and the sub-contraction path $T=\pth\left(T^{(\mathcal{V})}, V_s\right)$ defined at \cref{subsec:Def}, where $T^{(\mathcal{V})}$ is the contraction tree over the partition $\mathcal{V}$.

The  ordering $\sigma^{(\mathcal{E}_s)}$ is chosen with two objectives.
Firstly, it is designed to satisfy a specific adjacency relation that greatly facilitates efficient subsequent contractions. This adjacency relation ensures that for each of the subsequent contractions $(U_k,W_k)$, the contracted edges between $U_k$ and $W_k$ are adjacent in both input tensor networks $\texttt{tn}(U_k)$ and $\texttt{tn}(V_k)$. The adjacency of these contracted edges results in a lower cost for the contraction, compared to the scenario where the contracted edges are not adjacent.
This adjacency relation is described by the \textit{constraint tree} for $\mathcal{E}_s$, $T^{(\mathcal{E}_s)}$. Each leaf vertex in the constraint tree represents an edge set in $\mathcal{E}_s$, and each non-leaf vertex has at least 2 children and indicates the edge subsets represented by the
children are adjacent. 
Each non-leaf vertex also denotes whether the children's vertices are ordered or not.
We show an example of the constraint tree in the bottom right diagram of \cref{fig:adjacency_tree_example}.
In \cref{subsec:adjacency}, a detailed explanation is provided on how to select the constraint tree.

Secondly, the resulting binary tree structure should be similar to the tensor network $G_s$ in order to keep the ranks of the resulting tree tensor network low.
In \cref{subsec:cost-efficient_order}, we detail the algorithm to find the ordering not only consistent with the constraint tree, but also to minimize the cost of permutation (Kendall-Tau distance between the chosen ordering and another reference ordering whose corresponding line structure is similar to $G_s$).

\subsection{Determination of the edge set ordering based on the constraint tree}\label{subsec:cost-efficient_order}

We provide an explanation of the algorithm that determines the ordering for the set of elements $\mathcal{E}_s$, denoted as $\sigma^{(\mathcal{E}_s)}$. This ordering is not only constrained by the constraint tree  $T^{(\mathcal{E}_s)}$ but also aims to reflect the structure of the input graph $G_s$. The algorithm is presented in \cref{alg:min_kdt}.
To begin with, in Line~\ref{line:reference_order}, we generate a reference ordering denoted as $\tau$ for the set of elements $\mathcal{E}_s$. This reference ordering is generated using recursive bisection  and represents a linear structure that  is close to the structure of $G_s$.
Subsequently, the algorithm proceeds to construct the output ordering by employing a post-order DFS traversal of the constraint tree $T^{(\mathcal{E}_s)}$. This traversal strategy ensures that the ordering takes into account the constraints imposed by the tree structure. In \cref{sec:order_optimality}, we prove that the output ordering of \cref{alg:kdt} minimizes the Kendall-Tau distance with the reference ordering under the adjacency constraint.

\begin{algorithm}[!ht]
\caption{\texttt{linear\_ordering\_under\_constraint\_tree}: Algorithm to get the edge set ordering that minimizes the Kendall-Tau distance with the reference ordering under the adjacency constraint}
\begin{algorithmic}[1]\label{alg:kdt}
\STATE{\textbf{Input:} the edge set $\Set{E}_s$, the constraint tree $T^{(\Set{E}_s)}$, the tensor network graph $G_s=(V_s,E_s)$} 
\STATE{$\tau \leftarrow \texttt{linear\_ordering}\left(\Set{E}_s, G_s\right)$ \comment{Ordering generated via recursive bisection}\label{line:reference_order}
}
\STATE{$f\leftarrow$ a mapping that maps each vertex in $T^{(\Set{E}_s)}$ to its edge set ordering}
\FOR{each leaf vertex $v$ that represents $E_i$ in $T^{(\Set{E}_s)}$}
\STATE{$f(v)\leftarrow$ the ordering that contains the single edge set $E_i$}
\ENDFOR
\FOR{each non-leaf vertex $v$ that represents $\hat{\Set{E}}_i$ based on a post-order DFS traversal of  $T^{(\Set{E}_s)}$}
\STATE{$u_1,\ldots,u_{n_v}\leftarrow $ children of $v$}
\IF{$v$ is labeled as \textit{ordered}}
\STATE{$\sigma_1\leftarrow f(u_1)\oplus f(u_2)\oplus \cdots \oplus f(u_{n_v})$ \comment{Concatenate all $f(u_i)$ in order}}
\STATE{$\Set{S}\leftarrow \left\{\sigma_1,\texttt{reverse}(\sigma_1)\right\}$}
\ELSE
\STATE{$\Set{S}\leftarrow$ a set of all permutations of $\left\{f(u_1), f(u_2), \ldots, f(u_{n_v})\right\}$}
\ENDIF
\STATE{$\tau_v\leftarrow$ a partial ordering of $\tau$ over the subset $\hat{\Set{E}}_i$\label{line:tauv}}
\STATE{$f(v)\leftarrow \arg \min_{{\sigma}\in \Set{S}}\ktdist{{\sigma}, \tau_v}$ \label{line:min_kdt} \comment{Minimize the Kendall-Tau distance defiend at \cref{defn:ktdist}}}
\ENDFOR
\RETURN $f\left(\texttt{root}\left(T^{(\Set{E}_s)}\right)\right)$
\end{algorithmic}
\label{alg:min_kdt}
\end{algorithm}

\section{The density matrix algorithm for tree approximations}\label{sec:dm}

We present a density matrix algorithm to approximate an arbitrary tensor network into a tree tensor network.
The standard approach involves embedding the input tensor network into an embedding tree and explicitly forming the untruncated tree tensor network,
then truncating the resulting tree tensor network using the canonicalization-based algorithm. However, this can lead to tree tensor networks with large ranks, resulting in expensive canonicalization and low-rank approximation processes.

Our proposed density matrix algorithm builds upon the density matrix algorithm originally designed for MPO-MPS multiplication, which is discussed in~\cref{subsec:2d}. Given a tree embedding of the input tensor network, our algorithm eliminates the need to explicitly construct the untruncated tree tensor network. It offers the advantage of forming a low-rank tree tensor network without requiring the generation of large intermediate tensors.
Specifically, we show in \cref{subsec:cost_analysis} that the asymptotic computational cost of the algorithm is upper-bounded by the cost of the canonicalization-based algorithm, and we show in \cref{sec:exp} that for many input tensor networks, the proposed algorithm substantially reduces the overall execution time.

\subsection{Definitions}

Within the algorithm, we use $\texttt{density\_matrix}_{T}\left(v\right)$  and $\texttt{density\_matrix}_{T}\left(v, z\right)$ introduced in \cref{def:dm}.
For a given embedding tree $T=(V_T,E_T)$ with each vertex in $T$ representing a partition of the tensor network embedded to that vertex,
we use the notation
 $\texttt{density\_matrix}_{T}\left(v\right)$
to  calculate the density matrix of vertex $v$ on top of the embedding tree $T$, with the open edges of the matrix being the uncontracted edges incident to $v$.
Moreover, $\texttt{density\_matrix}_{T}\left(v,z\right)$
 calculates the density matrix of vertex $v$ with the open edge of the matrix being $E_T(v,z)$. We show an illustration in \cref{fig:dm_cache}.

 \begin{figure}[!ht]
\centering
\subfloat[$\texttt{density\_matrix}_{T}\left(v, z\right)$]{
\includegraphics[width=.45\textwidth, keepaspectratio]
{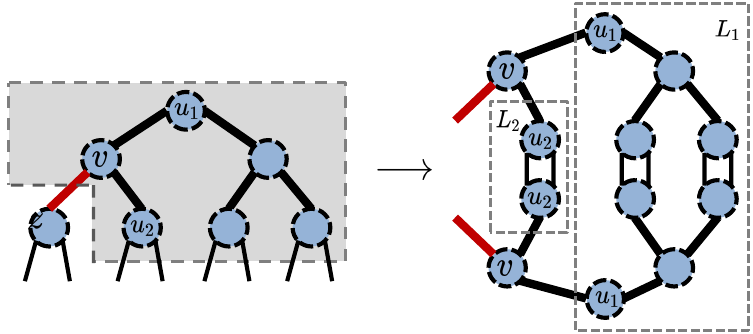}\hspace{5mm}\label{subfig:dm_1}}
\subfloat[$\texttt{density\_matrix}_{T}\left(z\right)$]{\hspace{5mm}\includegraphics[width=.45\textwidth, keepaspectratio]{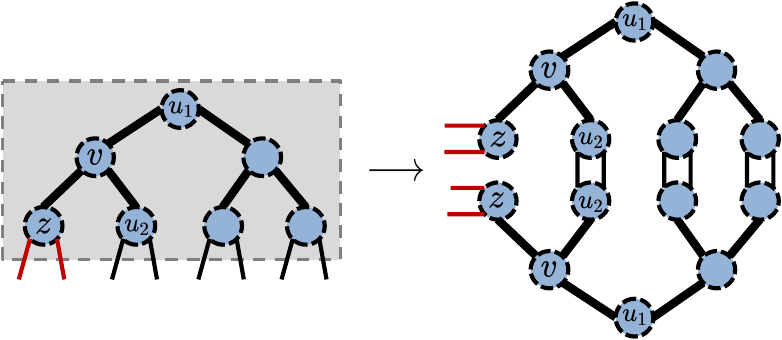}
\label{subfig:dm_2}}

\caption{
Visualization of $\texttt{density\_matrix}_{T}\left(v, z\right)$ and $\texttt{density\_matrix}_{T}\left(z\right)$.
In the left diagrams of (a)(b), the tree structure is the embedding tree $T$, and each vertex represents a partition of the network embedded in that vertex. The open edge of the density matrix is marked in red. The dashed boxes denote the tensor networks squared in the density matrices ($T(S)$ in \cref{def:dm}). The right diagrams visualizes the density matrices. In (a), $L_1 = \texttt{density\_matrix}_{T}\left(u_1, v\right)$ and $L_2 = \texttt{density\_matrix}_{T}\left(u_2, v\right)$ can be cached and reused when computing the density matrix.
}
\label{fig:dm_cache}
\end{figure}

\begin{definition}[Density matrix]\label{def:dm}
    Consider a given embedding tree $T=(V_T,E_T)$ with each $T(v)$ for $v\in V_T$ representing a sub tensor network, and let $T(S) = \cup_{v\in S}T(v)$.
    For a given vertex $v\in V_T$, and a set of edges $\tilde{E}_v\subset E_T$ that is adjacent to $v$, let $S\subseteq V_T$ denote the vertices connected to $v$ when $\tilde{E}_v$ is removed from $T$.
    Let $\mat{T}_{(\tilde{E}_v)}$ denote the matricization of the tensor network $T(S)$ with all modes defined by $\tilde{E}_v$ are combined into the matrix row.
    Then the density matrix defined on $T, v, \tilde{E}_v$,
    denoted as $\texttt{density\_matrix}_{T}\left(v, \tilde{E}_v\right)$,
    equals $\mat{T}_{(\tilde{E}_v)}\mat{T}^{T}_{(\tilde{E}_v)}$.
    For simplicity, we let $\texttt{density\_matrix}_{T}\left(v\right)$ denote the density matrix of $v$ when $\tilde{E}_v=E_T(v,*)$ is the uncontracted edge set incident on $v$, and we let $\texttt{density\_matrix}_{T}\left(v, u\right)$ denote the density matrix of $v$ when $\tilde{E}_v = E_T(u, v)$.
\end{definition}

For a tensor network $G=(V,E)$,
we also define
$\cut_G(X,Y) = \sum_{e\in E(X,Y)}w(e)$, where $E(X,Y)$ denotes the set of edges connecting two disjoint vertex subsets $X,Y$.
For two vertices $u,v\in V$, we define the minimum cut between $u,v$ in $G$ as
\[\mincut_G(u,v) = \min_{\substack{A,B\subset V \\ u\in A, v\in B}}\cut_G(A,B).\]
Let $E_1,E_2$ be the two different subsets of the uncontracted edges of $G$, we define $\mincut_G(E_1,E_2)$ as the mincut between two new vertices $a,b$ on the graph that contains both $G$ and $a,b$, where $a,b$ are adjacent to $E_1,E_2$, respectively.

\subsection{The density matrix algorithm}

The density matrix algorithm is summarized in \cref{alg:dm}.
The algorithm  involves computing the output network by performing a post-order DFS traversal of the embedding tree.
During the traversal, at each vertex $v$, the corresponding tensor $\mat{U}_v$ is computed. Subsequently, vertex $v$ is removed from the embedding tree. This process continues iteratively until only the root vertex remains, whose tensor encapsulates all the non-orthogonal information of the network.
A visualization of the algorithm is shown in \cref{fig:dm}.

In \cref{alg:dm}, we initially construct an embedding $\phi$ utilizing the recursive bisection technique outlined in \cref{alg:embed_tree}.
This embedding assigns a tensor network partition to each vertex in the embedding tree and serves as a guide for the memoization strategy.
As is reviewed in \cref{subsec:recur_bisection}, recursive bisection is a standard heuristic to find embeddings with low congestion.
It is worth noting that \cref{alg:embed_tree} may produce an embedding in which there exists a vertex in the embedding tree whose corresponding tensor network partition is empty. In such cases, we can address this problem by introducing identity matrices into the input graph. This adjustment ensures that the resulting tensor network remains equivalent while guaranteeing the non-emptiness of each partition.

For computing $\mat{U}_v$ at each vertex $v \in V_T$,
\cref{alg:dm} incorporates two subroutines that handle two distinct cases efficiently.
In the algorithm, we let
$\mat{M}_v$ denote the matricized contraction output of the partition at $v$, $T(v)$, that  combines all uncontracted modes  into the matrix row. In addition, let
$\mat{L}_v= \texttt{density\_matrix}_{T'}( v)$ and $\mat{L}_u= \texttt{density\_matrix}_{T'}(u, v)$.

Since $\mat{L}_v = \mat{M}_v\mat{L}_u\mat{M}_v^T$,
if the number of rows in $\mat{L}_v$ is smaller than the number of rows in $\mat{L}_u$,  in Lines~\ref{line:begin_dm}-\ref{line:end_dm} we compute $\mat{L}_v$ then obtain its singular vectors, which is
the most efficient approach.
Conversely, if the number of rows in $\mat{L}_v$ exceeds the number of rows in $\mat{L}_u$, it implies that $\mat{L}_v$ is not full rank.
In such cases, we use an subroutine called QR-SVD~\cite{pang2020efficient} instead in Lines~\ref{line:begin_qr_svd}-\ref{line:end_qr_svd}.
we first use QR factorization to orthogonalize $\mat{M}_v$ and yield $\mat{Q}_v\mat{R}_v$, and subsequently calculate the leading singular vectors of $\mat{R}_v\mat{L}_u\mat{R}_v^T$, which yields an implicit representation of the singular vectors of $\mat{L}_v$.
QR-SVD avoids the generation of the large density matrix $\mat{L}_v$, thus having a better asymptotic cost.
In \cref{subsec:cost_analysis}, we demonstrate that \cref{alg:dm} provides a guarantee that its asymptotic computational cost remains upper-bounded by that of the canonicalization-based algorithm.

\begin{algorithm}[!ht]
\setstretch{1.1}
\caption{\texttt{density\_matrix\_alg}: The density matrix algorithm for tree approximation}
\begin{algorithmic}[1]
\STATE{\textbf{Input:} The tensor network $G= (V, E)$,  its embedding tree $T=(V_T,E_T)$,
and maximum bond size $\chi$
}
\STATE{$\phi\leftarrow \texttt{tree\_embedding}(G,T)$ \comment{Constructed based on \cref{alg:embed_tree}}}
\STATE{$r\leftarrow$ root vertex in $T$}
\STATE{$T'\leftarrow$ a tree with the same structure as $T$ and $T'(v)$ for $v\in V_T$ denotes all tensors embedded to $v$ in $\phi$}
\FOR{each $v\in V_T\setminus \{r\}$ based on a post-order DFS traversal of $T$}
\STATE{$A_v\leftarrow \texttt{uncontracted\_edges}(T',v)$}
\STATE{$B_v\leftarrow \texttt{contracted\_edges}(T',v)$}
\STATE{$u\leftarrow \texttt{parent}(T',v)$}
\IF{$w(A_v)=O(w(B_v))$}
    \STATE{$\mat{L}_v\leftarrow \texttt{density\_matrix}_{T'}( v)$ \comment{Defined in \cref{def:dm}}\label{line:begin_dm}}
    \STATE{$\mat{U}_v\leftarrow \texttt{leading\_eigenvectors}(\mat{L}_v, \chi)$\label{line:end_dm}}
\ELSE
    \STATE{\comment{Perform QR-SVD~\cite{pang2020efficient} to reduce the asymptotic cost}\label{line:begin_qr_svd}}
    \STATE{$\mat{L}_u\leftarrow \texttt{density\_matrix}_{T'}(u, v)$ \label{line:dm_2}}
    \STATE{$\mat{M}_v\leftarrow$ the matricized contraction output of $T(v)$ with $A_v$ combined into row}
    \STATE{$\mat{Q}_v,\mat{R}_v\leftarrow \texttt{QR}(\mat{M}_v)$}
    \STATE{$\hat{\mat{U}}_v\leftarrow \texttt{leading\_singular\_vectors}(\mat{R}_v\mat{L}_u\mat{R}_v^T, \chi)$ }
    \STATE{$\mat{U}_v\leftarrow \mat{Q}_v\hat{\mat{U}}_v$\label{line:end_qr_svd}}
\ENDIF
\STATE{Add both $T'(v)$ and a vertex that represents $\mat{U}^T_v$ to $T'(u)$, and remove $v$ from  $T'$}
\ENDFOR
\STATE{$\mat{M}_r\leftarrow $ contraction output of $T(r)$}
\RETURN the tree tensor network that contains all $\mat{U}_v$ and the root tensor $\mat{M}_r$
\end{algorithmic}
\label{alg:dm}
\end{algorithm}

 \begin{figure}[!ht]
\centering
\includegraphics[width=1.\textwidth, keepaspectratio]{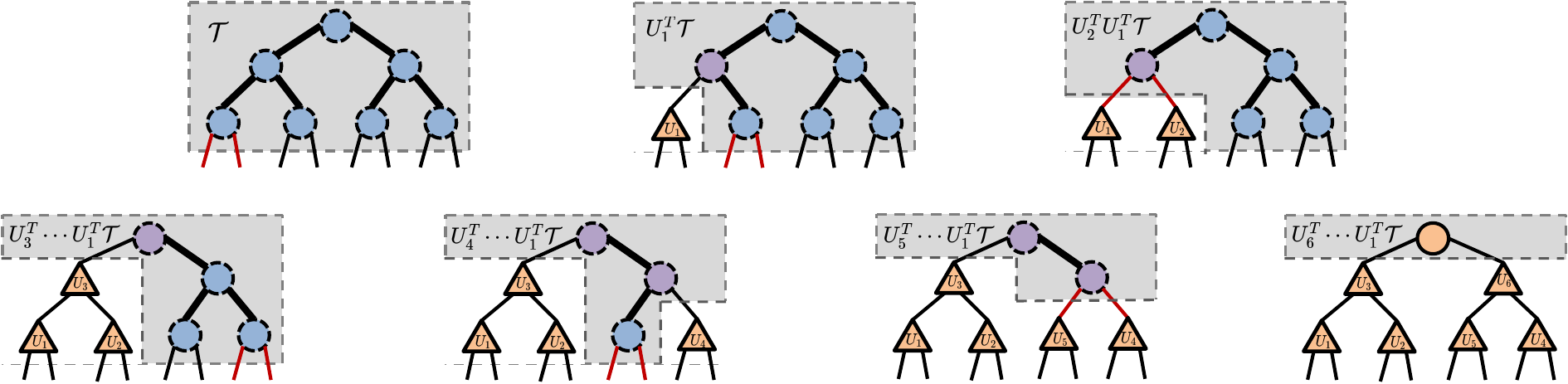}

\caption{
Visualization of the density matrix algorithm. The tree structure in each diagram is the embedding tree.
Each dashed circle represents a partition of the tensor network, and each solid circle/rectangle represents a tensor. Blue, purple, and orange vertices represent the input tensor network, intermediate tensors generated during the algorithm, and the output tensors, respectively.
The input tensor network is represented by the top left diagram, and the output one is represented by the bottom right diagram. In each diagram, the network included in the dashed box has a structure of $T'$ in \cref{alg:dm} and is used to compute the density matrix, and red edges denote the open edges of the density matrix.
}
\label{fig:dm}
\end{figure}

\begin{algorithm}[!ht]
\setstretch{1.1}
\caption{$\texttt{tree\_embedding}$: embedding a graph into the embedding tree via recursive bisection}
\begin{algorithmic}[1]
\STATE{\textbf{Input:} The source graph $G= (V, E)$, the embedding tree $T=(V_T,E_T)$
}
\IF{$|V_T| = 1$}
\RETURN an embedding that mapping all $v\in V$ to the vertex in $V_T$
\ENDIF
\STATE{$\phi\leftarrow$ an empty embedding function}
\STATE{$r\leftarrow$ root vertex in $T$}
\STATE{$E_L,E_R\leftarrow$ open edges represented by the left leaves and right leaves $r$, respectively}
\STATE{$S_L, S_R\leftarrow$ bipartition of $V$ such that $\cut_G(S_L,S_R)=\mincut_G(E_L,E_R)$}
\STATE{$\phi_L\leftarrow \texttt{tree\_embedding}\left(G[S_L],\texttt{left\_child\_tree}(T)\right)$}
\STATE{$E'_L\leftarrow E(S_L,S_R)$}
\STATE{$S'_L, S'_R\leftarrow$ bipartition of $S_R$ such that $\cut_G(S'_L,S'_R)=\mincut_G(E'_L,E_R)$}
\STATE{For each $v\in S_L'$, let $\phi(v)=r$}
\STATE{$\phi_R\leftarrow \texttt{tree\_embedding}\left(G[S_R'],\texttt{right\_child\_tree}(T)\right)$}
\RETURN the combination of $\phi, \phi_L,\phi_R$
\end{algorithmic}
\label{alg:embed_tree}
\end{algorithm}

\subsection{The density matrix algorithm with memoization}

As can be seen from \cref{fig:dm}, there are many shared tensor network parts across density matrices.
 We present a memoization strategy that generalizes the memoization strategy for the density matrix algorithm of the MPO-MPS multiplication to reduce the computational cost.
The strategy is used in Lines~\ref{line:begin_dm}, \ref{line:dm_2} of \cref{alg:dm}.

The memoization strategy uses  the following recursive relation for $\texttt{density\_matrix}_{T}\left(v\right)$  and $\texttt{density\_matrix}_{T}\left(v, z\right)$,
\begin{equation}\label{eq:dm_cache}
\begin{split}
    \texttt{density\_matrix}_{T}\left(v\right) &=
\mat{M}^{(v)}_{E_T(v,*)}
\left(\bigotimes_{
\substack{u\in N(v)}
}
\texttt{density\_matrix}_{T}(u, v)\right)\mat{M}^{(v)T}_{E_T(v,*)}, \\
    \texttt{density\_matrix}_{T}\left(v, z\right) &=
\mat{M}^{(v)}_{E_T(v,z)}
\left(\bigotimes_{
\substack{u\in N(v)\setminus \{z\}}
}
\texttt{density\_matrix}_{T}(u, v)\right)\mat{M}^{(v)T}_{E_T(v,z)},
\end{split}
\end{equation}
where $N(v)$ denotes the set of vertices adjacent to $v$,
 $\otimes$ denotes a Kronecker product, and $\mat{M}^{(v)}_{E_T(v,*)}$ denote a matricization of the tensor network represented by $v$, $T(v)$. In this matricization, all uncontracted modes incident on $v$ are combined into the row. $\mat{M}^{(v)}_{E_T(v,z)}$ denote a matricization of $T(v)$ where the mode represented by the edge $E_T(v,z)$ is the matrix row.

To compute the density matrix $\texttt{density\_matrix}_{T}\left(v,z\right)$, we first compute the density matrices for its neighboring vertices $u\in N(v)\setminus \{z\}$, then contract the target network that contains the density matrices as well as the tensor network $T(v)$ following  \eqref{eq:dm_cache}.
The contraction cost of the above target tensor network is dependent on the selected contraction path, and in practice one can either choose the optimal contraction path that minimizes the contraction cost or select it based on multiple heuristics~\cite{gray2021hyper}. Note that one way to contract the target network is to contract $T(v)$ into a tensor first and then contract it with the density matrices, but it may not yield the optimal cost.
If the terms $\texttt{density\_matrix}_{T}(u, v)$ have already been computed when generating other density matrices, we will cache and reuse them here. We illustrate such strategy in \cref{subfig:dm_1}.
The same strategy is  used to compute $\texttt{density\_matrix}_{T}\left(v\right)$.

In \cref{alg:dm}, the computation of each density matrix occurs only once. Considering that the embedding tree $T$ is limited to being a rooted binary tree, there are at most three density matrices to be calculated for each vertex $v$ in the embedding tree. Below we bound the asymptotic computational cost of the density matrix algorithm using memoization,  and we show that for a given embedding $\phi$, the cost will be upper-bounded by the algorithm that uses canonicalization, justifying the efficiency of the algorithm.

\subsection{Computational cost analysis}\label{subsec:cost_analysis}

We compare the asymptotic computational costs of the density matrix algorithm and the baseline algorithm that utilizes canonicalization, as discussed in \cref{subsec:cano_tree}.  In \cref{thm:upper_bound}, we establish that the density matrix algorithm can be more efficient in approximating a general tensor network as an embedding tree. The cost of the density matrix algorithm is upper-bounded by that of the canonicalization-based algorithm. This efficiency arises from the fact that the density matrix algorithm does not need to explicitly contract the partition embedded in each tree vertex into a tensor.

\begin{theorem}\label{thm:upper_bound}
Consider a given tensor network $G=(V,E)$, an embedding tree $T=(V_T,E_T)$, and an embedding $\phi$ that embeds $G$ into $T$.
Let $\sigma : V_T\to \{1,\ldots, |V_T|\}$ be a post-order DFS traversal of $T$ that shows the the tensor update ordering.
Assuming that changing a tree tensor network into its canonical form will not change any bond size of the network, the asymptotic cost of the density matrix algorithm (\cref{alg:dm}) is upper-bounded by that of the canonicalization-based algorithm (\cref{alg:canonize})
if both algorithms use the same embedding $\phi$, the same update ordering $\sigma$, and the same maximum bond size $\chi$.
\end{theorem}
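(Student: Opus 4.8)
The plan is to compare the two algorithms vertex-by-vertex along the common update ordering $\sigma$. Both \cref{alg:dm} and \cref{alg:canonize} traverse the embedding tree $T$ in the same post-order DFS, and at each non-root vertex $v$ (with parent $u$) they produce the same orthogonal tensor $\mat{U}_v$ defined by the same low-rank approximation problem, just via different linear-algebra routes; the assumption that canonicalization does not change bond sizes guarantees that the intermediate dimensions seen at step $v$ agree between the two algorithms. So it suffices to show, for each fixed $v$, that the cost incurred by \cref{alg:dm} at $v$ is $O$ of the cost incurred by \cref{alg:canonize} at $v$.

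First I would set up notation for the per-vertex costs. Let $B_v=w(B_v)$ be the (log) contracted-edge weight at $v$ after the earlier contractions have been folded in, let $A_v=w(A_v)$ be the uncontracted-edge weight, and let $P_v=w(E_T(u,v))$ be the weight of the parent edge (bounded by $\chi$). In \cref{alg:canonize}, step $v$ first contracts the partition $T(v)$ into a tensor $\mat{M}_v$, then forms the canonical form around $u$, then does a rank-$\chi$ factorization of $\mat{M}_v\mat{R}_u$. The matricized $\mat{M}_v$ has $\exp(A_v)$ rows and $\exp(B_v)$ columns, and the dominant terms are the contraction forming $\mat{M}_v$ together with the $\Theta(\exp(A_v+B_v)\min(\cdot))$-type QR/low-rank cost on it. In \cref{alg:dm}, step $v$ instead contracts the small target network of \eqref{eq:dm_cache} — the partition $T(v)$ together with at most two already-cached child density matrices, each of dimension at most $\exp(P_{u_i})\le\exp(\chi)$ — and then either eigendecomposes the resulting $\exp(P_v)\times\exp(P_v)$ (or $\exp(A_v)\times\exp(A_v)$) density matrix, or runs QR-SVD when $\mat{L}_v$ is rank-deficient. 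The crux is that the density-matrix route never forms any object larger than what the canonicalization route already forms: the cached child density matrices $\texttt{density\_matrix}_{T}(u_i,v)$ have the same row dimension as the bond $\mat{R}$-factors flowing up from the subtrees in \cref{alg:canonize}, so every contraction in \eqref{eq:dm_cache} is a sub-computation of (or no costlier than) contracting $\mat{M}_v$ and multiplying by $\mat{R}_u$.

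The main case analysis, and the place I expect the real work, is matching the two branches of \cref{alg:dm} against the canonicalization cost: (i) when $w(A_v)=O(w(B_v))$, show that computing $\mat{L}_v=\mat{M}_v\mat{L}_u\mat{M}_v^T$ via the memoized contraction path of \eqref{eq:dm_cache} plus a $\chi$-truncated eigendecomposition of an $\exp(\min(A_v,P_v))$-dimensional matrix costs $O$ of forming $\mat{M}_v$ and low-rank-factoring $\mat{M}_v\mat{R}_u$; (ii) when $w(A_v)$ dominates, show QR-SVD (QR of $\mat{M}_v$, then eigendecomposition of $\mat{R}_v\mat{L}_u\mat{R}_v^T$) is bounded by the same canonicalization cost — here $\mat{L}_v$ is rank-deficient, so the density matrix would be strictly wasteful, which is exactly why QR-SVD is invoked, and I would argue its cost collapses to the QR of $\mat{M}_v$ (already done in \cref{alg:canonize}) plus work on a $\chi$-bounded object. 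The Kronecker-product structure in \eqref{eq:dm_cache} means one must be careful to charge the contraction to a good path (contract $T(v)$ against the child density matrices in the right order rather than squaring first); I would appeal to the fact that \cref{alg:dm} is free to pick the optimal contraction path for this target network, so it is enough to exhibit \emph{one} path meeting the bound — namely, the path that mirrors how \cref{alg:canonize} builds and contracts $\mat{M}_v$ with the incoming orthogonal factors. Summing the per-vertex inequalities over all $v\in V_T$ in the order $\sigma$, and adding the (identical) root-contraction cost, yields the claimed global bound.

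Finally I would remark on where the strict improvement comes from (for the statement itself only the $O$-bound is needed): whenever $w(B_v)$, the internal bond weight of the partition at $v$, exceeds $w(A_v)+\chi$, the canonicalization route pays to form the fat intermediate $\mat{M}_v$ of size $\exp(A_v+B_v)$, whereas the density-matrix route contracts straight into an object of size $\exp(2\min(A_v,P_v))\le \exp(2\chi)$, giving an asymptotic gap — this is the mechanism behind the $\Theta(r/s)$ and $\Theta(s)$ separations tabulated in \cref{tab:compare}, but it is not required for the theorem and I would keep it as a closing observation rather than part of the formal argument.
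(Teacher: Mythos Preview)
Your proposal is correct and follows essentially the same route as the paper: upper-bound the density matrix contraction at each vertex by the particular contraction path that first collapses the partition $T(v)$ into a single tensor $\mat{M}_v$ (exactly what \cref{alg:canonize} does), then compare the two algorithms vertex-by-vertex on the resulting tree tensor network. The paper packages the per-vertex equality into a separate lemma (\cref{lem:same_cost_tree_approximation}, supported by \cref{lem:same_cost} and \cref{lem:step_dm}) rather than spelling out the $w(A_v)$-vs-$w(B_v)$ case analysis inline as you do, but the content is the same.
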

\begin{proof}

For the contraction of each density matrix at vertex $v$ in the density matrix algorithm, a valid contraction path can be obtained by contracting the partition at $v$ into
a tensor first, then contracting it with other density matrices based on \eqref{eq:dm_cache}. The cost of this contraction path is an upper bound of the contraction cost of this density matrix, assuming the optimal contraction path is selected.

Therefore, the overall cost of the density matrix algorithm, assuming the optimal contraction path is used during the contraction of each density matrix, is upper-bounded by
the case where each partition embedded into every vertex $v \in V_T$ is contracted into a tensor $\mat{M}_v$ prior to conducting the depth-first search (DFS) traversal. This transforms the tensor network into an untruncated tree tensor network. According to \cref{lem:same_cost_tree_approximation} in \cref{sec:cost_analysis_appendix}, both the density matrix algorithm and the canonicalization-based algorithm exhibit the same asymptotic cost when truncating a tree tensor network.
By examining this particular case, we establish that the upper bound of the density matrix algorithm matches the asymptotic cost described in \cref{alg:canonize}. This finishes the proof.

\end{proof}

\section{The algorithm to approximate an input tensor network into an embedding tree}
\label{sec:swap}

\begin{algorithm}[!ht]
\setstretch{1.1}
\caption{\texttt{approx\_tensor\_network}: approximate a tensor network into an embedding tree}
\begin{algorithmic}[1]
\STATE{\textbf{Input:} The tensor network $\tsr{T}$ with graph $G_s= (V_s, E_s)$, the edge set ordering $\sigma^{(\Set{E}_s)}$, the edge orderings $\{\sigma^{(E')}: E'\in \Set{E}_s\}$,
the maximum bond size $\chi$, the swap batch size $r$, and ansatz $A$ \comment{The ansatz $A$ can be either ``MPS" or ``Comb"}
}
\STATE{$\tau^{(\Set{E}_s)} \leftarrow \texttt{linear\_ordering}\left(\Set{E}_s, G_s\right)$ \comment{Ordering generated via recursive bisection}
}
\STATE{$d\leftarrow \ktdist{\tau^{(\Set{E}_s)}, \sigma^{(\Set{E}_s)}}$ \comment{Number of adjacent edge set swaps needed to change $\tau^{(\Set{E}_s)}$ to $\sigma^{(\Set{E}_s)}$}}
\STATE{$n \leftarrow \lceil d/ r\rceil$ \comment{The number of density matrix algorithms to be performed}}
\STATE{$\hat{\sigma}_1\ldots, \hat{\sigma}_n\leftarrow$ $n$ equally-spaced inverval orderings that separate $\tau^{(\Set{E}_s)}$ and $\sigma^{(\Set{E}_s)}$}
\STATE $\tsr{X}_0\leftarrow \tsr{T}$
\FOR{$i\in \{1,\ldots, n\}$}
\STATE{
$T\leftarrow \texttt{embedding\_tree}\left(\hat{\sigma}_i, \{\sigma^{(E')}: E'\in \Set{E}\}, A\right)$ \comment{construct the embedding tree based on \cref{def:embed_mps} and \cref{def:embed_comb}}
}
\STATE{$\tsr{X}_i\leftarrow \texttt{density\_matrix\_alg}(\tsr{X}_{i-1}, T, \chi)$ \comment{\cref{alg:dm}}}
\ENDFOR
\RETURN the output tensor network $\tsr{X}_n$
\end{algorithmic}
\label{alg:approx_tensor_network}
\end{algorithm}

 \begin{figure}[!ht]
\centering
\includegraphics[width=.94\textwidth, keepaspectratio]{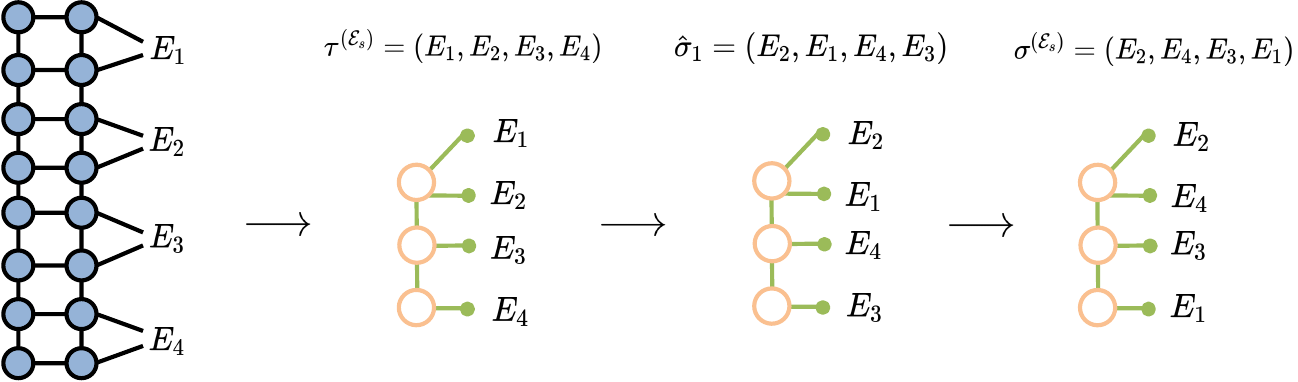}

\caption{
Illustration of \cref{alg:approx_tensor_network} with the swap batch size being $r=2$. The left diagram denotes the input tensor network $G_s$ as well as the set of edge subsets $\Set{E}_s = \{E_1,E_2,E_3,E_4\}$. The second leftmost diagram shows the ordering $\tau^{(\Set{E}_s)}$ that is generated based on analyzing the graph structure of $G_s$. Since 4 swaps are needed to change $\tau^{(\Set{E}_s)}$ to $\sigma^{(\Set{E}_s)}$, two density matrix algorithms are performed, one with the embedding tree generated by the ordering $\hat{\sigma}_1$ and the other with the embedding tree generated by the ordering $\sigma^{(\Set{E}_s)}$.
}
\label{fig:vis_swap}
\end{figure}
We introduce a hybrid algorithm that combines the density matrix algorithm with the swap-based algorithm to approximate an input tensor network $G_s=(V_s,E_s)$ into an embedding tree. This hybrid algorithm offers a compromise between accuracy and computational cost by performing multiple iterations of the density matrix algorithm. Each iteration incrementally modifies the structure of the tensor network by a small degree, ensuring that the overall computational cost remains manageable. While this approach may sacrifice a certain degree of approximation accuracy, it provides a balanced solution that achieves a reasonable trade-off between accuracy and computational efficiency compared to the pure density matrix algorithm.

We present the algorithm in \cref{alg:approx_tensor_network}, and an illustration is shown in \cref{fig:vis_swap}.
In this algorithm, we denote the edge set ordering in the embedding tree as $\sigma^{(\mathcal{E}_s)}$, and the reference edge set ordering of $G_s$ as $\tau^{(\mathcal{E}_s)}$. We measure the structural difference between $G_s$ and the embedding tree using the Kendall-Tau distance in \cref{defn:ktdist}, defined as $d = \ktdist{\tau^{(\mathcal{E}_s)}, \sigma^{(\mathcal{E}_s)}}$.
The algorithm utilizes a parameter
$r$ to control the extent of structural modifications made by each density matrix algorithm iteration. The number of density matrix algorithms performed is determined by $\lceil d/r \rceil$. Users can choose different values of $r$ depending on the specific problem.
By selecting a larger value of $r$, the behavior of the algorithm closely resembles that of the pure density matrix algorithm. On the other hand, a smaller value of $r$ generally leads to improved computational efficiency while sacrificing some approximation accuracy.

In summary, the hybrid approach within the proposed \texttt{partitioned\_contract} algorithm effectively balances accuracy and computational cost for specific problem instances.

\section{Experimental results}
\label{sec:exp}
In this section, we present the results of a series of experiments to evaluate the performance of the proposed approach. All experiments were executed on an Intel Core i7 2.9 GHz Quad-Core machine.

In \cref{subsec:imple}, we introduce our implementations, the tensor networks and models tested in our experiments.
In \cref{subsec:exp_dm}, we conducted a detailed comparison between the proposed density matrix algorithm for tree approximation and the canonicalization-based algorithm. Across all experiments, the density matrix algorithm consistently demonstrated either lower or the same asymptotic cost. In particular, we achieved a remarkable 4.9X speedup with the density matrix algorithm compared to the canonicalization-based algorithm when approximating an MPO-MPS multiplication into an MPS.

In \cref{subsec:exp_par_contract}, we justify the \texttt{partitioned\_contract} algorithm presented in \cref{alg:ovw}. We justify our embedding tree selection algorithm and explore the impact of the environment size on accuracy and efficiency across multiple problems. Additionally, we conduct a comprehensive comparison between the MPS and the comb ansatz.
Furthermore, we evaluate \texttt{partitioned\_contract}, the CATN algorithm~\cite{pan2020contracting}\footnote{We use the CATN implementation at \url{https://github.com/panzhang83/catn}.}, SweepContractor~\cite{chubb2021general}\footnote{We use the SweepContractor implementation at \url{https://github.com/chubbc/SweepContractor.jl}.}, and hyperoptimized approximate contraction \cite{gray2022hyper}\footnote{We use the hyperoptimized approximate contraction implementation at quimb~\cite{gray2018quimb} (\url{https://github.com/jcmgray/quimb}).},  in contracting tensor networks defined on lattices and random regular graphs as well as tensor networks from random quantum circuit simulation.
We demonstrate a 9.2X speed-up while maintaining the same level of accuracy
when contracting tensor networks defined on 3D lattices using
the Ising model.

\subsection{Implementations, tested tensor networks, and the evaluation}\label{subsec:imple}

The proposed algorithms in the paper are being developed at \url{https://github.com/ITensor/ITensorNetworks.jl}.
ITensorNetworks.jl is a publicly available Julia~\cite{bezanson2017julia} package built for manipulating tensor networks of arbitrary geometry, and is built on top of ITensors.jl~\cite{fishman2022itensor}. The library also provides an interface to OMEinsumContractionOrders.jl\footnote{The library is implemented at \url{https://github.com/TensorBFS/OMEinsumContractionOrders.jl}.}, which implements multiple heuristics introduced in \cite{gray2021hyper,liu2023computing} to generate efficient contraction paths for exact tensor network contractions.
For all the results presented in this work, we use the Simulated Annealing bipartition + Greedy algorithm (SABipartite)~\cite{liu2023computing} to generate contraction paths for exact tensor network contractions.

Our experiments consider three types of tensor networks: those generated from random tensors,
the classical Ising model, and random quantum circuits.

For those generated from random tensors, each element within the tensors is an i.i.d. variable uniformly distributed in the range of $[\alpha, 1]$, where $\alpha \in [-1, 0]$. These particular tensor networks have been utilized in previous research~\cite{gray2022hyper} as benchmarks for evaluating contraction algorithms.
For specific structures like random regular graphs and 3D lattices, the approximate contraction of the tensor network becomes more challenging as $\alpha$ approaches the value of $-1$~\cite{chen2024sign}.

For a tensor network defined on a graph $G=(V,E)$ using the ferromagnetic Ising model, the contraction output, denoted as $Z$ and referred to as the partition function, can be expressed as follows,
\[
Z = \sum_{\sigma_i,\sigma_j\in\{-1,1\}}\prod_{(i,j)\in E}\exp(\beta \sigma_i\sigma_j).
\]
In the tensor network, the tensor $\tsr{T}^{(v)}$ defined at each
$v\in V$ has an elementwise expression of
\[
t^{(v)}_{E(v)} = \sum_i \prod_{e\in E(v)}W_{i,e},
\]
where
\[
W =  \frac{1}{\sqrt{2}}\begin{bmatrix}
\sqrt{\cosh(\beta)} + \sqrt{\sinh(\beta)} & \sqrt{\cosh(\beta)} - \sqrt{\sinh(\beta)} \\
\sqrt{\cosh(\beta)} - \sqrt{\sinh(\beta)} & \sqrt{\cosh(\beta)} + \sqrt{\sinh(\beta)}
\end{bmatrix}
\]
and $\beta$ is an input parameter to the model.
We show the relation between the relative error of $\ln Z$ and the running time of \texttt{partitioned\_contract} and the baselines in  \cref{subsec:exp_par_contract}.
The quantity $\ln Z$ is an important measure that is proportional to the free energy of the system.

We also explore the simulation of a 2D random quantum circuit, denoted as $|{\psi}\rangle$, as detailed in~\cite{pang2020efficient,boixo2018characterizing,arute2019quantum}. The initial quantum state $|{0,\ldots,0}\rangle$ is organized into a $6 \times 6$ grid, and we apply six layers of random circuit gates to this initial state. Each layer contains random one-qubit rotations on top of each qubit and a sequence of two-qubit controlled-X gates, and the two-qubit gates are structured in a brick-layer pattern. The specific configuration of each 2-qubit layer is detailed in \cref{fig:layers_rqc}. In \cref{subsec:exp_par_contract}, we approximately contract the tensor network $\langle\psi|\psi\rangle$, and measure the absolute error of the quantity.

 \begin{figure}[!ht]
\centering
\includegraphics[width=.8\textwidth, keepaspectratio]{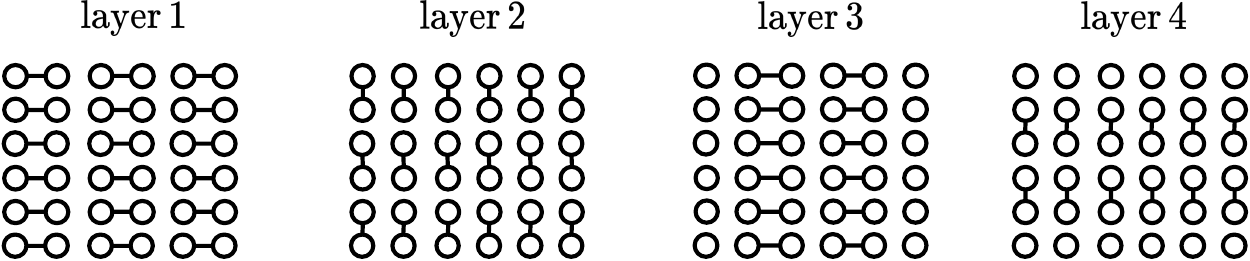}

\caption{The arrangement of the initial four layers of quantum gates. Each circle denotes a qubit, and each line denotes a two-qubit gate. The subsequent layers follow the same pattern.
}
\label{fig:layers_rqc}
\end{figure}

To evaluate and compare the efficiencies of various algorithms, we measure both the execution time and the required number of GFlops (giga floating-point operations).
The GFlops calculations encompass tensor contractions, QR factorization, and low-rank approximations, as outlined in the model detailed in \cref{subsec:cost_model}.
It is worth noting that in our reported results, the execution time excludes the graph analysis part, which involves graph embedding and computing the contraction sequence of given tensor networks. This part remains independent of the tensor network ranks and is negligible when the ranks are high.

\subsection{Comparison between the density matrix algorithm and the canonicalization-based algorithm}\label{subsec:exp_dm}

 \begin{figure}[!ht]
\centering
\subfloat[MPS, $\chi=100$]{
\includegraphics[width=.242\textwidth, keepaspectratio]
{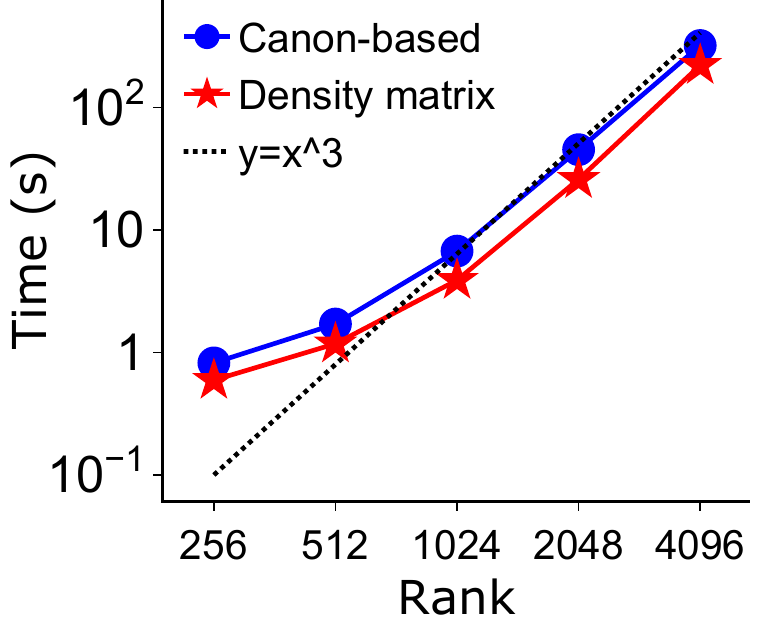}\label{subfig:bench_mps1}}
\subfloat[MPS, $\chi=100$]{\includegraphics[width=.242\textwidth, keepaspectratio]{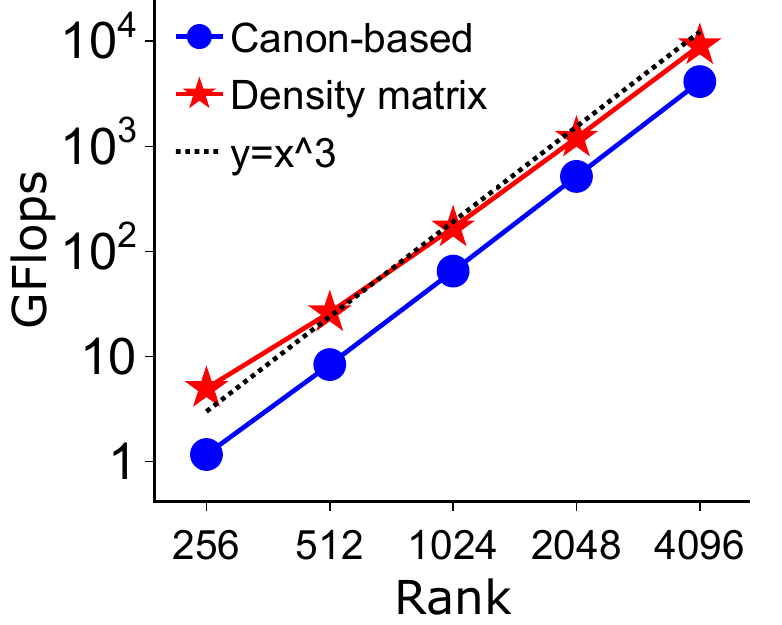}
\label{subfig:bench_mps2}}
\subfloat[BBT, $\chi=50$]{
\includegraphics[width=.242\textwidth, keepaspectratio]
{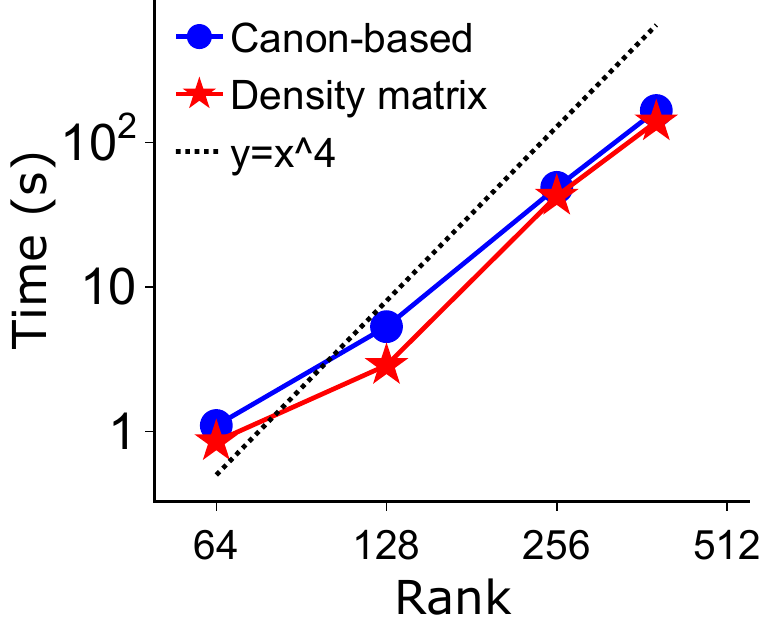}\label{subfig:bench_btree1}}
\subfloat[BBT, $\chi=50$]{\includegraphics[width=.242\textwidth, keepaspectratio]{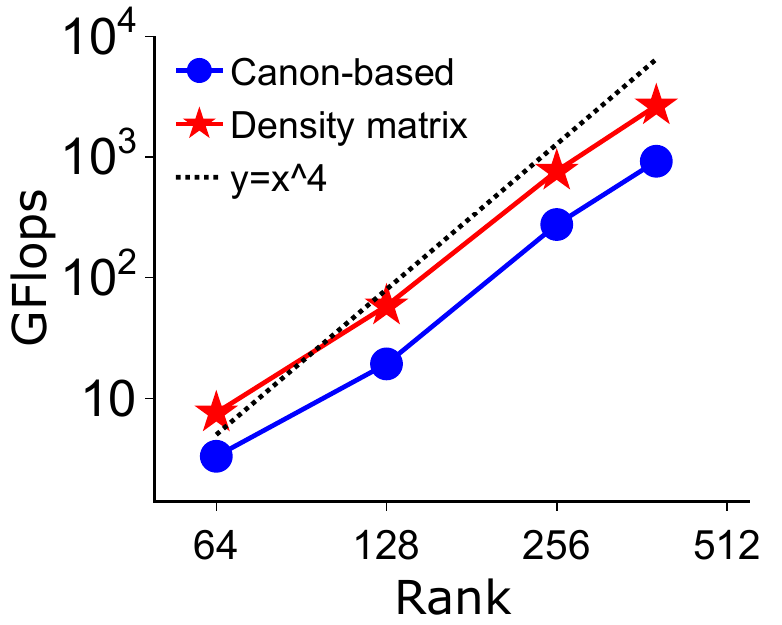}
\label{subfig:bench_btree2}}

\caption{Performance comparison between the density matrix algorithm and the canonicalization-based algorithm in truncating a binary tree tensor network.
In (a)(b), the input networks are MPSs with different ranks. In (c)(d), the inputs are balanced binary tree (BBT) tensor networks with different ranks.
The number of uncontracted modes is fixed to be 30 for all input tensor networks.
}
\label{fig:bench_truncation}
\end{figure}

We conduct an efficiency comparison between the density matrix algorithm and the canonicalization based algorithm to approximate an input tensor network into a binary tree tensor network.
Our evaluation covers scenarios where the input tensor network structure matches the output structure, as well as cases where the input network has a general non-tree structure. In both instances, the density matrix algorithm has equal or superior asymptotic cost compared to the canonicalization-based algorithm.

In \cref{fig:bench_truncation}, we conduct a performance comparison of truncating both MPSs and balanced binary tree tensor networks. Let $R$ denote the rank of the input MPS and the balanced binary tree, the analytical asymptotic cost for truncating an MPS is $\Theta(R^3)$, whereas for truncating a balanced binary tree is $\Theta(R^4)$.
As depicted in the results, the scaling behavior of both algorithms aligns with the analytical predictions. Despite the density matrix algorithm incurring a constant overhead in terms of GFlops, we observe that it exhibits slightly faster performance. This advantage can be attributed to the fact that the majority of the density matrix algorithm's execution time is spent on tensor contractions, which are practically faster compared to matrix factorizations, even though both operations have a similar computational complexity.
This property makes the density matrix algorithm more favorable on GPUs due to their ability to efficiently run tensor contractions in parallel.

 \begin{figure}[!ht]
\centering
\subfloat[]{
\includegraphics[width=.43\textwidth, keepaspectratio]
{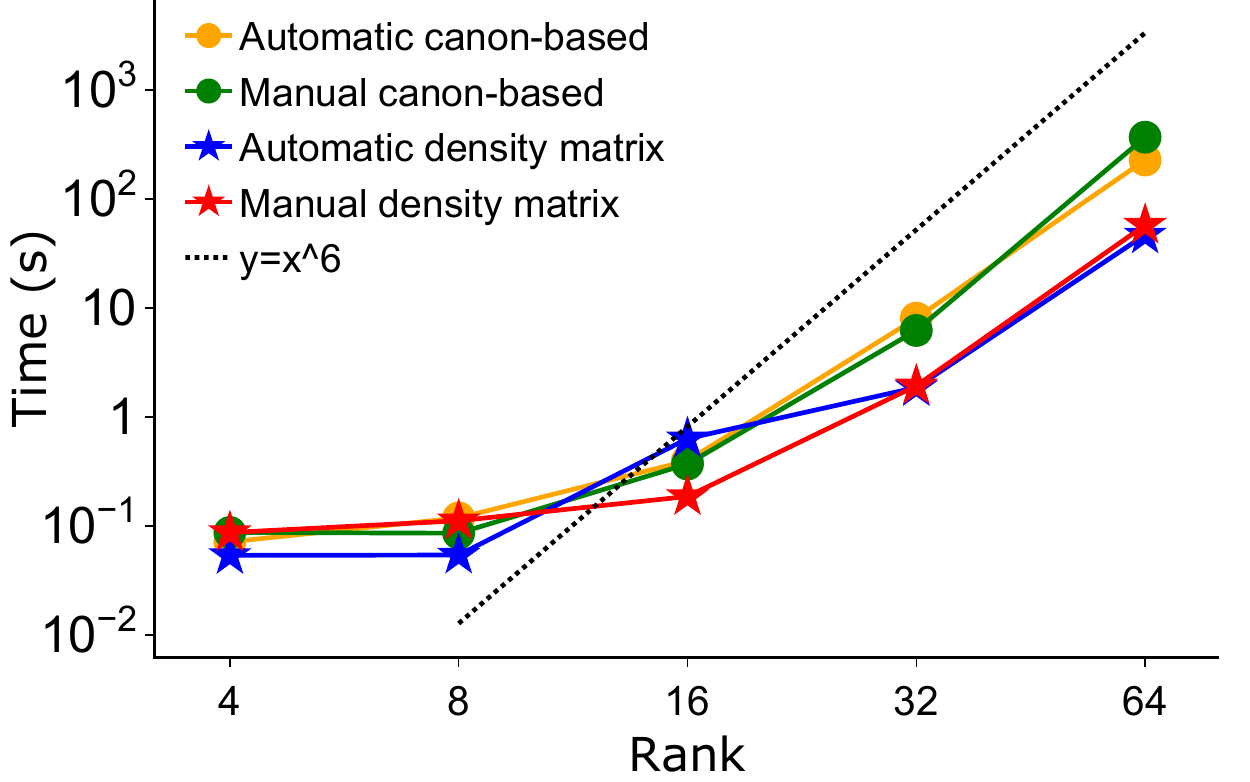}\label{subfig:bench_mpomps_1}}
\subfloat[]{\includegraphics[width=.43\textwidth, keepaspectratio]{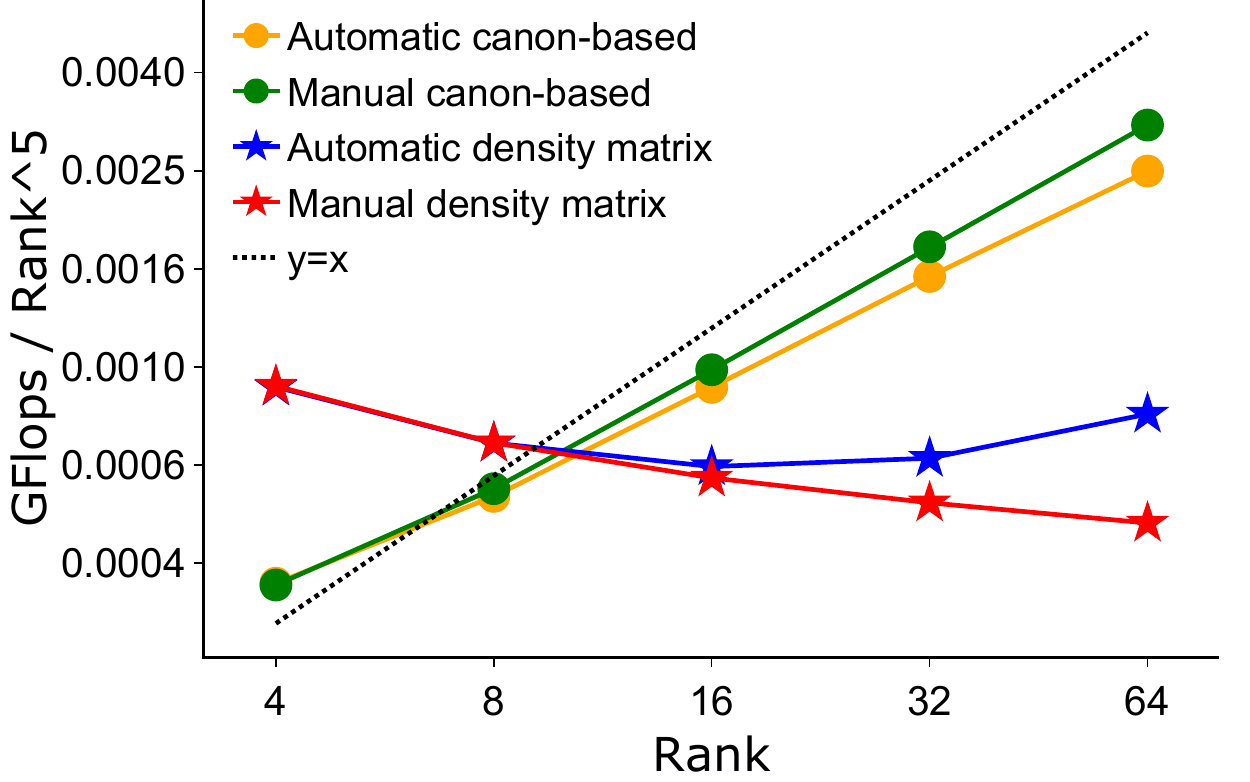}
\label{subfig:bench_mpomps_2}}

\caption{Performance comparison between the density matrix algorithm and the canonicalization-based algorithm in approximating the MPO-MPS multiplication into a low-rank MPS. The order of the input MPS and MPO  is fixed to be 40.
Both the input MPS and MPO have the same rank $\chi$, and the output MPS rank is also upper-bounded by $\chi$. The manual algorithms are those reviewed in \cref{subsec:2d} that use a manually-determined memoization strategy.
}
\label{fig:bench_mpo_mps}
\end{figure}

In \cref{fig:bench_mpo_mps}, we compare the performance of truncating the multiplication of an MPS and an MPO.
Our experiments encompass both the canonicalization-based algorithm and the density matrix algorithm, alongside the reference algorithms reviewed in \cref{subsec:2d}. In the reference algorithms, the memoization strategy is determined and implemented manually rather than automatically.
For the canonicalization-based algorithm, the asymptotic cost is $\Theta(R^6)$, where $R$ represents the input rank of both MPS and MPO. On the contrary, the density matrix algorithm exhibits an asymptotic cost of $\Theta(R^5)$.
As shown in \cref{subfig:bench_mpomps_2}, the scaling behavior of both algorithms aligns with our analysis. The density matrix algorithm outperforms the canonicalization-based algorithm and has a remarkable 4.9X execution time speedup when the input rank is 64. Furthermore, our algorithm, equipped with the automatically-chosen memoization strategy, performs similarly to the reference algorithms, thereby confirming the efficacy of our approach.

 \begin{figure}[!ht]
\centering
\subfloat[MPS, $\chi=250$]{
\includegraphics[width=.242\textwidth, keepaspectratio]
{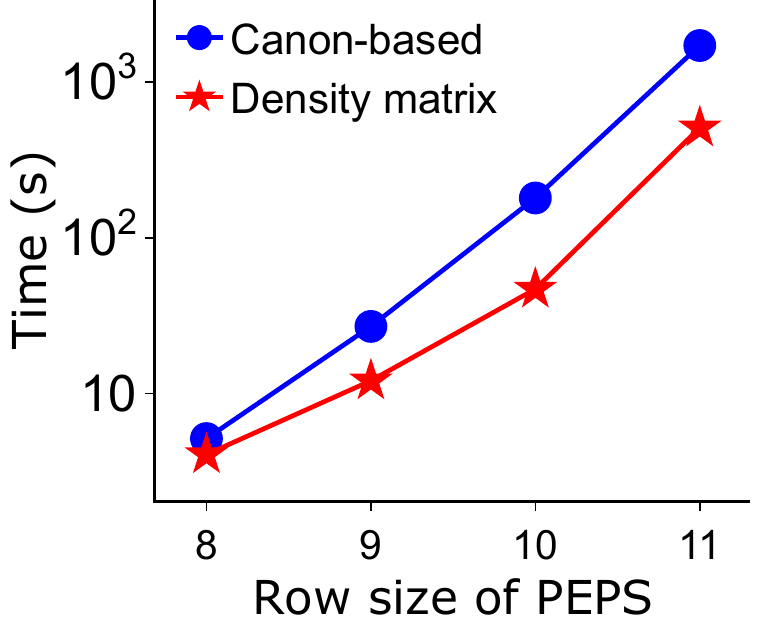}\label{subfig:bench_peps_mps1}}
\subfloat[MPS, $\chi=250$]{\includegraphics[width=.242\textwidth, keepaspectratio]{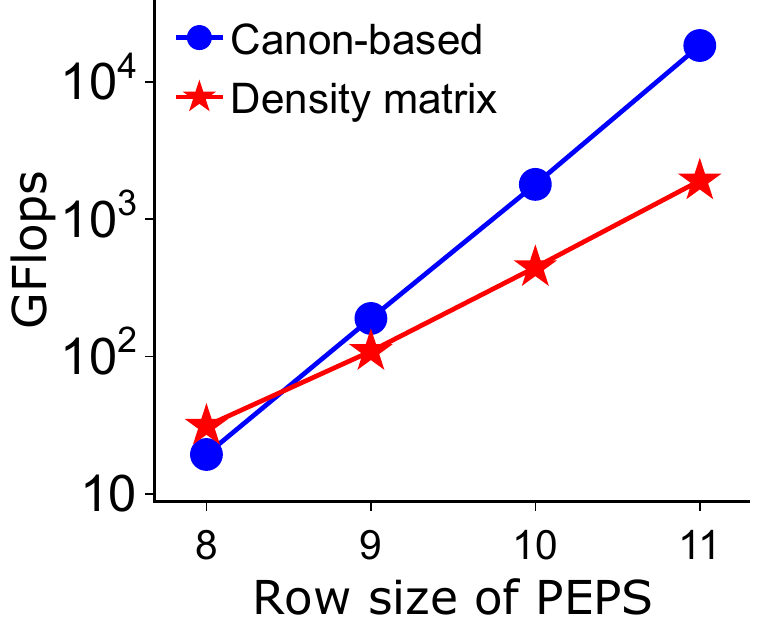}
\label{subfig:bench_peps_mps2}}
\subfloat[Comb, $\chi=50$]{
\includegraphics[width=.242\textwidth, keepaspectratio]
{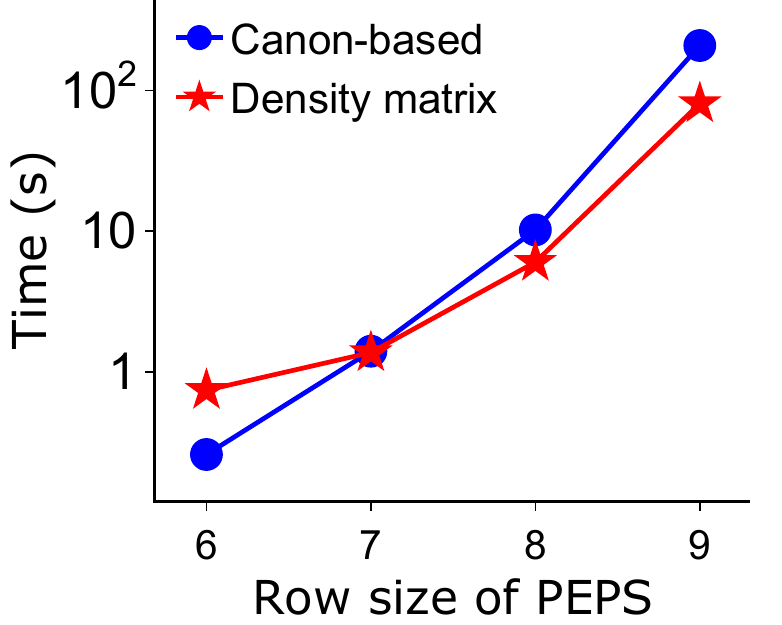}\label{subfig:bench_peps_comb1}}
\subfloat[Comb, $\chi=50$]{\includegraphics[width=.242\textwidth, keepaspectratio]{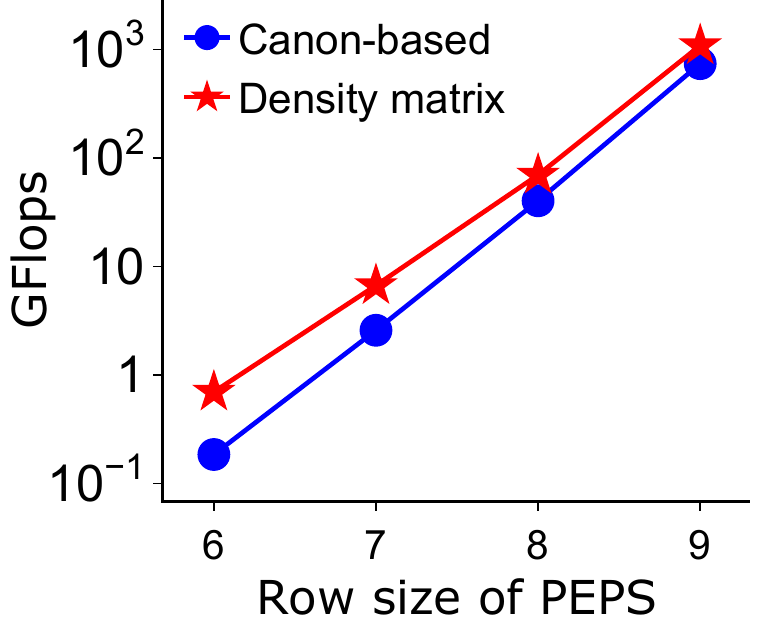}
\label{subfig:bench_peps_comb2}}

\caption{Performance comparison between the density matrix algorithm and the canonicalization-based algorithm in approximating a PEPS with rank 2 into a binary tree tensor network.
The column size of the PEPS equals the row size.
In (a)(b), the embedding tree structure is an MPS, and the MPS site ordering is chosen based on the row- or column-wise traversal of the 2D coordinates of the PEPS tensors. In (c)(d), the embedding tree structure is a comb, and each edge subset in the comb is a row of the PEPS.
}
\label{fig:bench_peps}
\end{figure}

In \cref{fig:bench_peps}, we compare the performance of approximating PEPS on square grids into MPS and comb binary tree structures. Both structures are defined in \cref{subsec:embed_tree}.
As can be seen, the density matrix algorithm outperforms the canonicalization-based algorithm when the number of rows and columns of the PEPS is large. The inefficiency of the canonicalization-based algorithm is due to the fact that there exists some partition embedded in one vertex of the MPS/comb whose contraction yields a large-sized tensor. The density matrix algorithm avoids the explicit formation of such tensors and thus is more efficient. In later sections we will discuss the relative merits of using MPS or comb tree structures for intermediate networks.

\subsection{Benchmark of the \texttt{partitioned\_contract} algorithm}\label{subsec:exp_par_contract}

 \begin{figure}[!ht]
\centering
\subfloat[2D lattice, rank=16]{
\includegraphics[width=.3\textwidth, keepaspectratio]
{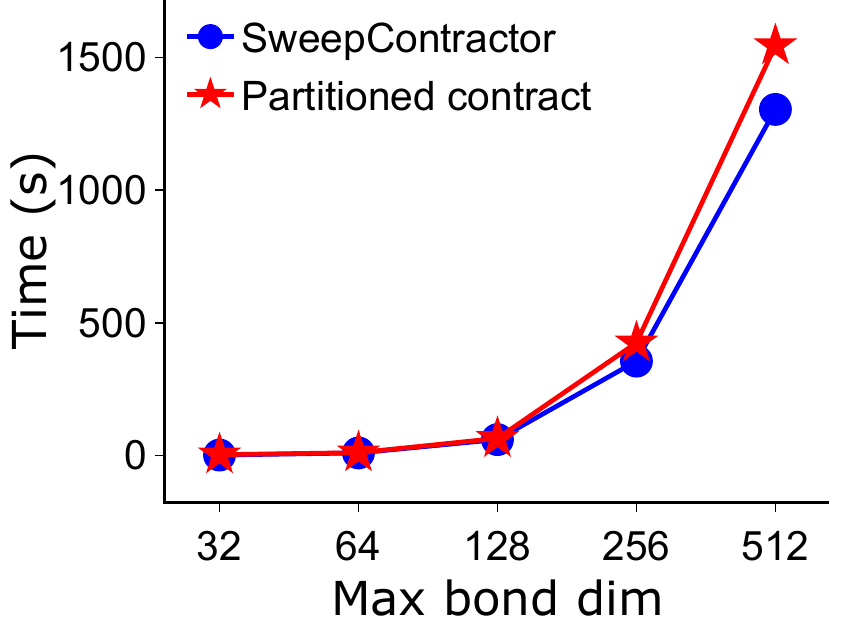}\label{subfig:sweep1}}
\subfloat[3D lattice, rank=4]{\includegraphics[width=.3\textwidth, keepaspectratio]{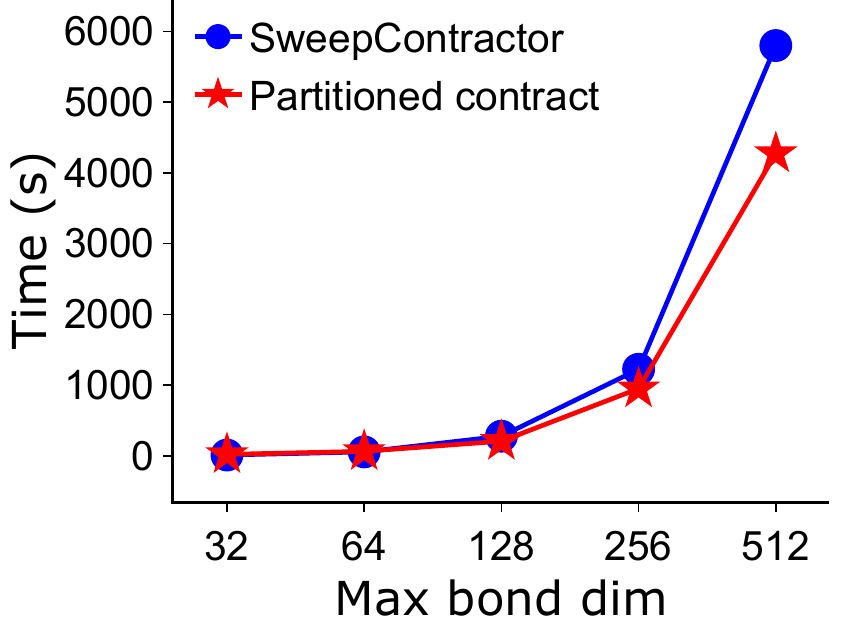}
\label{subfig:sweep2}}
\subfloat[Random regular graph, rank=8]{
\includegraphics[width=.3\textwidth, keepaspectratio]
{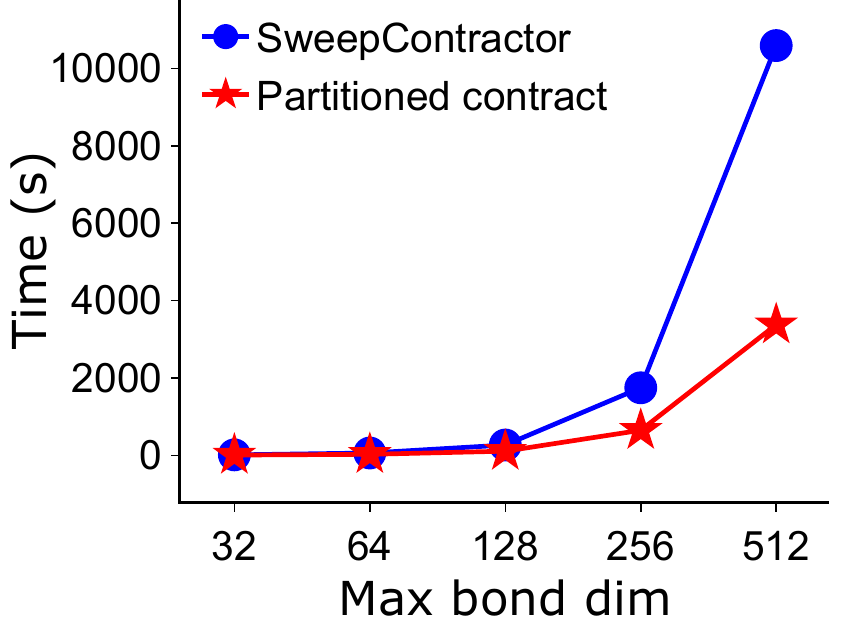}\label{subfig:sweep3}}

\caption{Performance comparison between \texttt{partitioned\_contract} and SweepContractor under the same contraction path. The swap batch size is set to 1 for all experiments in \texttt{partitioned\_contract}. In (a), the row and the column size of the 2D lattice is 8. In (b), each mode in the 3D lattice has a size of 5. In (c), the random regular graph has 100 vertices, each with a degree of 3.
}
\label{fig:compare_w_sweep}
\end{figure}

\paragraph{Impact of the embedding tree on contraction efficiency}

In this section we present results that justify our embedding tree selection algorithm in \texttt{partitioned\_contract}. In \cref{fig:compare_w_sweep}, we compare \texttt{partitioned\_contract} with SweepContractor for tensor networks defined on three different structures using tensor networks with random tensor elements that was introduced in \cref{subsec:imple}. For all the experiments, \texttt{partitioned\_contract} uses the MPS ansatz, and
both algorithms use the same maximally-unbalanced contraction tree where each partition only contains one tensor.
Consequently, the only distinction between the two algorithms lies in the usage of different embedding trees for each contraction between an MPS and a tensor.

As can be seen, both algorithms have a similar performance when contracting a 2D grid, while \texttt{partitioned\_contract} significantly outperforms SweepContractor for the other two graph structures.
This difference in performance arises from the fact that different embedding trees result in varying numbers of adjacent swaps of MPS modes. For tensor networks defined on 3D lattice and random regular graphs, our algorithm generates embedding trees that lead to substantially fewer adjacent swaps.
 Note that the \texttt{partitioned\_contract} algorithm achieves higher approximation accuracy on these two graphs, as fewer swaps imply reduced truncations, contributing to improved accuracy in the results.

 \begin{figure}[!ht]
\centering

\subfloat[$5 \times 5 \times 5$, $\beta=0.3$]{
\includegraphics[width=.32\textwidth, keepaspectratio]
{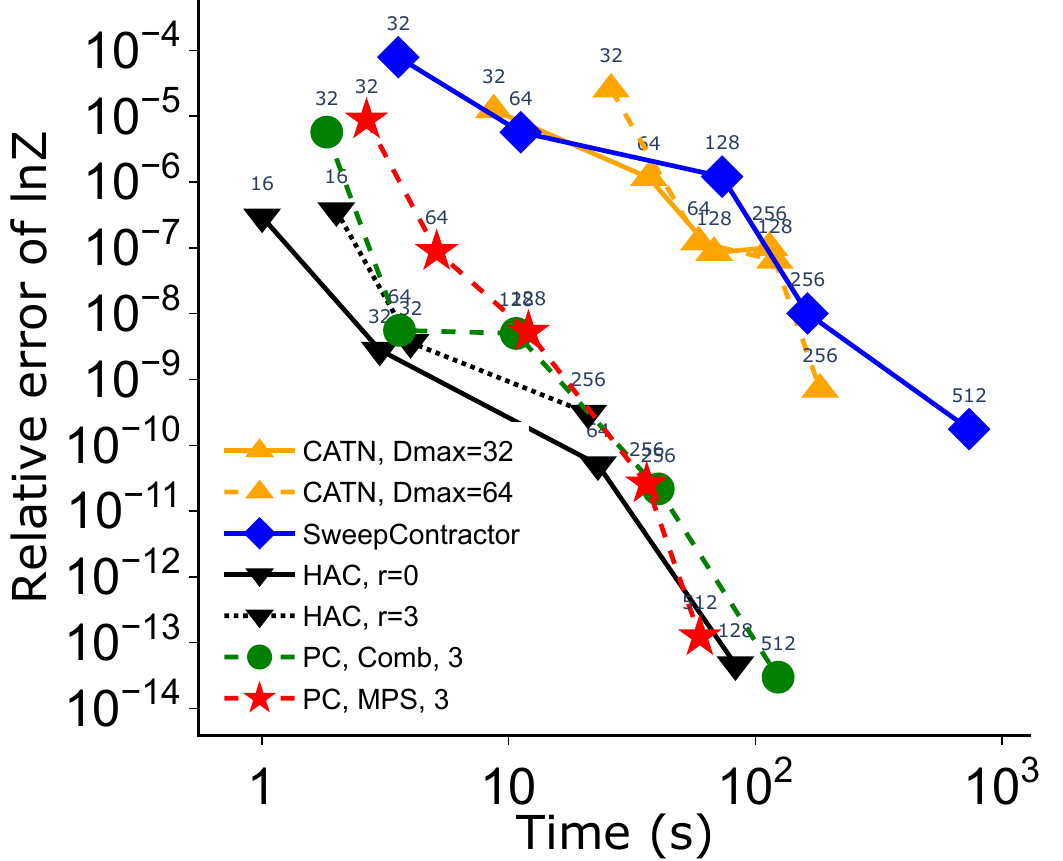}\label{subfig:3d_ising_1}}
\subfloat[$6 \times 6 \times 6$, $\beta=0.3$]{
\includegraphics[width=.32\textwidth, keepaspectratio]
{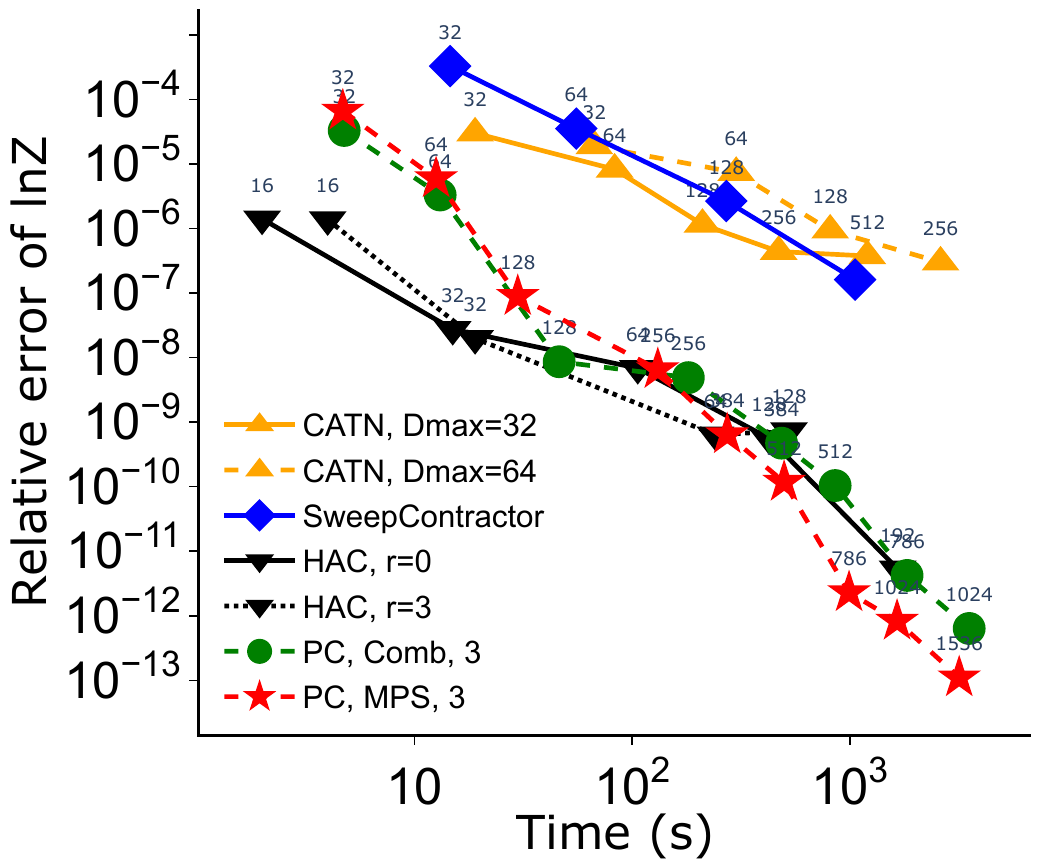}\label{subfig:3d_ising_4}}
\subfloat[$28 \times 28$, $\beta=0.44$]{
\includegraphics[width=.32\textwidth, keepaspectratio]
{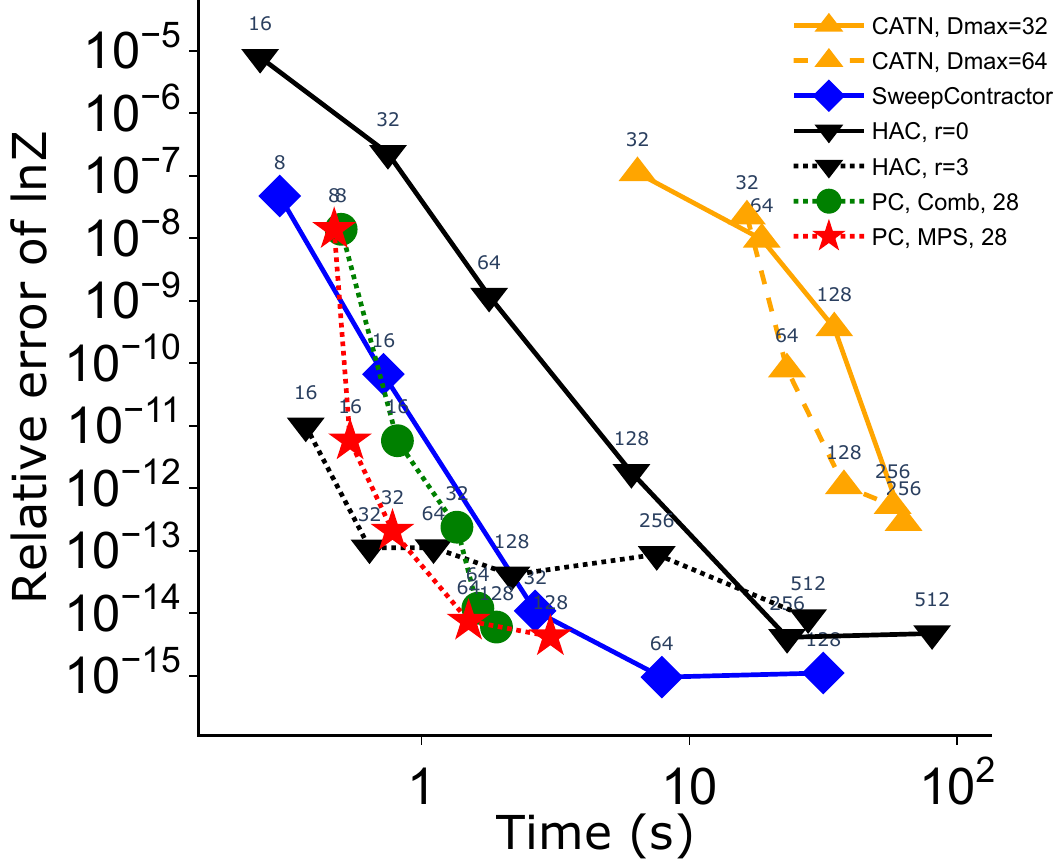}\label{subfig:2d_ising_1}}

\subfloat[$5 \times 5 \times 5$, $\beta=0.3$]{\includegraphics[width=.24\textwidth, keepaspectratio]{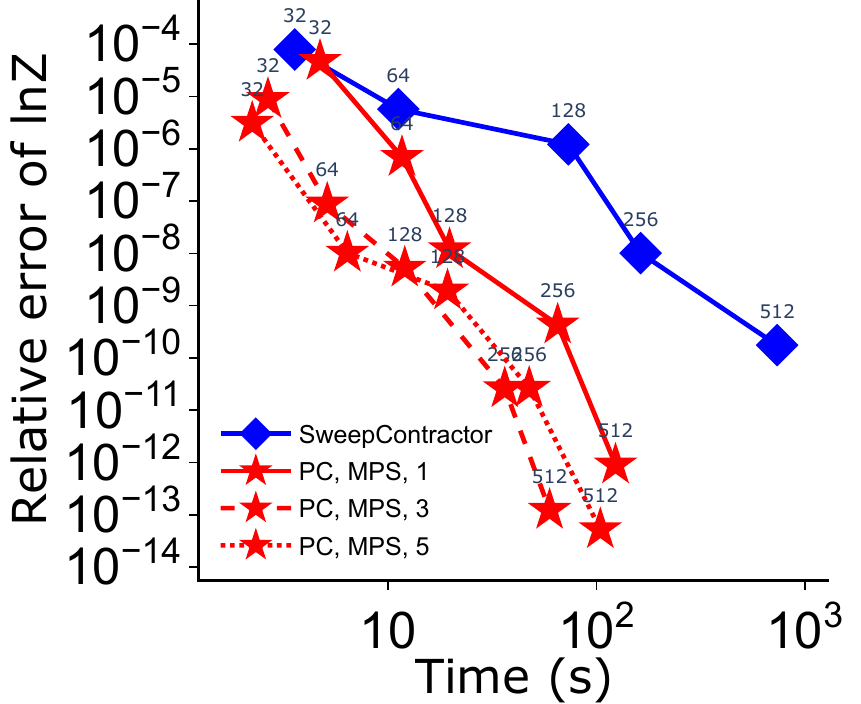}
\label{subfig:3d_ising_2}}
\subfloat[$5 \times 5 \times 5$, $\beta=0.3$]{
\includegraphics[width=.24\textwidth, keepaspectratio]
{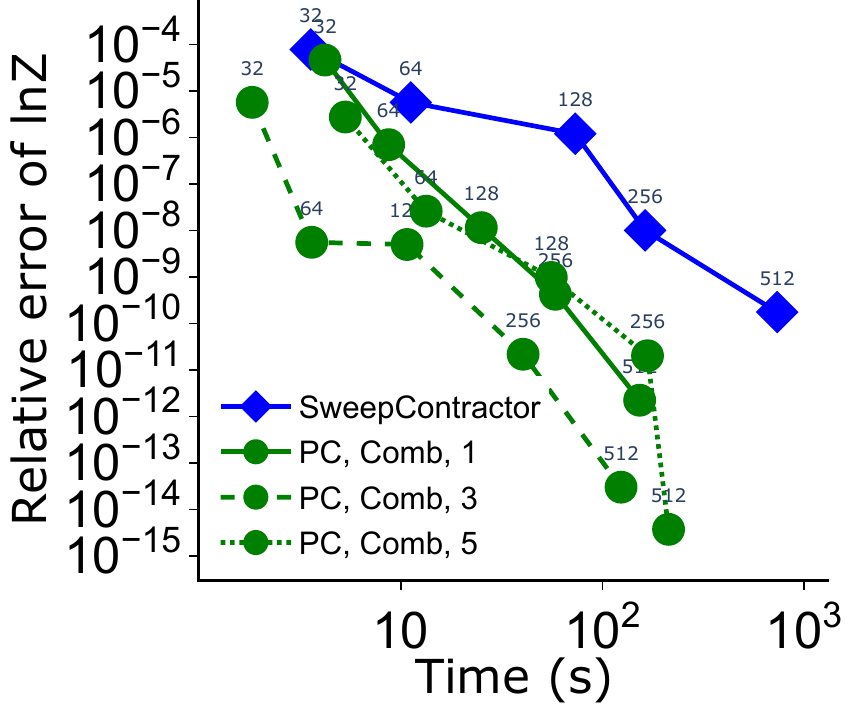}\label{subfig:3d_ising_3}}
\subfloat[$5\times 5 \times 5$, $\beta=0.3$]{\includegraphics[width=.24\textwidth, keepaspectratio]{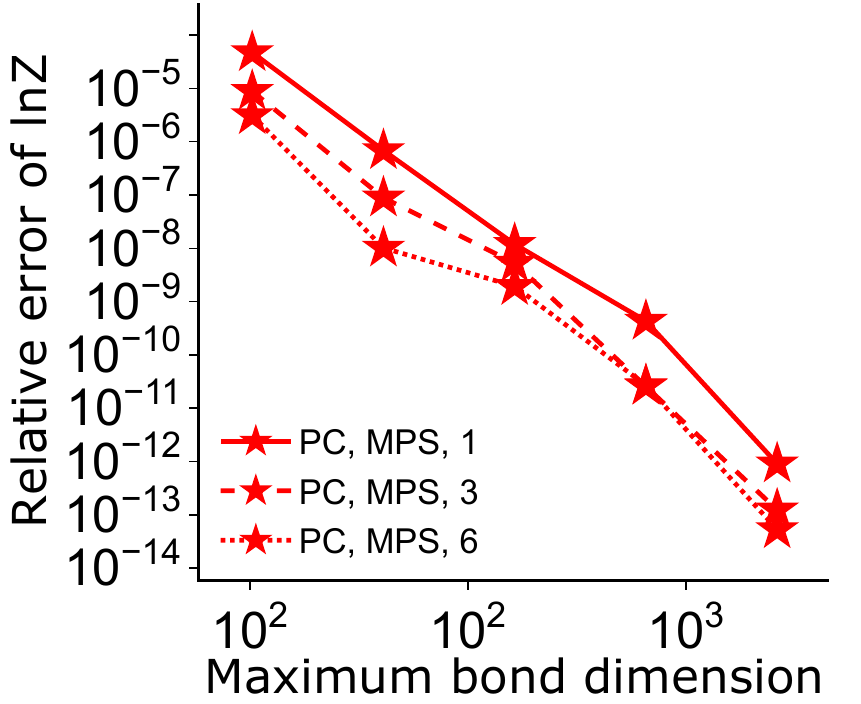}
\label{subfig:bond2}}
\subfloat[$5\times 5 \times 5$, $\beta=0.3$]{
\includegraphics[width=.24\textwidth, keepaspectratio]
{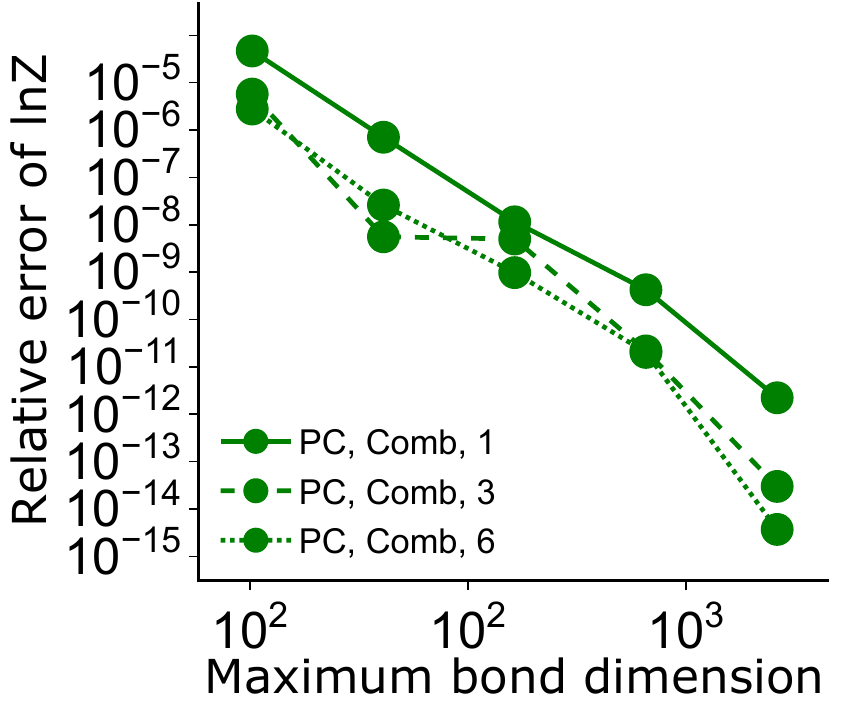}\label{subfig:bond1}}

\subfloat[$6 \times 6 \times 6$, $\beta=0.3$]{\includegraphics[width=.24\textwidth, keepaspectratio]{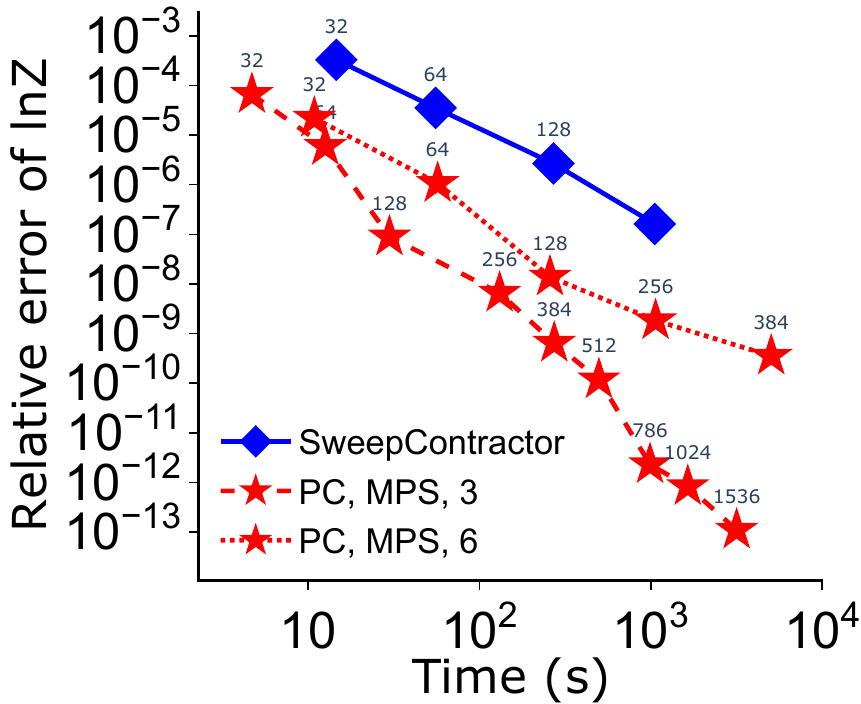}
\label{subfig:3d_ising_5}}
\subfloat[$6 \times 6 \times 6$, $\beta=0.3$]{
\includegraphics[width=.24\textwidth, keepaspectratio]
{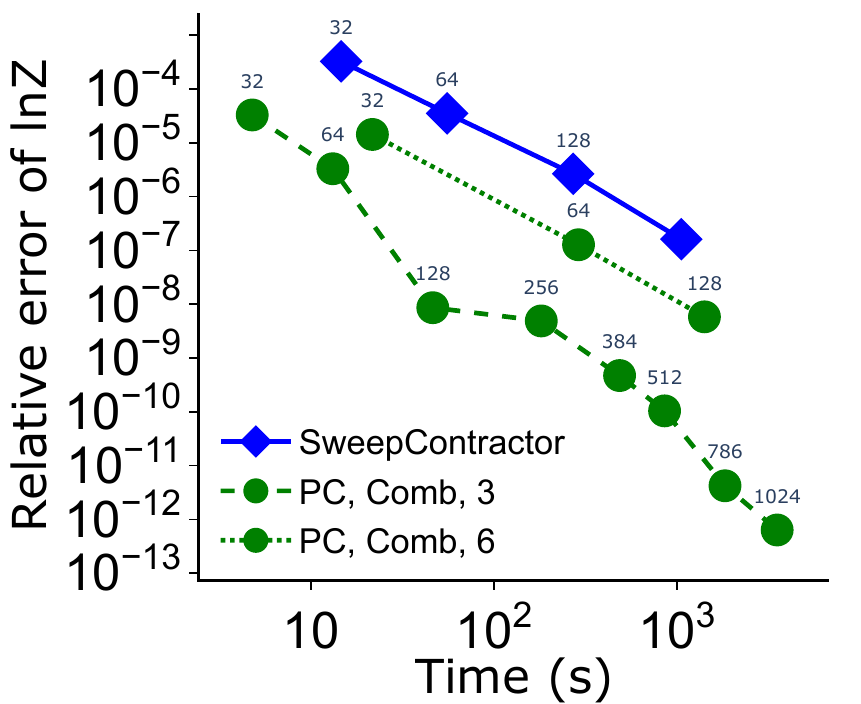}\label{subfig:3d_ising_6}}
\subfloat[$6\times 6 \times 6$, $\beta=0.3$]{\includegraphics[width=.24\textwidth, keepaspectratio]{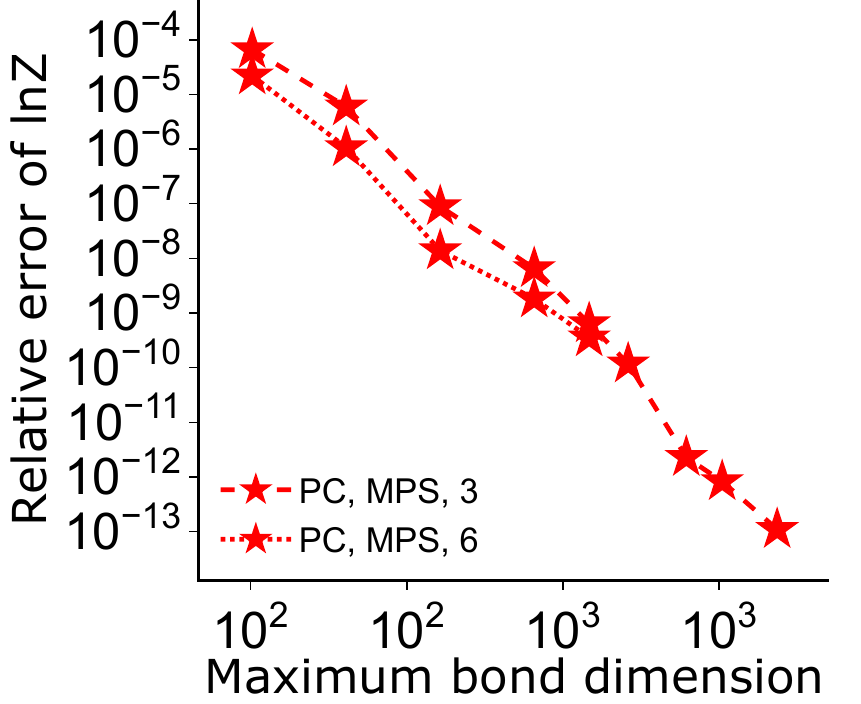}
\label{subfig:bond_666_2}}
\subfloat[$6\times 6 \times 6$, $\beta=0.3$]{
\includegraphics[width=.24\textwidth, keepaspectratio]
{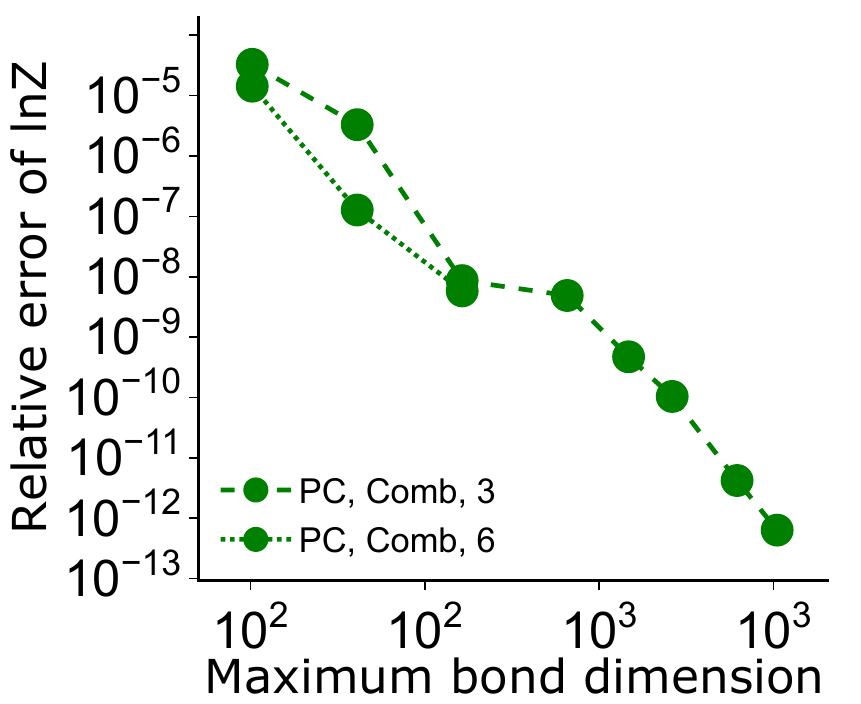}\label{subfig:bond_666}}

\subfloat[$28 \times 28$, $\beta=0.44$]{\includegraphics[width=.24\textwidth, keepaspectratio]{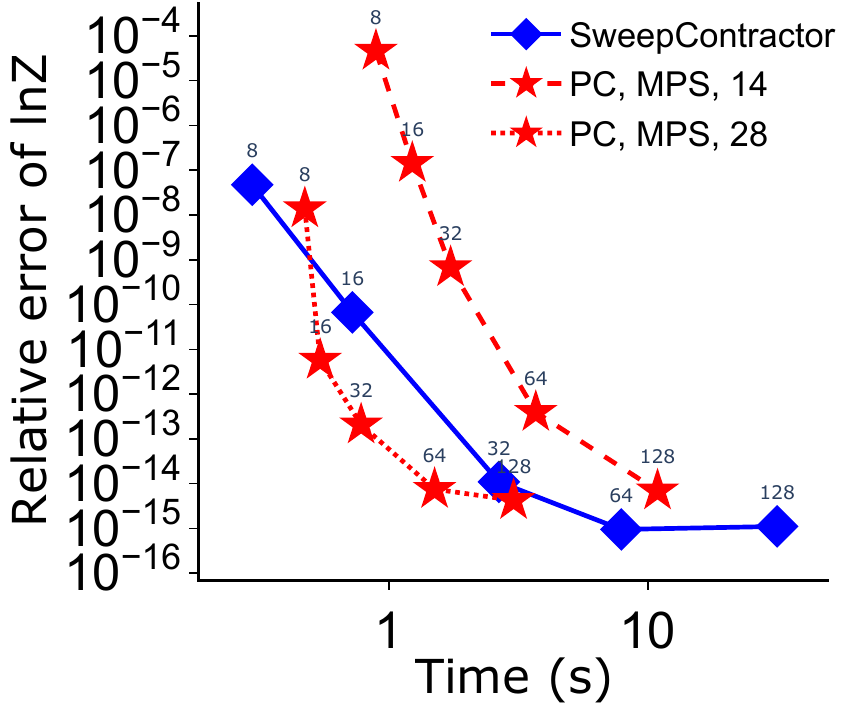}
\label{subfig:2d_ising_2}}
\subfloat[$28 \times 28$, $\beta=0.44$]{
\includegraphics[width=.24\textwidth, keepaspectratio]
{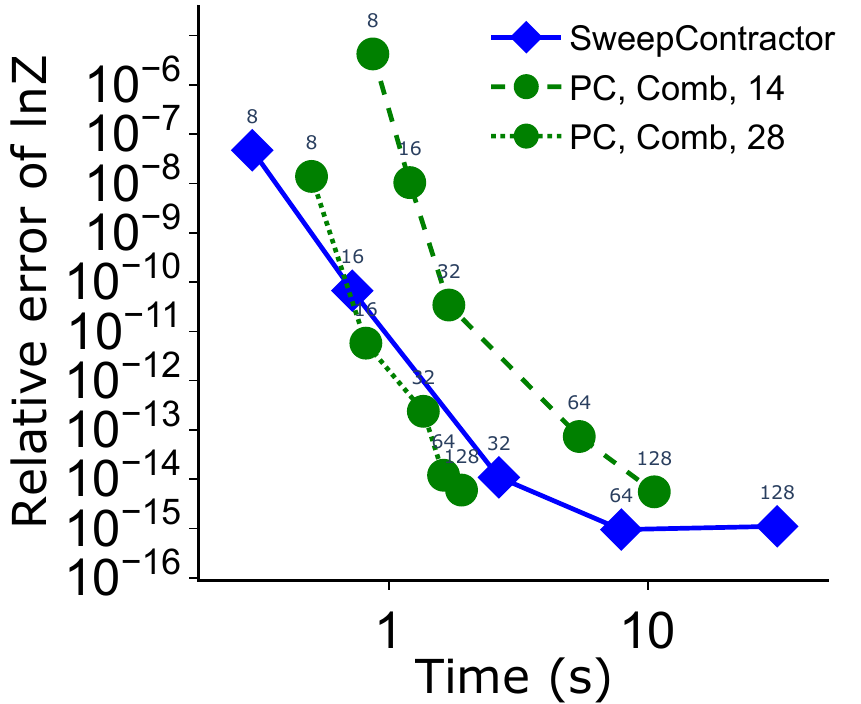}\label{subfig:2d_ising_3}}
\subfloat[$28 \times 28$, $\beta=0.44$]{\includegraphics[width=.24\textwidth, keepaspectratio]{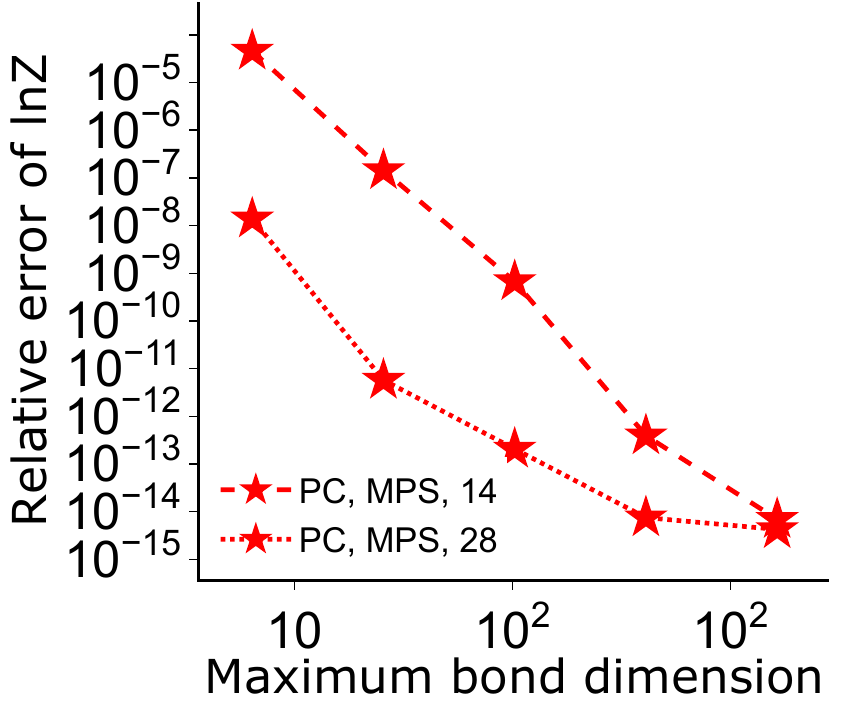}
\label{subfig:bond_2d_2}}
\subfloat[$28 \times 28$, $\beta=0.44$]{
\includegraphics[width=.24\textwidth, keepaspectratio]
{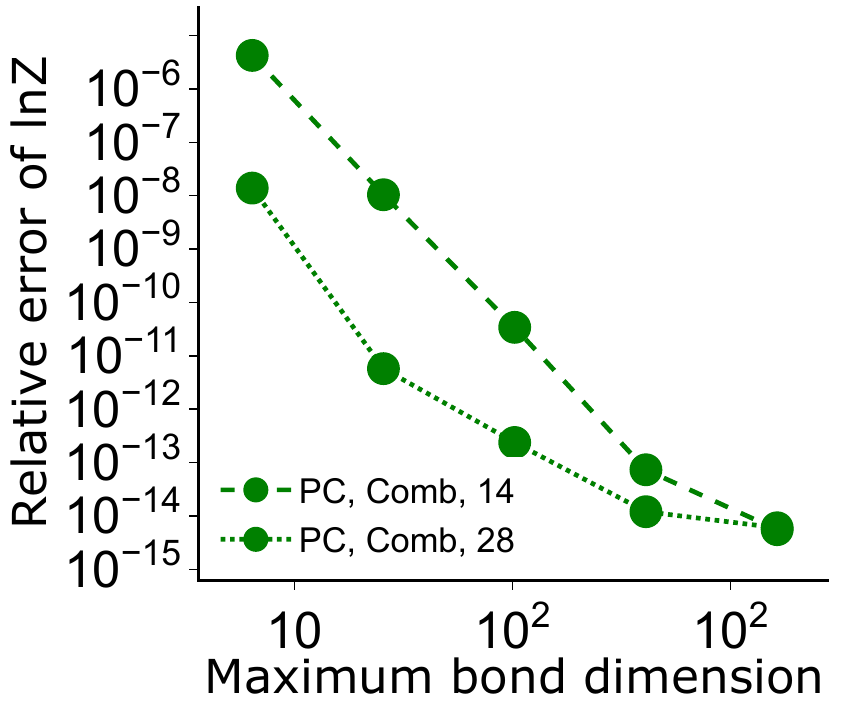}\label{subfig:bond_2d_1}}

\caption{Performance comparison between \texttt{partitioned\_contract}, SweepContractor~\cite{chubb2021general}, CATN~\cite{pan2020contracting}, and hyperoptimized approximate contraction (HAC)~\cite{gray2022hyper} in contracting lattices based on the Ising model. The swap batch size is fixed to be 32 for all experiments. In the legends, ``PC" denotes \texttt{partitioned\_contract}, MPS/Comb denotes the embedding tree ansatz, and the values (1, 3, 5) denote the size of each partition.
In (a)-(f), the number shown on top of each point is the maximum bond size $\chi$.
In CATN, ``Dmax" is an additional input parameter of the algorithm that controls the size of the MPS uncontracted modes. In HAC, $r$ denotes the distance of the spanning trees used in canonicalization.}
\label{fig:3d}
\end{figure}

 \begin{figure}[!ht]
\centering

\subfloat[]{
\includegraphics[width=.32\textwidth, keepaspectratio]
{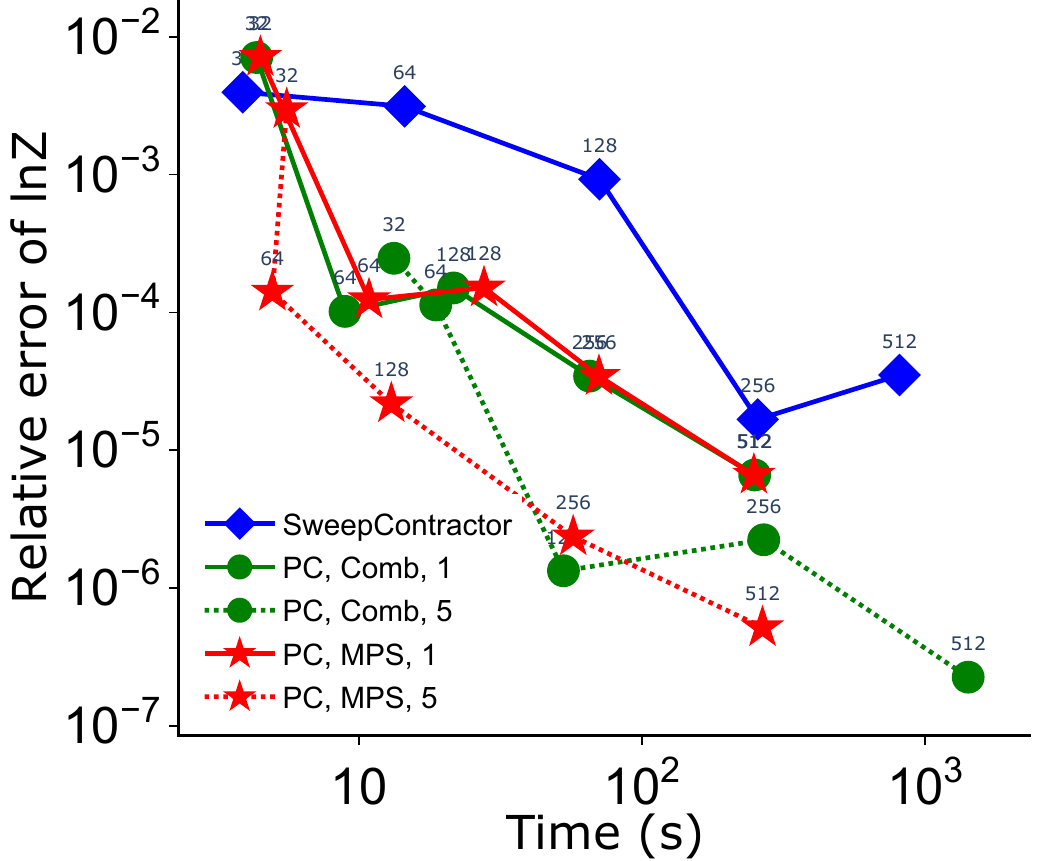}\label{subfig:3d_random_1}}
\subfloat[]{\includegraphics[width=.32\textwidth, keepaspectratio]{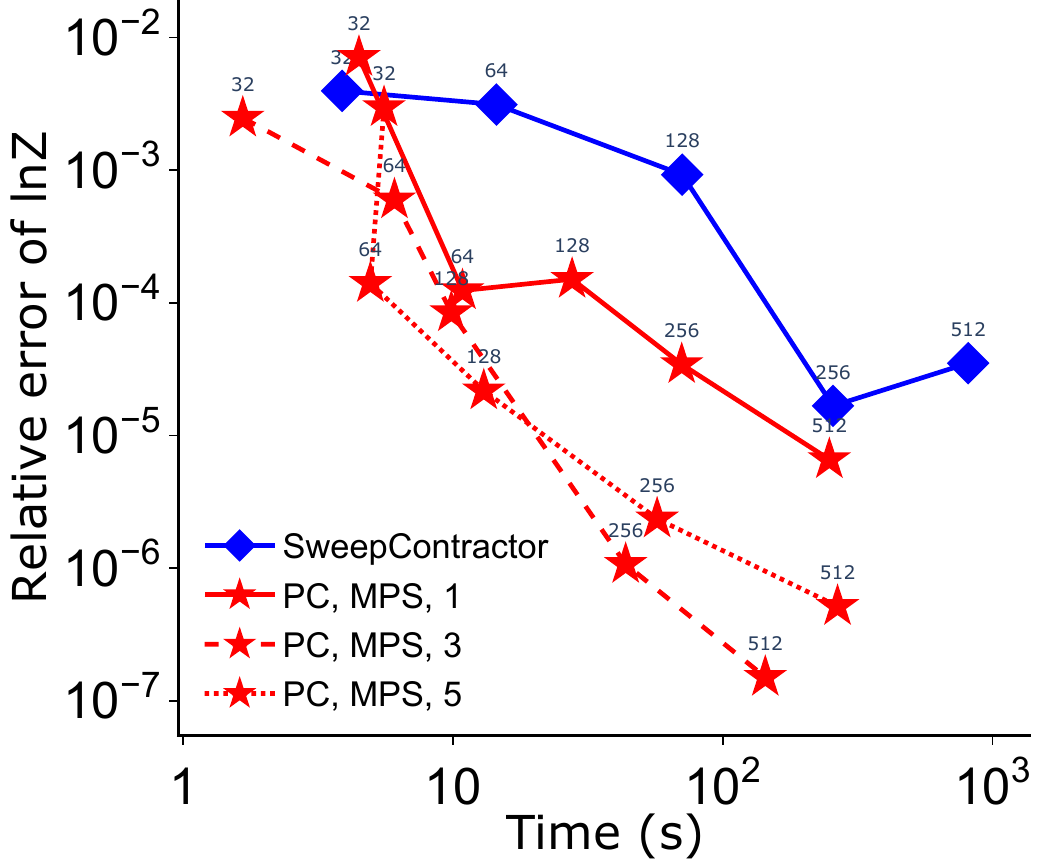}
\label{subfig:3d_random_2}}
\subfloat[]{
\includegraphics[width=.32\textwidth, keepaspectratio]
{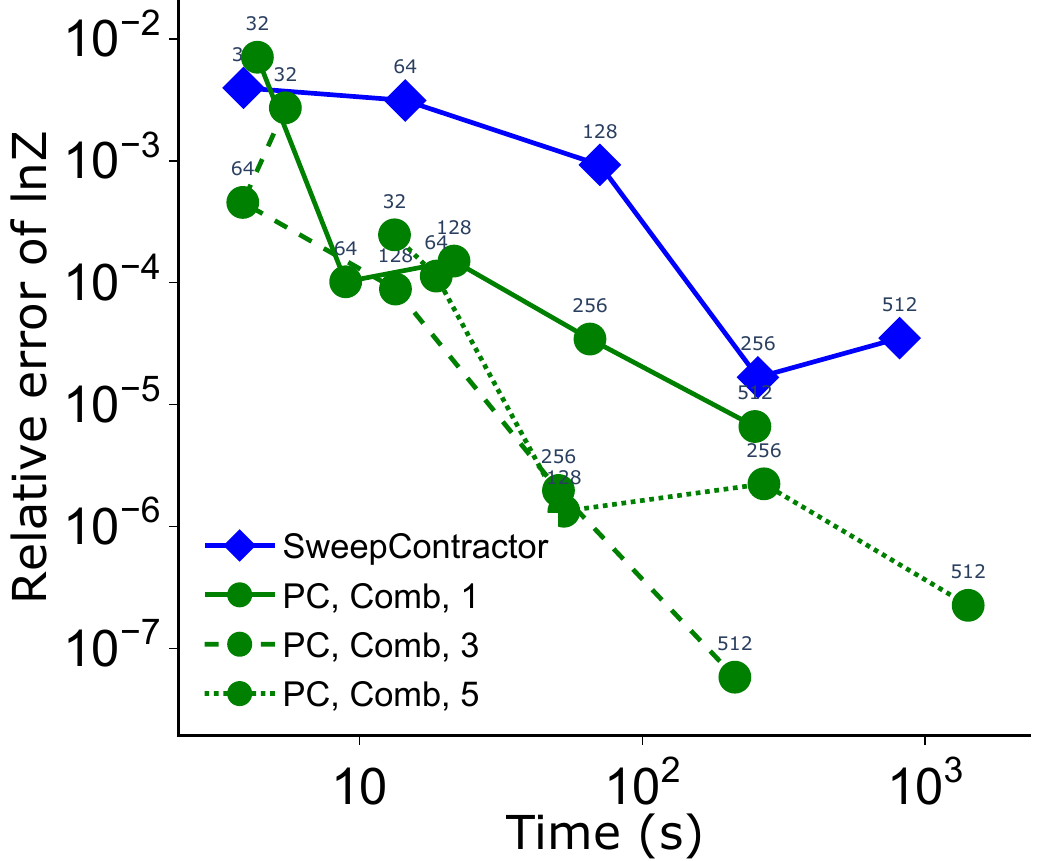}\label{subfig:3d_random_3}}

\subfloat[]{
\includegraphics[width=.24\textwidth, keepaspectratio]
{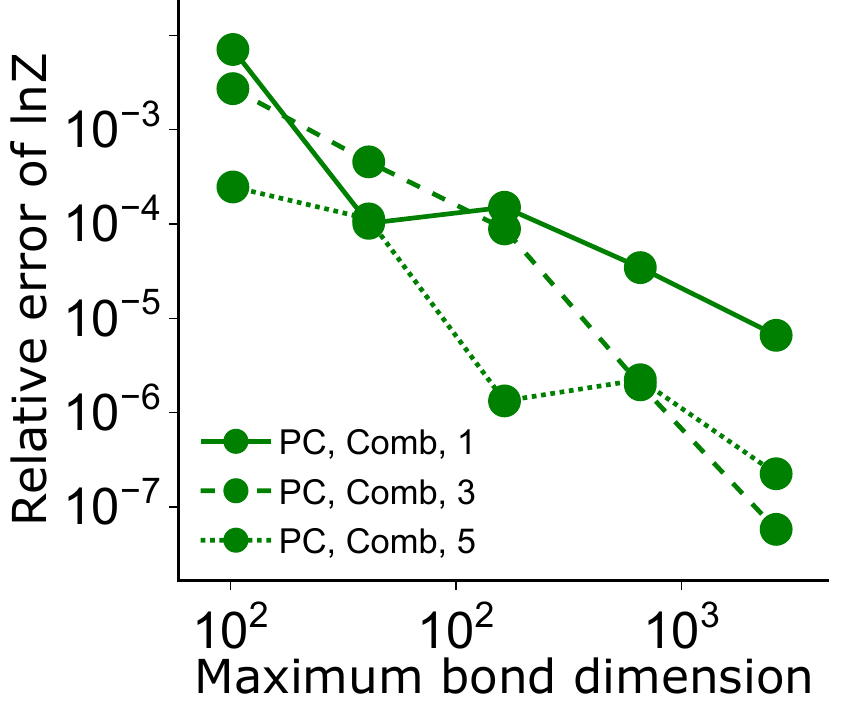}\label{subfig:bond3}}
\subfloat[]{
\includegraphics[width=.24\textwidth, keepaspectratio]
{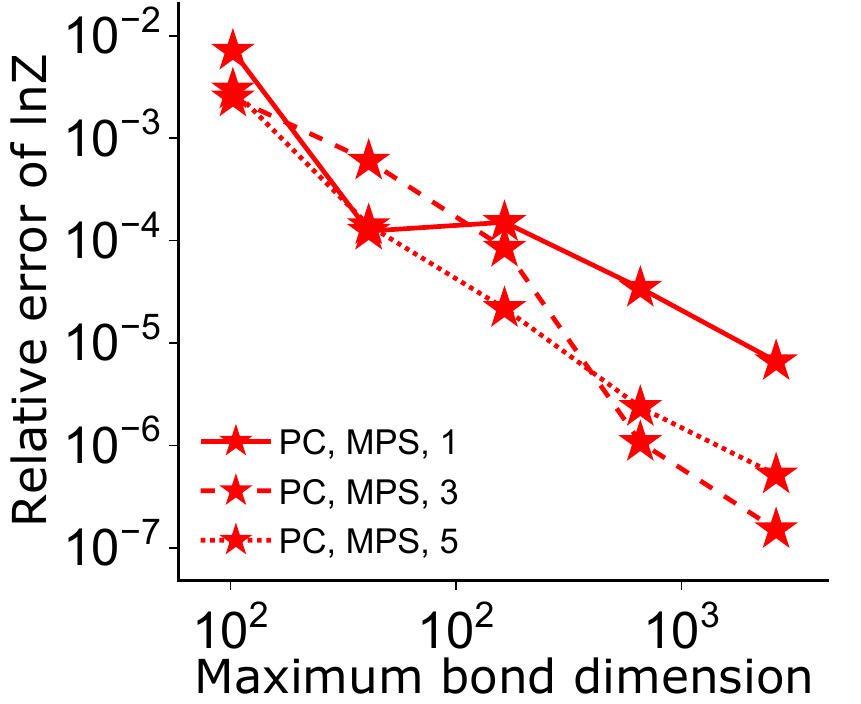}\label{subfig:bond4}}

\caption{Performance comparison between \texttt{partitioned\_contract} and SweepContractor~\cite{chubb2021general} in contracting $5 \times 5 \times 5$ random tensor networks defined on 3D lattices with $\alpha=-0.4$. The swap batch size is fixed to be 32 for all experiments. In the legends, ``PC" denotes \texttt{partitioned\_contract}, MPS/Comb denotes the embedding tree ansatz, and the values (1, 3, 5) denote the size of each partition.
In (a)-(c), the number shown on top of each point is the maximum bond size $\chi$.
}
\label{fig:3d_random}
\end{figure}

 \begin{figure}[!ht]
\centering
\subfloat[Ising Model, $\beta=0.65$]{
\includegraphics[width=.32\textwidth, keepaspectratio]
{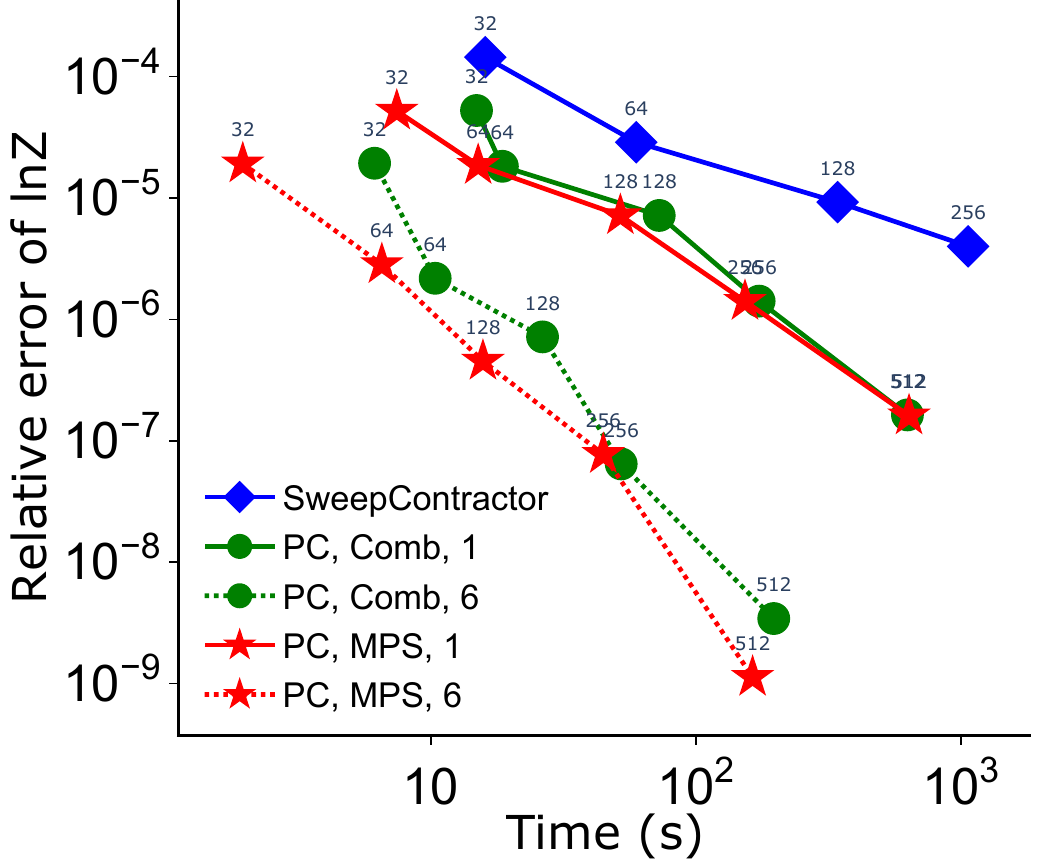}\label{subfig:randreg_ising_1}}
\subfloat[Ising Model, $\beta=0.65$]{\includegraphics[width=.32\textwidth, keepaspectratio]{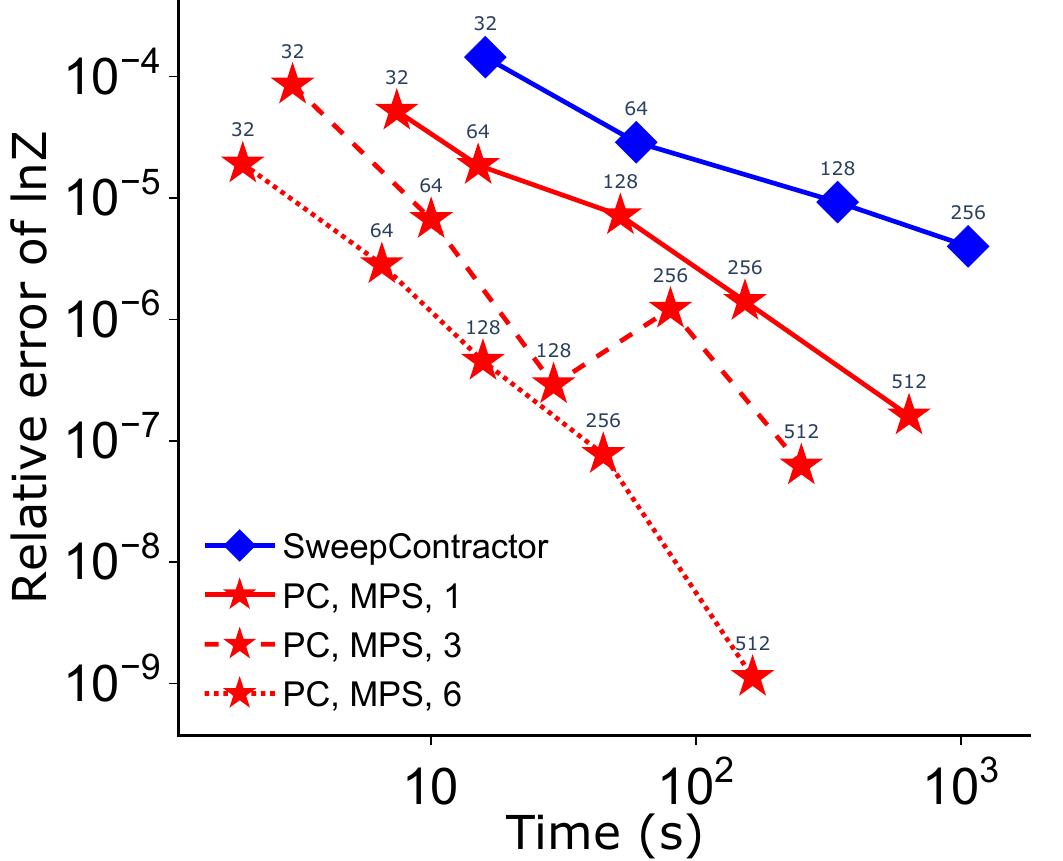}
\label{subfig:randreg_ising_2}}
\subfloat[Ising Model, $\beta=0.65$]{
\includegraphics[width=.32\textwidth, keepaspectratio]
{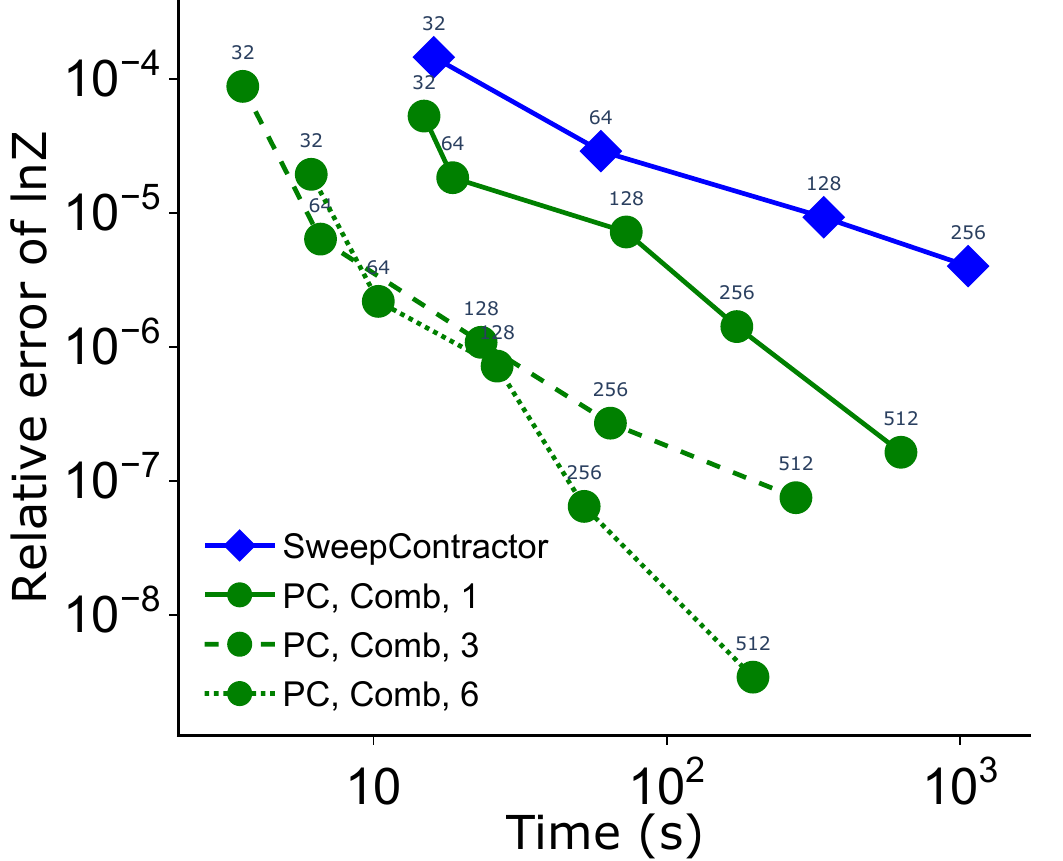}\label{subfig:randreg_ising_3}}

\subfloat[Random Model, $\alpha=-0.2$]{
\includegraphics[width=.32\textwidth, keepaspectratio]
{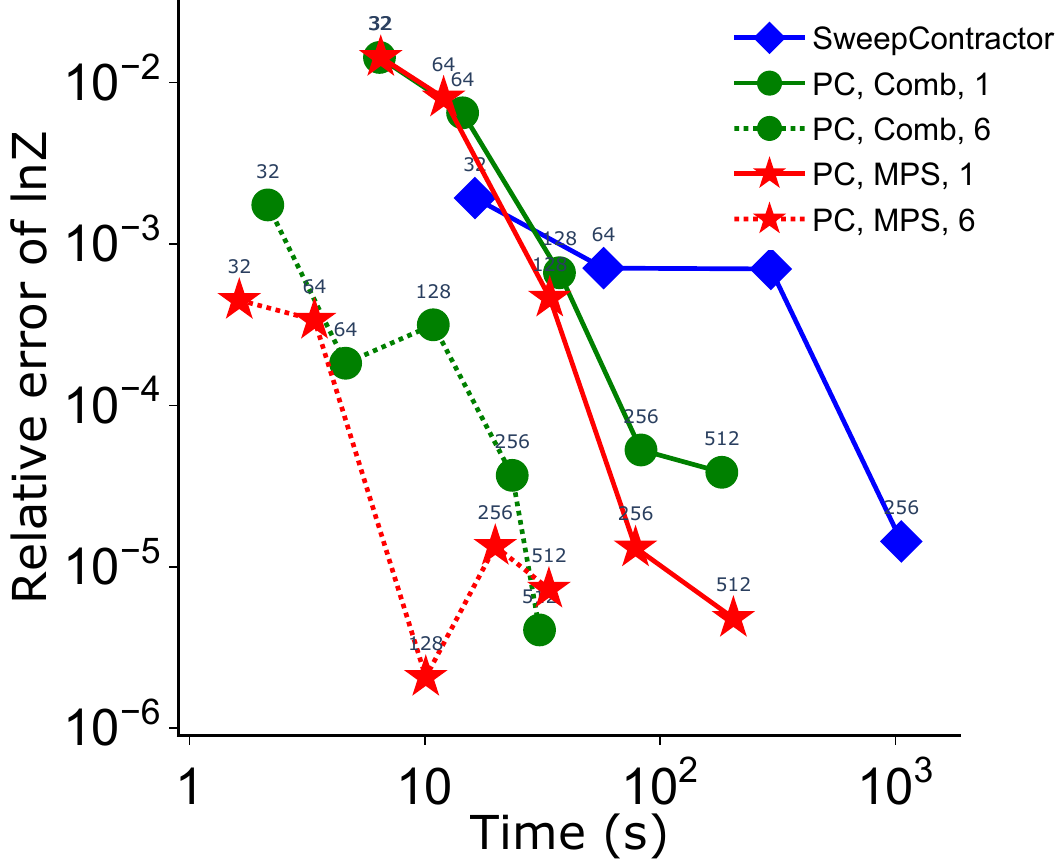}\label{subfig:randreg_random_1}}
\subfloat[Random Model, $\alpha=-0.2$]{\includegraphics[width=.32\textwidth, keepaspectratio]{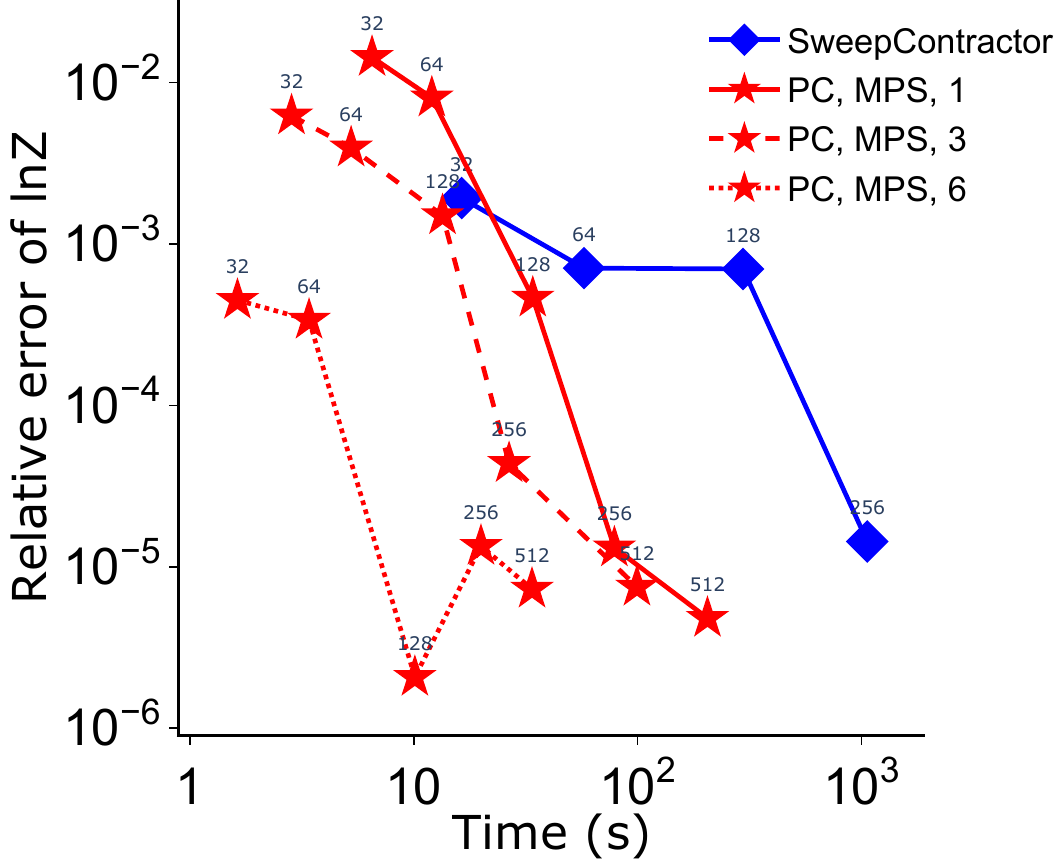}
\label{subfig:randreg_random_2}}
\subfloat[Random Model, $\alpha=-0.2$]{
\includegraphics[width=.32\textwidth, keepaspectratio]
{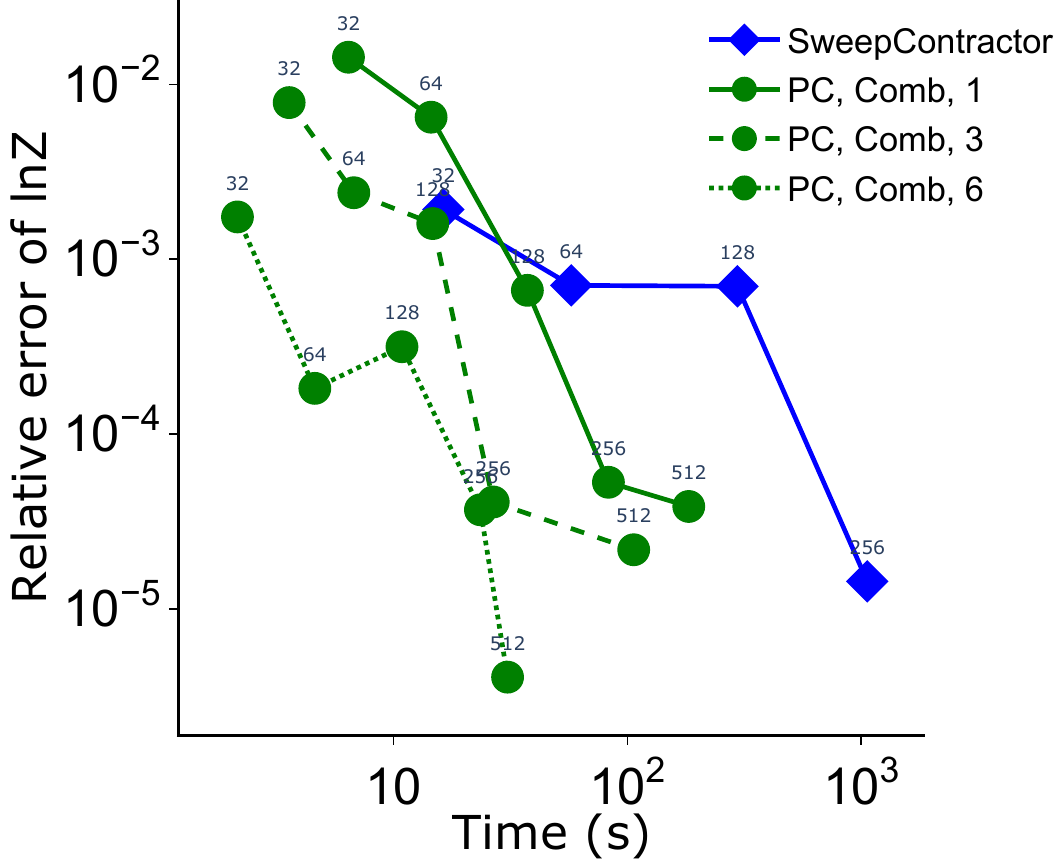}\label{subfig:randreg_random_3}}

\subfloat[Ising, $\beta=0.3$]{
\includegraphics[width=.24\textwidth, keepaspectratio]
{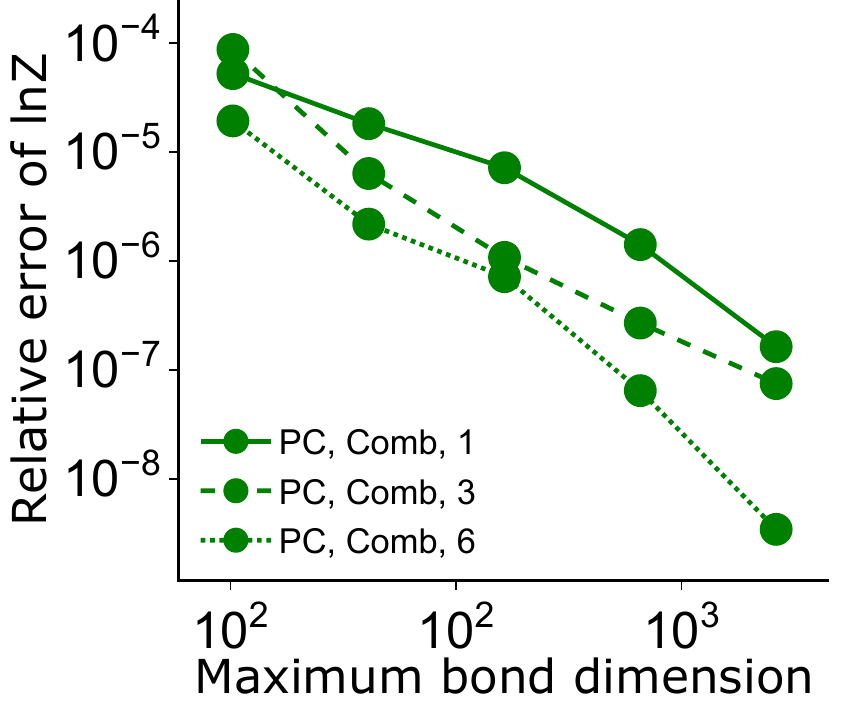}\label{subfig:bond5}}
\subfloat[Ising, $\beta=0.3$]{\includegraphics[width=.24\textwidth, keepaspectratio]{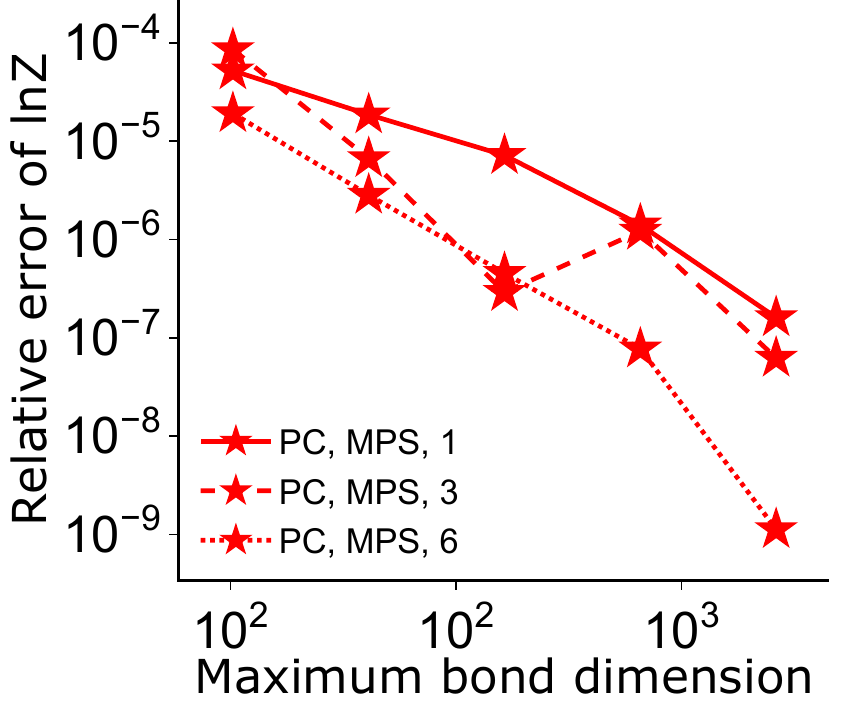}
\label{subfig:bond6}}
\subfloat[Random, $\alpha=-0.4$]{
\includegraphics[width=.24\textwidth, keepaspectratio]
{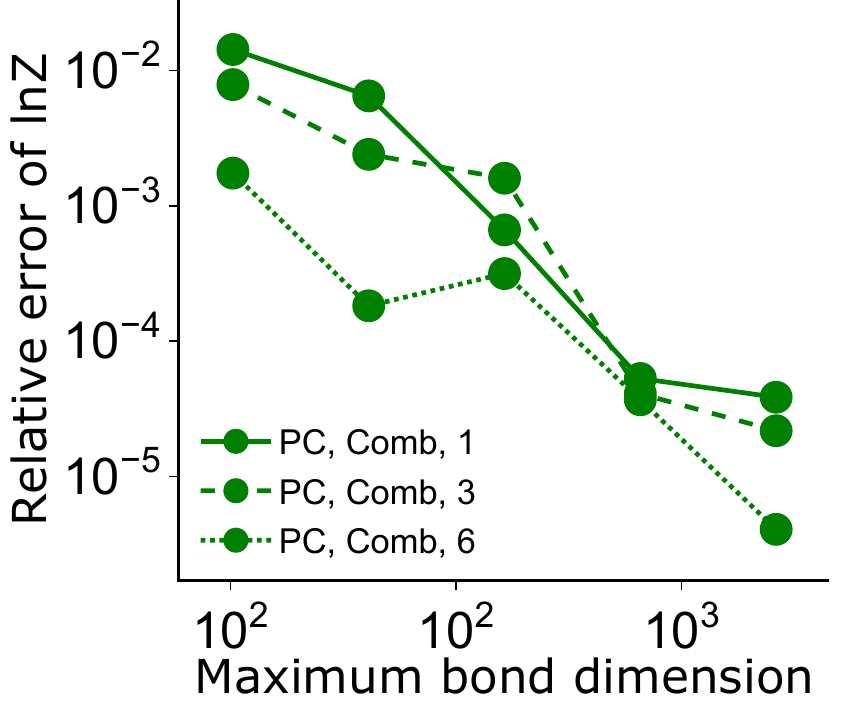}\label{subfig:bond7}}
\subfloat[Random, $\alpha=-0.4$]{
\includegraphics[width=.24\textwidth, keepaspectratio]
{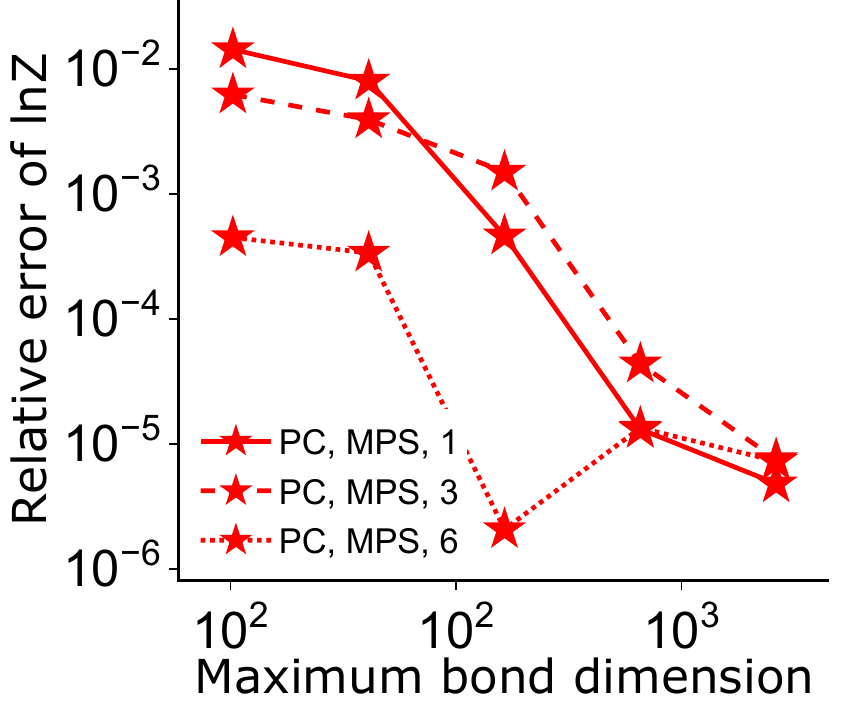}\label{subfig:bond8}}

\caption{Performance comparison between \texttt{partitioned\_contract} and SweepContractor~\cite{chubb2021general} in contracting random regular graphs with degree 3 and 220 vertices.
The swap batch size is fixed to be 32 for all experiments. In the legends, ``PC" denotes \texttt{partitioned\_contract}, MPS/Comb denotes the embedding tree ansatz, and the number (1, 3, 6) denotes the size of each partition.
In (a)-(f),
the number shown on top of each point is the maximum contracted bond size $\chi$.
}
\label{fig:randreg}
\end{figure}

\paragraph{Impact of the environment size on contraction accuracy and efficiency}\label{subsec:exp_environ}

We explore the impact of the environment size on the accuracy and efficiency of contracting tensor networks defined on 2D and 3D lattices as well as random regular graphs, and the results are shown in \cref{fig:3d}, \cref{fig:3d_random}, and \cref{fig:randreg}.

In both lattices and random regular graphs, we employ the maximally-unbalanced partial contraction path for the contraction process. This path initiates from one partition and progressively combines the previously-contracted section with a new partition following a linear sequence of the partitions.
For 3D lattices, each partition represents either a portion of a fiber or an entire fiber of the lattice. The contraction path is determined through a row- or column-wise traversal of the 2D array resulting from partitioning the 3D lattice into fibers.
Regarding random regular graphs, we draw inspiration from \cite{ibrahim2022constructing} to construct the contraction path using a linear ordering of vertices. We achieve this by first employing recursive bisection to generate the linear ordering of all the vertices. Then, we sequentially include a partition consisting of a specified number of tensors into the contraction path, following the order of traversal in the vertex ordering.

When contracting tensor networks defiend on lattices, the results presented in \cref{fig:3d,fig:3d_random} reveal that employing a larger parition size (3 or 5 for the $5 \times 5 \times 5$ 3D grid and 14 for the $28 \times 28$ 2D grid) leads to both faster and more accurate contractions when compared to the base condition where each partition contains only one tensor.
The improved efficiency arises from using larger partitions, which reduces the number of executions of the density matrix algorithm, offsetting any overhead from using larger environments. 
Regarding accuracy, we can see from \cref{subfig:bond2,subfig:bond1,subfig:bond_666_2,subfig:bond_666,subfig:bond_2d_2,subfig:bond_2d_1} that under the same maximum contracted bond size, utilizing a larger partition size generally yields lower relative errors for both MPS and comb structures. This observation validates the efficacy of the environment in enhancing accuracy.
See the next section for a comparison of the MPS and comb structures.

Regarding random regular graphs, the results displayed in \cref{subfig:randreg_ising_2,subfig:randreg_ising_3,subfig:randreg_random_2,subfig:randreg_random_3,subfig:bond5,subfig:bond6,subfig:bond7,subfig:bond8} indicate that using a partition size of 6 results in the best combination of efficiency and accuracy.
To summarize, employing a larger partition leads to a larger environment size, generally reducing the contraction error under the same rank. However, when it comes to efficiency, the optimal partition size depends on the specific problem. Factors such as the number of executions of the density matrix algorithm to be performed and the cost of forming the density matrix under different environment sizes need to be taken into consideration to determine the most suitable partition size.

\paragraph{Comparison between the MPS and the comb structure}

In this section we discuss the relative merits of using the MPS and comb tree ansatzes. When contracting lattices, the results in \cref{subfig:3d_ising_1,subfig:3d_ising_4,subfig:2d_ising_1,subfig:3d_random_1} demonstrate that both MPS and comb structures exhibit similar performance when the maximum bond size is small. However, as the maximum bond size increases, using the comb ansatz becomes slower in comparison to MPS.
On the other hand, when contracting random regular graphs, the results in \cref{subfig:randreg_ising_1,subfig:randreg_random_1} reveal that both structures display similar levels of accuracy and efficiency.
In summary, both MPS and comb binary tree structures perform similarly in terms of accuracy. However, the efficiency of the comb structure may lag behind MPS, particularly when dealing with a large rank. This disparity in performance is attributed to the presence of large tensors with size $\chi^3$ in the comb ansatz.

\paragraph{Comparison among \texttt{partitioned\_contract} and the baselines}

 \begin{figure}[!ht]
\centering
\subfloat[Tensor diagram of $|{\psi}\rangle$]{
\includegraphics[width=.25\textwidth, keepaspectratio]
{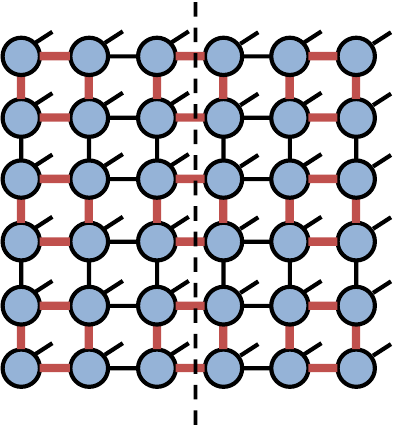}\label{subfig:circuit_1}}
\hspace{15mm}
\subfloat[Benchmark results]{\includegraphics[width=.43\textwidth, keepaspectratio]{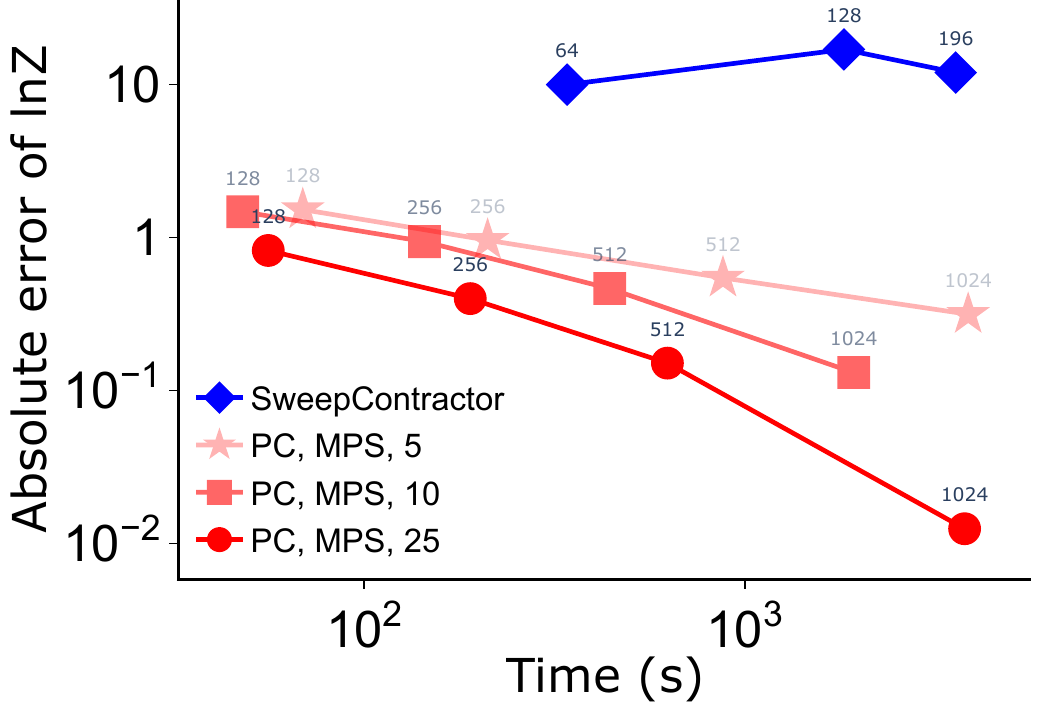}
\label{subfig:circuit_2}}

\caption{An illustration and benchmark results for approximately contracting a 2D random quantum circuit tensor network with 6 layers of gates. (a) The tensor diagram visualization of the PEPS $|{\psi}\rangle$ that is the output of the random quantum circuit simulation with 6 layers of gates. Each black edge denotes a mode with size 2, and each thick red edge denotes a mode with size 4. The dashed line denotes a graph cut with the cut mode size being $2^{12}$. (b) Performance comparison between \texttt{partitioned\_contract} and SweepContractor~\cite{chubb2021general} in contracting the tensor network that represents $\langle\psi|\psi\rangle$.
The swap batch size is fixed to be 8 for all experiments. In the legends, ``PC" denotes \texttt{partitioned\_contract}, and the number (5, 10, 25) denotes the size of each partition.
The number shown on top of each point is the maximum contracted bond size $\chi$.
}
\label{fig:circuit}
\end{figure}

Here we compare our proposed \texttt{partitioned\_contract} algorithm, along with the CATN algorithm~\cite{pan2020contracting}, SweepContractor~\cite{chubb2021general}, and hyper-optimizated approximate contraction (HAC)~\cite{gray2022hyper}, on contracting tensor networks defined on lattices and random regular graphs.

As demonstrated in \cref{subfig:3d_ising_1,subfig:3d_ising_4,subfig:2d_ising_1,subfig:3d_random_1,subfig:randreg_ising_1}, our algorithm consistently outperforms CATN in terms of efficiency across all levels of relative error. When compared to SweepContractor, our algorithm shows superior efficiency for 3D lattices and random graphs, while both methods perform similarly on 2D grids. This is expected, as SweepContractor is specifically optimized for planar graphs.
Notably, when contracting a 3D lattice tensor network based on the Ising model, \texttt{partitioned\_contract} achieves a 9.2X speed-up compared to both CATN and SweepContractor when reaching a relative error of less than $10^{-9}$. Similarly, when contracting a tensor network with a random regular graph structure based on the Ising model, \texttt{partitioned\_contract} achieves a 52.4X speed-up compared to SweepContractor when achieving a relative error of less than $10^{-5}$.

As shown from the results in  \cref{subfig:3d_ising_1,subfig:3d_ising_4,subfig:2d_ising_1}, HAC is a more efficient baseline compared to CATN and SweepContractor in contracting lattices based on the Ising model.
On both $5 \times 5 \times 5$ and $28 \times 28$ grids, both HAC and \texttt{partitioned\_contract} can reach pretty high accuracy with similar efficiency. On the $6 \times 6 \times 6$ grid, the results indicate that \texttt{partitioned\_contract} can reach higher accuracy and is more efficient when the relative error is smaller than $10^{-10}$. This can be attributed to HAC generating high-order intermediate tensors, which slows it down and exceeds memory capacity when tested with a rank of 256. In contrast, \texttt{partitioned\_contract} performs better due to its use of the MPS ansatz, making it more memory-efficient and effective at higher ranks.


In terms of memory usage, let $s$ denote the size of dimensions in the input graph, $\chi \geq s$ the maximum bond dimension, and $D_{\text{max}} \geq s$ an additional parameter controlling the size of uncontracted MPS modes in CATN. The largest intermediate tensor generated in \texttt{partitioned\_contract} has a size of $s\chi^2$ for the MPS ansatz and $\chi^3$ for the comb ansatz. 
For CATN, the largest tensor size is $D_{\text{max}}r^2$.
SweedContractor and \texttt{partitioned\_contract} with the MPS ansatz share a similar memory footprint of $s\chi^2$.
For HAC, the largest tensor size can be $\chi^d$ where $d$ is the maximum order of the intermediate tensors. To summarize, \texttt{partitioned\_contract} with the MPS ansatz has memory usage that is better than other algorithms as a function of bond dimension.

We also benchmark our algorithm against SweepContractor~\cite{chubb2021general} in contracting the random quantum circuit tensor network outlined in \cref{subsec:imple}.
The output state of the circuit can be represented as a PEPS that is visualized in \cref{subfig:circuit_1}. When accurately transforming $|{\psi}\rangle$ into an MPS, the maximum bond size will be at least $2^{12}=4096$, as inferred from the graph cut in \cref{subfig:circuit_1}. As illustrated in \cref{subfig:circuit_2}, our method manages to approximate the tensor network contraction with a bond size of $2^{10}=1024$, achieving an absolute error of $10^{-2}$. In contrast, SweepContractor fails to converge within the same amount of time.
These significant speed improvements clearly demonstrate the efficiency of our approach over the compared algorithms.

\section{Conclusion}
\label{sec:conclu}

In this work we introduced an efficient algorithm called \texttt{partitioned\_contract} to contract tensor networks with arbitrary structures. 
The algorithm has the flexibility to incorporate a large portion
of the environment when performing low-rank approximations, and includes a cost-efficient density matrix algorithm for approximating a general tensor network into a tree structure whose computational cost is asymptotically upper-bounded by that
of the standard algorithm that uses canonicalization.
Experimental results indicate that the proposed technique outperforms previously proposed approximate tensor network contraction algorithms for multiple problems 
in terms of both accuracy and efficiency.

We emphasize several potential future directions that need further
exploration and investigation.
Firstly, the partitioned\_contract algorithm assumes that both a partitioning of the input tensor network and a contraction path over these partitions are provided. There remains an opportunity to explore efficient methods for finding optimal partitionings and contraction paths for \texttt{partitioned\_contract}, which could further improve its performance.
Additionally, there is scope for investigating how the canonicalization-based algorithm for tree approximation can be accelerated. One possibility is to leverage tensor network sketching techniques~\cite{ma2022cost, ma2021fast, rakhshan2020tensorized, ahle2020oblivious} to speed up randomized SVD~\cite{halko2011finding}, which may enhance the efficiency of the tree approximation process. Another is to use variational or fitting algorithms for approximately contracting network partitions, which can have better scaling than density matrix and canonicalization-based algorithms at the expense of potentially requiring multiple iterations to converge~\cite{verstraete2004renormalization,stoudenmire2010minimally}.
Finally, integrating the proposed algorithm into automatic differentiation libraries \cite{ma2020autohoot, gray2018quimb, liao2019differentiable} could be highly beneficial. This integration would enable the algorithm to be used in gradient-based optimization algorithms for tensor networks, thereby expanding its utility in various optimization tasks.

\section*{Acknowledgments}

M.F. and M.S. are grateful for ongoing support through the Flatiron Institute, a division of the Simons Foundation. L.M. acknowledges support through the Flatiron Institute internship program through which this work was initiated. The work of L.M. and E.S. was also supported by the US National Science Foundation (NSF) grant \#1942995 and the Department of Energy (DOE)
Advanced Scientific Computing Research program via award
DE-SC0023483.

\bibliographystyle{plainnat}
\bibliography{main}

\begin{thebibliography}{83}
\providecommand{\natexlab}[1]{#1}
\providecommand{\url}[1]{\texttt{#1}}
\expandafter\ifx\csname urlstyle\endcsname\relax
  \providecommand{\doi}[1]{doi: #1}\else
  \providecommand{\doi}{doi: \begingroup \urlstyle{rm}\Url}\fi

\bibitem[Ahle et~al.(2020)Ahle, Kapralov, Knudsen, Pagh, Velingker, Woodruff,
  and Zandieh]{ahle2020oblivious}
Thomas~D. Ahle, Michael Kapralov, Jakob B.~T. Knudsen, Rasmus Pagh, Ameya
  Velingker, David~P. Woodruff, and Amir Zandieh.
\newblock \emph{Oblivious Sketching of High-Degree Polynomial Kernels}, page
  141–160.
\newblock Society for Industrial and Applied Mathematics, January 2020.
\newblock ISBN 9781611975994.
\newblock \doi{10.1137/1.9781611975994.9}.
\newblock URL \url{https://doi.org/10.1137/1.9781611975994.9}.

\bibitem[Alkabetz and Arad(2021)]{alkabetz2021tensor}
R.~Alkabetz and I.~Arad.
\newblock Tensor networks contraction and the belief propagation algorithm.
\newblock \emph{Physical Review Research}, 3\penalty0 (2), April 2021.
\newblock ISSN 2643-1564.
\newblock \doi{10.1103/physrevresearch.3.023073}.
\newblock URL \url{https://doi.org/10.1103/PhysRevResearch.3.023073}.

\bibitem[Arora et~al.(2009)Arora, Rao, and Vazirani]{arora2009expander}
Sanjeev Arora, Satish Rao, and Umesh Vazirani.
\newblock Expander flows, geometric embeddings and graph partitioning.
\newblock \emph{Journal of the ACM}, 56\penalty0 (2):\penalty0 1–37, April
  2009.
\newblock ISSN 1557-735X.
\newblock \doi{10.1145/1502793.1502794}.
\newblock URL \url{https://doi.org/10.1145/1502793.1502794}.

\bibitem[Arute et~al.(2019)Arute, Arya, Babbush, Bacon, Bardin, Barends,
  Biswas, Boixo, Brandao, Buell, et~al.]{arute2019quantum}
Frank Arute, Kunal Arya, Ryan Babbush, Dave Bacon, Joseph~C Bardin, Rami
  Barends, Rupak Biswas, Sergio Boixo, Fernando~GSL Brandao, David~A Buell,
  et~al.
\newblock Quantum supremacy using a programmable superconducting processor.
\newblock \emph{Nature}, 574\penalty0 (7779):\penalty0 505–510, October 2019.
\newblock ISSN 1476-4687.
\newblock \doi{10.1038/s41586-019-1666-5}.
\newblock URL \url{https://doi.org/10.1038/s41586-019-1666-5}.

\bibitem[Bauernfeind et~al.(2017)Bauernfeind, Zingl, Triebl, Aichhorn, and
  Evertz]{bauernfeind2017fork}
Daniel Bauernfeind, Manuel Zingl, Robert Triebl, Markus Aichhorn, and Hans~Gerd
  Evertz.
\newblock Fork tensor-product states: Efficient multiorbital real-time {DMFT}
  solver.
\newblock \emph{Physical Review X}, 7\penalty0 (3), July 2017.
\newblock ISSN 2160-3308.
\newblock \doi{10.1103/physrevx.7.031013}.
\newblock URL \url{https://doi.org/10.1103/PhysRevX.7.031013}.

\bibitem[Bezanson et~al.(2017)Bezanson, Edelman, Karpinski, and
  Shah]{bezanson2017julia}
Jeff Bezanson, Alan Edelman, Stefan Karpinski, and Viral~B. Shah.
\newblock Julia: A fresh approach to numerical computing.
\newblock \emph{SIAM Review}, 59\penalty0 (1):\penalty0 65–98, January 2017.
\newblock ISSN 1095-7200.
\newblock \doi{10.1137/141000671}.
\newblock URL \url{https://doi.org/10.1137/141000671}.

\bibitem[Bezrukov et~al.(2000)Bezrukov, Chavez, Harper, R\"{o}ttger, and
  Schroeder]{bezrukov2000congestion}
S.L. Bezrukov, J.D. Chavez, L.H. Harper, M.~R\"{o}ttger, and U.-P. Schroeder.
\newblock The congestion of n-cube layout on a rectangular grid.
\newblock \emph{Discrete Mathematics}, 213\penalty0 (1–3):\penalty0 13–19,
  February 2000.
\newblock ISSN 0012-365X.
\newblock \doi{10.1016/s0012-365x(99)00162-4}.
\newblock URL \url{https://doi.org/10.1016/S0012-365X(99)00162-4}.

\bibitem[Biamonte et~al.(2015)Biamonte, Morton, and Turner]{biamonte2015tensor}
Jacob~D. Biamonte, Jason Morton, and Jacob Turner.
\newblock Tensor network contractions for \#{SAT}.
\newblock \emph{Journal of Statistical Physics}, 160\penalty0 (5):\penalty0
  1389–1404, June 2015.
\newblock ISSN 1572-9613.
\newblock \doi{10.1007/s10955-015-1276-z}.
\newblock URL \url{https://doi.org/10.1007/s10955-015-1276-z}.

\bibitem[Bienstock(1990)]{bienstock1990embedding}
Dan Bienstock.
\newblock On embedding graphs in trees.
\newblock \emph{Journal of Combinatorial Theory, Series B}, 49\penalty0
  (1):\penalty0 103–136, June 1990.
\newblock ISSN 0095-8956.
\newblock \doi{10.1016/0095-8956(90)90066-9}.
\newblock URL \url{https://doi.org/10.1016/0095-8956(90)90066-9}.

\bibitem[Boixo et~al.(2018)Boixo, Isakov, Smelyanskiy, Babbush, Ding, Jiang,
  Bremner, Martinis, and Neven]{boixo2018characterizing}
Sergio Boixo, Sergei~V. Isakov, Vadim~N. Smelyanskiy, Ryan Babbush, Nan Ding,
  Zhang Jiang, Michael~J. Bremner, John~M. Martinis, and Hartmut Neven.
\newblock Characterizing quantum supremacy in near-term devices.
\newblock \emph{Nature Physics}, 14\penalty0 (6):\penalty0 595–600, April
  2018.
\newblock ISSN 1745-2481.
\newblock \doi{10.1038/s41567-018-0124-x}.
\newblock URL \url{https://doi.org/10.1038/s41567-018-0124-x}.

\bibitem[Charikar et~al.(2010)Charikar, Hajiaghayi, Karloff, and
  Rao]{charikar2010}
Moses Charikar, Mohammad~Taghi Hajiaghayi, Howard Karloff, and Satish Rao.
\newblock $\ell_2^2$ spreading metrics for vertex ordering problems.
\newblock \emph{Algorithmica}, 56:\penalty0 577--604, 2010.

\bibitem[Chen et~al.(2024)Chen, Jiang, Hangleiter, and Schuch]{chen2024sign}
Jielun Chen, Jiaqing Jiang, Dominik Hangleiter, and Norbert Schuch.
\newblock Sign problem in tensor network contraction.
\newblock 2024.
\newblock \doi{10.48550/arXiv.2404.19023}.
\newblock URL \url{https://doi.org/10.48550/arXiv.2404.19023}.

\bibitem[Chepiga and White(2019)]{chepiga2019comb}
Natalia Chepiga and Steven~R. White.
\newblock Comb tensor networks.
\newblock \emph{Physical Review B}, 99\penalty0 (23), June 2019.
\newblock ISSN 2469-9969.
\newblock \doi{10.1103/physrevb.99.235426}.
\newblock URL \url{https://doi.org/10.1103/PhysRevB.99.235426}.

\bibitem[Chubb(2021)]{chubb2021general}
Christopher~T Chubb.
\newblock General tensor network decoding of {2D} pauli codes.
\newblock 2021.
\newblock \doi{10.48550/arXiv.2101.04125}.
\newblock URL \url{https://doi.org/10.48550/arXiv.2101.04125}.

\bibitem[Cichocki(2014)]{cichocki2014tensor}
Andrzej Cichocki.
\newblock Tensor networks for big data analytics and large-scale optimization
  problems.
\newblock 2014.
\newblock \doi{10.48550/ARXIV.1407.3124}.
\newblock URL \url{https://doi.org/10.48550/ARXIV.1407.3124}.

\bibitem[Damm et~al.(2002)Damm, Holzer, and McKenzie]{damm2002complexity}
Carsten Damm, Markus Holzer, and Pierre McKenzie.
\newblock The complexity of tensor calculus.
\newblock \emph{computational complexity}, 11\penalty0 (1):\penalty0 54--89,
  2002.
\newblock \doi{10.1109/ccc.2000.856737}.
\newblock URL \url{https://doi.org/10.1109/CCC.2000.856737}.

\bibitem[Devanur et~al.(2006)Devanur, Khot, Saket, and
  Vishnoi]{devanur2006integrality}
Nikhil~R. Devanur, Subhash~A. Khot, Rishi Saket, and Nisheeth~K. Vishnoi.
\newblock Integrality gaps for sparsest cut and minimum linear arrangement
  problems.
\newblock In \emph{Proceedings of the thirty-eighth annual ACM symposium on
  Theory of Computing}, STOC06, page 537–546. ACM, May 2006.
\newblock \doi{10.1145/1132516.1132594}.
\newblock URL \url{https://doi.org/10.1145/1132516.1132594}.

\bibitem[Dhulipala et~al.(2016)Dhulipala, Kabiljo, Karrer, Ottaviano, Pupyrev,
  and Shalita]{dhulipala2016compressing}
Laxman Dhulipala, Igor Kabiljo, Brian Karrer, Giuseppe Ottaviano, Sergey
  Pupyrev, and Alon Shalita.
\newblock Compressing graphs and indexes with recursive graph bisection.
\newblock In \emph{Proceedings of the 22nd ACM SIGKDD International Conference
  on Knowledge Discovery and Data Mining}, KDD ’16. ACM, August 2016.
\newblock \doi{10.1145/2939672.2939862}.
\newblock URL \url{https://doi.org/10.1145/2939672.2939862}.

\bibitem[D{\'\i}az et~al.(2002)D{\'\i}az, Petit, and Serna]{diaz2002survey}
Josep D{\'\i}az, Jordi Petit, and Maria Serna.
\newblock A survey of graph layout problems.
\newblock \emph{ACM Computing Surveys}, 34\penalty0 (3):\penalty0 313–356,
  September 2002.
\newblock ISSN 1557-7341.
\newblock \doi{10.1145/568522.568523}.
\newblock URL \url{https://doi.org/10.1145/568522.568523}.

\bibitem[Even et~al.(2000)Even, Naor, Rao, and Schieber]{even2000divide}
Guy Even, Joseph~Seffi Naor, Satish Rao, and Baruch Schieber.
\newblock Divide-and-conquer approximation algorithms via spreading metrics.
\newblock \emph{Journal of the ACM}, 47\penalty0 (4):\penalty0 585–616, July
  2000.
\newblock ISSN 1557-735X.
\newblock \doi{10.1145/347476.347478}.
\newblock URL \url{https://doi.org/10.1145/347476.347478}.

\bibitem[Feige and Lee(2007)]{feige2007improved}
Uriel Feige and James~R. Lee.
\newblock An improved approximation ratio for the minimum linear arrangement
  problem.
\newblock \emph{Information Processing Letters}, 101\penalty0 (1):\penalty0
  26–29, January 2007.
\newblock ISSN 0020-0190.
\newblock \doi{10.1016/j.ipl.2006.07.009}.
\newblock URL \url{https://doi.org/10.1016/j.ipl.2006.07.009}.

\bibitem[Felser et~al.(2021)Felser, Notarnicola, and
  Montangero]{felser2021efficient}
Timo Felser, Simone Notarnicola, and Simone Montangero.
\newblock Efficient tensor network ansatz for high-dimensional quantum
  many-body problems.
\newblock \emph{Physical Review Letters}, 126\penalty0 (17), April 2021.
\newblock ISSN 1079-7114.
\newblock \doi{10.1103/physrevlett.126.170603}.
\newblock URL \url{https://doi.org/10.1103/PhysRevLett.126.170603}.

\bibitem[Ferrari et~al.(2022)Ferrari, Magnifico, and
  Montangero]{ferrari2022adaptive}
Giovanni Ferrari, Giuseppe Magnifico, and Simone Montangero.
\newblock Adaptive-weighted tree tensor networks for disordered quantum
  many-body systems.
\newblock \emph{Physical Review B}, 105\penalty0 (21), June 2022.
\newblock ISSN 2469-9969.
\newblock \doi{10.1103/physrevb.105.214201}.
\newblock URL \url{https://doi.org/10.1103/PhysRevB.105.214201}.

\bibitem[Fishman et~al.(2022)Fishman, White, and
  Stoudenmire]{fishman2022itensor}
Matthew Fishman, Steven White, and Edwin Stoudenmire.
\newblock The {ITensor} software library for tensor network calculations.
\newblock \emph{SciPost Physics Codebases}, August 2022.
\newblock \doi{10.21468/scipostphyscodeb.4}.
\newblock URL \url{https://doi.org/10.21468/SciPostPhysCodeb.4}.

\bibitem[Gray(2018)]{gray2018quimb}
Johnnie Gray.
\newblock quimb: A {P}ython package for quantum information and many-body
  calculations.
\newblock \emph{Journal of Open Source Software}, 3\penalty0 (29):\penalty0
  819, September 2018.
\newblock ISSN 2475-9066.
\newblock \doi{10.21105/joss.00819}.
\newblock URL \url{https://doi.org/10.21105/joss.00819}.

\bibitem[Gray and Chan(2024)]{gray2022hyper}
Johnnie Gray and Garnet Kin-Lic Chan.
\newblock Hyperoptimized approximate contraction of tensor networks with
  arbitrary geometry.
\newblock \emph{Physical Review X}, 14\penalty0 (1), January 2024.
\newblock ISSN 2160-3308.
\newblock \doi{10.1103/physrevx.14.011009}.
\newblock URL \url{https://doi.org/10.1103/PhysRevX.14.011009}.

\bibitem[Gray and Kourtis(2021)]{gray2021hyper}
Johnnie Gray and Stefanos Kourtis.
\newblock Hyper-optimized tensor network contraction.
\newblock \emph{Quantum}, 5:\penalty0 410, March 2021.
\newblock ISSN 2521-327X.
\newblock \doi{10.22331/q-2021-03-15-410}.
\newblock URL \url{https://doi.org/10.22331/q-2021-03-15-410}.

\bibitem[Gu and Eisenstat(1996)]{gu1996efficient}
Ming Gu and Stanley~C. Eisenstat.
\newblock Efficient algorithms for computing a strong rank-revealing {QR}
  factorization.
\newblock \emph{SIAM Journal on Scientific Computing}, 17\penalty0
  (4):\penalty0 848–869, July 1996.
\newblock ISSN 1095-7197.
\newblock \doi{10.1137/0917055}.
\newblock URL \url{https://doi.org/10.1137/0917055}.

\bibitem[Guo et~al.(2019)Guo, Liu, Xiong, Xue, Fu, Huang, Qiang, Xu, Liu,
  Zheng, Huang, Deng, Poletti, Bao, and Wu]{guo2019general}
Chu Guo, Yong Liu, Min Xiong, Shichuan Xue, Xiang Fu, Anqi Huang, Xiaogang
  Qiang, Ping Xu, Junhua Liu, Shenggen Zheng, He-Liang Huang, Mingtang Deng,
  Dario Poletti, Wan-Su Bao, and Junjie Wu.
\newblock General-purpose quantum circuit simulator with projected
  entangled-pair states and the quantum supremacy frontier.
\newblock \emph{Physical Review Letters}, 123\penalty0 (19), November 2019.
\newblock ISSN 1079-7114.
\newblock \doi{10.1103/physrevlett.123.190501}.
\newblock URL \url{https://doi.org/10.1103/PhysRevLett.123.190501}.

\bibitem[Halko et~al.(2011)Halko, Martinsson, and Tropp]{halko2011finding}
N.~Halko, P.~G. Martinsson, and J.~A. Tropp.
\newblock Finding structure with randomness: Probabilistic algorithms for
  constructing approximate matrix decompositions.
\newblock \emph{SIAM Review}, 53\penalty0 (2):\penalty0 217–288, January
  2011.
\newblock ISSN 1095-7200.
\newblock \doi{10.1137/090771806}.
\newblock URL \url{https://doi.org/10.1137/090771806}.

\bibitem[Hansen(1989)]{hansen1989approximation}
M.D. Hansen.
\newblock Approximation algorithms for geometric embeddings in the plane with
  applications to parallel processing problems.
\newblock In \emph{30th Annual Symposium on Foundations of Computer Science},
  page 604–609. IEEE, 1989.
\newblock \doi{10.1109/sfcs.1989.63542}.
\newblock URL \url{https://doi.org/10.1109/SFCS.1989.63542}.

\bibitem[Harper(1964)]{harper1964optimal}
L.~H. Harper.
\newblock Optimal assignments of numbers to vertices.
\newblock \emph{Journal of the Society for Industrial and Applied Mathematics},
  12\penalty0 (1):\penalty0 131–135, March 1964.
\newblock ISSN 2168-3484.
\newblock \doi{10.1137/0112012}.
\newblock URL \url{https://doi.org/10.1137/0112012}.

\bibitem[Hruska(2008)]{hruska2008tree}
Stephen~W. Hruska.
\newblock On tree congestion of graphs.
\newblock \emph{Discrete Mathematics}, 308\penalty0 (10):\penalty0 1801–1809,
  May 2008.
\newblock ISSN 0012-365X.
\newblock \doi{10.1016/j.disc.2007.04.030}.
\newblock URL \url{https://doi.org/10.1016/j.disc.2007.04.030}.

\bibitem[Ibrahim et~al.(2022)Ibrahim, Lykov, He, Alexeev, and
  Safro]{ibrahim2022constructing}
Cameron Ibrahim, Danylo Lykov, Zichang He, Yuri Alexeev, and Ilya Safro.
\newblock Constructing optimal contraction trees for tensor network quantum
  circuit simulation.
\newblock In \emph{2022 IEEE High Performance Extreme Computing Conference
  (HPEC)}, page 1–8. IEEE, September 2022.
\newblock \doi{10.1109/hpec55821.2022.9926353}.
\newblock URL \url{https://doi.org/10.1109/HPEC55821.2022.9926353}.

\bibitem[Jermyn(2020)]{jermyn2020automatic}
Adam Jermyn.
\newblock Automatic contraction of unstructured tensor networks.
\newblock \emph{SciPost Physics}, 8\penalty0 (1), January 2020.
\newblock ISSN 2542-4653.
\newblock \doi{10.21468/scipostphys.8.1.005}.
\newblock URL \url{https://doi.org/10.21468/SciPostPhys.8.1.005}.

\bibitem[Karypis(1997)]{karypis1997metis}
George Karypis.
\newblock {METIS}: Unstructured graph partitioning and sparse matrix ordering
  system.
\newblock \emph{Technical report}, 1997.

\bibitem[Kolda and Bader(2009)]{kolda2009tensor}
Tamara~G. Kolda and Brett~W. Bader.
\newblock Tensor decompositions and applications.
\newblock \emph{SIAM Review}, 51\penalty0 (3):\penalty0 455–500, August 2009.
\newblock ISSN 1095-7200.
\newblock \doi{10.1137/07070111x}.
\newblock URL \url{https://doi.org/10.1137/07070111X}.

\bibitem[Kourtis et~al.(2019)Kourtis, Chamon, Mucciolo, and
  Ruckenstein]{kourtis2019fast}
Stefanos Kourtis, Claudio Chamon, Eduardo Mucciolo, and Andrei Ruckenstein.
\newblock Fast counting with tensor networks.
\newblock \emph{SciPost Physics}, 7\penalty0 (5), November 2019.
\newblock ISSN 2542-4653.
\newblock \doi{10.21468/scipostphys.7.5.060}.
\newblock URL \url{https://doi.org/10.21468/SciPostPhys.7.5.060}.

\bibitem[Leighton and Rao(1988)]{leighton1989approximate}
T.~Leighton and S.~Rao.
\newblock An approximate max-flow min-cut theorem for uniform multicommodity
  flow problems with applications to approximation algorithms.
\newblock In \emph{29th Annual Symposium on Foundations of Computer Science}.
  IEEE, 1988.
\newblock \doi{10.1109/sfcs.1988.21958}.
\newblock URL \url{https://doi.org/10.1109/SFCS.1988.21958}.

\bibitem[Levin and Nave(2007)]{levin2007tensor}
Michael Levin and Cody~P. Nave.
\newblock Tensor renormalization group approach to two-dimensional classical
  lattice models.
\newblock \emph{Physical Review Letters}, 99\penalty0 (12), September 2007.
\newblock ISSN 1079-7114.
\newblock \doi{10.1103/physrevlett.99.120601}.
\newblock URL \url{https://doi.org/10.1103/PhysRevLett.99.120601}.

\bibitem[Li et~al.(2022)Li, Zeng, Tao, and Zhao]{li2022permutation}
Chao Li, Junhua Zeng, Zerui Tao, and Qibin Zhao.
\newblock Permutation search of tensor network structures via local sampling.
\newblock In \emph{International Conference on Machine Learning}, pages
  13106--13124. PMLR, 2022.

\bibitem[Li et~al.(2020)Li, Sun, Su, Suzuki, and Huang]{li2020understanding}
Jingling Li, Yanchao Sun, Jiahao Su, Taiji Suzuki, and Furong Huang.
\newblock Understanding generalization in deep learning via tensor methods.
\newblock In \emph{International Conference on Artificial Intelligence and
  Statistics}, pages 504--515. PMLR, 2020.

\bibitem[Liao et~al.(2019)Liao, Liu, Wang, and Xiang]{liao2019differentiable}
Hai-Jun Liao, Jin-Guo Liu, Lei Wang, and Tao Xiang.
\newblock Differentiable programming tensor networks.
\newblock \emph{Physical Review X}, 9\penalty0 (3), September 2019.
\newblock ISSN 2160-3308.
\newblock \doi{10.1103/physrevx.9.031041}.
\newblock URL \url{https://doi.org/10.1103/PhysRevX.9.031041}.

\bibitem[Liu et~al.(2023)Liu, Gao, Cain, Lukin, and Wang]{liu2023computing}
Jin-Guo Liu, Xun Gao, Madelyn Cain, Mikhail~D. Lukin, and Sheng-Tao Wang.
\newblock Computing solution space properties of combinatorial optimization
  problems via generic tensor networks.
\newblock \emph{SIAM Journal on Scientific Computing}, 45\penalty0
  (3):\penalty0 A1239–A1270, June 2023.
\newblock ISSN 1095-7197.
\newblock \doi{10.1137/22m1501787}.
\newblock URL \url{https://doi.org/10.1137/22M1501787}.

\bibitem[Lubasch et~al.(2014{\natexlab{a}})Lubasch, Cirac, and
  Bañuls]{lubasch2014algorithms}
Michael Lubasch, J.~Ignacio Cirac, and Mari-Carmen Bañuls.
\newblock Algorithms for finite projected entangled pair states.
\newblock \emph{Physical Review B}, 90\penalty0 (6), August 2014{\natexlab{a}}.
\newblock ISSN 1550-235X.
\newblock \doi{10.1103/physrevb.90.064425}.
\newblock URL \url{https://doi.org/10.1103/PhysRevB.90.064425}.

\bibitem[Lubasch et~al.(2014{\natexlab{b}})Lubasch, Cirac, and
  Bañuls]{lubasch2014unifying}
Michael Lubasch, J~Ignacio Cirac, and Mari-Carmen Bañuls.
\newblock Unifying projected entangled pair state contractions.
\newblock \emph{New Journal of Physics}, 16\penalty0 (3):\penalty0 033014,
  March 2014{\natexlab{b}}.
\newblock ISSN 1367-2630.
\newblock \doi{10.1088/1367-2630/16/3/033014}.
\newblock URL \url{https://doi.org/10.1088/1367-2630/16/3/033014}.

\bibitem[Ma and Solomonik(2021)]{ma2021fast}
Linjian Ma and Edgar Solomonik.
\newblock Fast and accurate randomized algorithms for low-rank tensor
  decompositions.
\newblock \emph{Advances in Neural Information Processing Systems},
  34:\penalty0 24299--24312, 2021.

\bibitem[Ma and Solomonik(2022)]{ma2022cost}
Linjian Ma and Edgar Solomonik.
\newblock Cost-efficient gaussian tensor network embeddings for
  tensor-structured inputs.
\newblock In \emph{Advances in Neural Information Processing Systems},
  volume~35, pages 38980--38993, 2022.

\bibitem[Ma and Yang(2022)]{ma2022low}
Linjian Ma and Chao Yang.
\newblock Low rank approximation in simulations of quantum algorithms.
\newblock \emph{Journal of Computational Science}, 59:\penalty0 101561, March
  2022.
\newblock ISSN 1877-7503.
\newblock \doi{10.1016/j.jocs.2022.101561}.
\newblock URL \url{https://doi.org/10.1016/j.jocs.2022.101561}.

\bibitem[Ma et~al.(2020)Ma, Ye, and Solomonik]{ma2020autohoot}
Linjian Ma, Jiayu Ye, and Edgar Solomonik.
\newblock {AutoHOOT}: Automatic high-order optimization for tensors.
\newblock In \emph{Proceedings of the ACM International Conference on Parallel
  Architectures and Compilation Techniques}, PACT ’20, page 125–137. ACM,
  September 2020.
\newblock \doi{10.1145/3410463.3414647}.
\newblock URL \url{https://doi.org/10.1145/3410463.3414647}.

\bibitem[Manuel et~al.(2009)Manuel, Rajasingh, Rajan, and
  Mercy]{manuel2009exact}
Paul Manuel, Indra Rajasingh, Bharati Rajan, and Helda Mercy.
\newblock Exact wirelength of hypercubes on a grid.
\newblock \emph{Discrete Applied Mathematics}, 157\penalty0 (7):\penalty0
  1486–1495, April 2009.
\newblock ISSN 0166-218X.
\newblock \doi{10.1016/j.dam.2008.09.013}.
\newblock URL \url{https://doi.org/10.1016/j.dam.2008.09.013}.

\bibitem[Matsubayashi(2015)]{matsubayashi2015separator}
Akira Matsubayashi.
\newblock Separator-based graph embedding into multidimensional grids with
  small edge-congestion.
\newblock \emph{Discrete Applied Mathematics}, 185:\penalty0 119–137, April
  2015.
\newblock ISSN 0166-218X.
\newblock \doi{10.1016/j.dam.2014.11.024}.
\newblock URL \url{https://doi.org/10.1016/j.dam.2014.11.024}.

\bibitem[\mbox{Tensornetwork.org} contributors(2021)]{itensornetwork}
\mbox{Tensornetwork.org} contributors.
\newblock Density matrix algorithm.
\newblock \emph{tensornetwork.org}, 2021.

\bibitem[Murg et~al.(2015)Murg, Verstraete, Schneider, Nagy, and
  Legeza]{murg2015tree}
V.~Murg, F.~Verstraete, R.~Schneider, P.~R. Nagy, and \"{O}. Legeza.
\newblock Tree tensor network state with variable tensor order: An efficient
  multireference method for strongly correlated systems.
\newblock \emph{Journal of Chemical Theory and Computation}, 11\penalty0
  (3):\penalty0 1027–1036, February 2015.
\newblock ISSN 1549-9626.
\newblock \doi{10.1021/ct501187j}.
\newblock URL \url{https://doi.org/10.1021/ct501187j}.

\bibitem[Nakatani and Chan(2013)]{nakatani2013efficient}
Naoki Nakatani and Garnet Kin-Lic Chan.
\newblock Efficient tree tensor network states ({TTNS}) for quantum chemistry:
  Generalizations of the density matrix renormalization group algorithm.
\newblock \emph{The Journal of Chemical Physics}, 138\penalty0 (13), April
  2013.
\newblock ISSN 1089-7690.
\newblock \doi{10.1063/1.4798639}.
\newblock URL \url{https://doi.org/10.1063/1.4798639}.

\bibitem[O'Gorman(2019)]{o2019parameterization}
Bryan O'Gorman.
\newblock Parameterization of tensor network contraction.
\newblock 2019.
\newblock \doi{10.48550/arXiv.1906.00013}.
\newblock URL \url{https://doi.org/10.48550/arXiv.1906.00013}.

\bibitem[Or{\'u}s(2019)]{orus2019tensor}
Rom{\'a}n Or{\'u}s.
\newblock Tensor networks for complex quantum systems.
\newblock \emph{Nature Reviews Physics}, 1\penalty0 (9):\penalty0 538–550,
  August 2019.
\newblock ISSN 2522-5820.
\newblock \doi{10.1038/s42254-019-0086-7}.
\newblock URL \url{https://doi.org/10.1038/s42254-019-0086-7}.

\bibitem[Orús(2014)]{orus2014practical}
Román Orús.
\newblock A practical introduction to tensor networks: Matrix product states
  and projected entangled pair states.
\newblock \emph{Annals of Physics}, 349:\penalty0 117–158, October 2014.
\newblock ISSN 0003-4916.
\newblock \doi{10.1016/j.aop.2014.06.013}.
\newblock URL \url{https://doi.org/10.1016/j.aop.2014.06.013}.

\bibitem[Oseledets(2011)]{oseledets2011tensor}
I.~V. Oseledets.
\newblock Tensor-train decomposition.
\newblock \emph{SIAM Journal on Scientific Computing}, 33\penalty0
  (5):\penalty0 2295–2317, January 2011.
\newblock ISSN 1095-7197.
\newblock \doi{10.1137/090752286}.
\newblock URL \url{https://doi.org/10.1137/090752286}.

\bibitem[Paeckel et~al.(2019)Paeckel, K\"{o}hler, Swoboda, Manmana,
  Schollw\"{o}ck, and Hubig]{paeckel2019time}
Sebastian Paeckel, Thomas K\"{o}hler, Andreas Swoboda, Salvatore~R. Manmana,
  Ulrich Schollw\"{o}ck, and Claudius Hubig.
\newblock Time-evolution methods for matrix-product states.
\newblock \emph{Annals of Physics}, 411:\penalty0 167998, December 2019.
\newblock ISSN 0003-4916.
\newblock \doi{10.1016/j.aop.2019.167998}.
\newblock URL \url{https://doi.org/10.1016/j.aop.2019.167998}.

\bibitem[Pan et~al.(2020)Pan, Zhou, Li, and Zhang]{pan2020contracting}
Feng Pan, Pengfei Zhou, Sujie Li, and Pan Zhang.
\newblock Contracting arbitrary tensor networks: General approximate algorithm
  and applications in graphical models and quantum circuit simulations.
\newblock \emph{Physical Review Letters}, 125\penalty0 (6), August 2020.
\newblock ISSN 1079-7114.
\newblock \doi{10.1103/physrevlett.125.060503}.
\newblock URL \url{https://doi.org/10.1103/PhysRevLett.125.060503}.

\bibitem[Pang et~al.(2020)Pang, Hao, Dugad, Zhou, and
  Solomonik]{pang2020efficient}
Yuchen Pang, Tianyi Hao, Annika Dugad, Yiqing Zhou, and Edgar Solomonik.
\newblock Efficient 2{D} tensor network simulation of quantum systems.
\newblock In \emph{SC20: International Conference for High Performance
  Computing, Networking, Storage and Analysis}, pages 1--14. IEEE, 2020.
\newblock \doi{10.5555/3433701.3433719}.
\newblock URL \url{https://dl.acm.org/doi/10.5555/3433701.3433719}.

\bibitem[Rakhshan and Rabusseau(2020)]{rakhshan2020tensorized}
Beheshteh Rakhshan and Guillaume Rabusseau.
\newblock Tensorized random projections.
\newblock In \emph{International Conference on Artificial Intelligence and
  Statistics}, pages 3306--3316. PMLR, 2020.

\bibitem[Rao and Richa(2005)]{rao2005new}
Satish Rao and Andréa~W. Richa.
\newblock New approximation techniques for some linear ordering problems.
\newblock \emph{SIAM Journal on Computing}, 34\penalty0 (2):\penalty0
  388–404, January 2005.
\newblock ISSN 1095-7111.
\newblock \doi{10.1137/s0097539702413197}.
\newblock URL \url{https://doi.org/10.1137/S0097539702413197}.

\bibitem[Reyes and Stoudenmire(2021)]{reyes2021multi}
J~A Reyes and E~M Stoudenmire.
\newblock Multi-scale tensor network architecture for machine learning.
\newblock \emph{Machine Learning: Science and Technology}, 2\penalty0
  (3):\penalty0 035036, July 2021.
\newblock ISSN 2632-2153.
\newblock \doi{10.1088/2632-2153/abffe8}.
\newblock URL \url{https://doi.org/10.1088/2632-2153/abffe8}.

\bibitem[Sahu and Swingle(2022)]{sahu2022efficient}
Subhayan Sahu and Brian Swingle.
\newblock Efficient tensor network simulation of quantum many-body physics on
  sparse graphs.
\newblock 2022.
\newblock \doi{10.48550/arXiv.2206.04701}.
\newblock URL \url{https://doi.org/10.48550/arXiv.2206.04701}.

\bibitem[Schlag et~al.(2022)Schlag, Heuer, Gottesb\"{u}ren, Akhremtsev, Schulz,
  and Sanders]{schlag2023high}
Sebastian Schlag, Tobias Heuer, Lars Gottesb\"{u}ren, Yaroslav Akhremtsev,
  Christian Schulz, and Peter Sanders.
\newblock High-quality hypergraph partitioning.
\newblock \emph{ACM Journal of Experimental Algorithmics}, 27:\penalty0 1–39,
  December 2022.
\newblock ISSN 1084-6654.
\newblock \doi{10.1145/3529090}.
\newblock URL \url{https://doi.org/10.1145/3529090}.

\bibitem[Schollw\"{o}ck(2005)]{schollwock2005density}
U.~Schollw\"{o}ck.
\newblock The density-matrix renormalization group.
\newblock \emph{Reviews of Modern Physics}, 77\penalty0 (1):\penalty0
  259–315, April 2005.
\newblock ISSN 1539-0756.
\newblock \doi{10.1103/revmodphys.77.259}.
\newblock URL \url{https://doi.org/10.1103/RevModPhys.77.259}.

\bibitem[Seitz et~al.(2023)Seitz, Medina, Cruz, Huang, and
  Mendl]{seitz2023simulating}
Philipp Seitz, Ismael Medina, Esther Cruz, Qunsheng Huang, and Christian~B.
  Mendl.
\newblock Simulating quantum circuits using tree tensor networks.
\newblock \emph{{Quantum}}, 7:\penalty0 964, March 2023.
\newblock ISSN 2521-327X.
\newblock \doi{10.22331/q-2023-03-30-964}.
\newblock URL \url{https://doi.org/10.22331/q-2023-03-30-964}.

\bibitem[Shi et~al.(2006)Shi, Duan, and Vidal]{shi2006classical}
Y.-Y. Shi, L.-M. Duan, and G.~Vidal.
\newblock Classical simulation of quantum many-body systems with a tree tensor
  network.
\newblock \emph{Physical Review A}, 74\penalty0 (2), August 2006.
\newblock ISSN 1094-1622.
\newblock \doi{10.1103/physreva.74.022320}.
\newblock URL \url{https://doi.org/10.1103/PhysRevA.74.022320}.

\bibitem[Simon and Teng(1997)]{simon1997good}
Horst~D. Simon and Shang-Hua Teng.
\newblock How good is recursive bisection?
\newblock \emph{SIAM Journal on Scientific Computing}, 18\penalty0
  (5):\penalty0 1436–1445, September 1997.
\newblock ISSN 1095-7197.
\newblock \doi{10.1137/s1064827593255135}.
\newblock URL \url{https://doi.org/10.1137/S1064827593255135}.

\bibitem[Stoudenmire and White(2010)]{stoudenmire2010minimally}
E~M Stoudenmire and Steven~R White.
\newblock Minimally entangled typical thermal state algorithms.
\newblock \emph{New Journal of Physics}, 12\penalty0 (5):\penalty0 055026, may
  2010.
\newblock \doi{10.1088/1367-2630/12/5/055026}.
\newblock URL \url{https://doi.org/10.1088/1367-2630/12/5/055026}.

\bibitem[Stoudenmire and Schwab(2016)]{stoudenmire2016supervised}
Edwin Stoudenmire and David~J Schwab.
\newblock Supervised learning with tensor networks.
\newblock \emph{Advances in Neural Information Processing Systems}, 29, 2016.
\newblock \doi{10.5555/3157382.3157634}.
\newblock URL \url{https://dl.acm.org/doi/10.5555/3157382.3157634}.

\bibitem[Strassen(1969)]{strassen1969gaussian}
Volker Strassen.
\newblock Gaussian elimination is not optimal.
\newblock \emph{Numerische Mathematik}, 13\penalty0 (4):\penalty0 354–356,
  August 1969.
\newblock ISSN 0945-3245.
\newblock \doi{10.1007/bf02165411}.
\newblock URL \url{https://doi.org/10.1007/BF02165411}.

\bibitem[Szalay et~al.(2015)Szalay, Pfeffer, Murg, Barcza, Verstraete,
  Schneider, and Legeza]{szalay2015tensor}
Szilárd Szalay, Max Pfeffer, Valentin Murg, Gergely Barcza, Frank Verstraete,
  Reinhold Schneider, and Ors Legeza.
\newblock Tensor product methods and entanglement optimization for ab initio
  quantum chemistry.
\newblock \emph{International Journal of Quantum Chemistry}, 115\penalty0
  (19):\penalty0 1342--1391, 2015.
\newblock \doi{10.1002/qua.24898}.
\newblock URL \url{https://doi.org/10.1002/qua.24898}.

\bibitem[Thilikos et~al.(2005)Thilikos, Serna, and
  Bodlaender]{thilikos2005cutwidth}
Dimitrios~M. Thilikos, Maria Serna, and Hans~L. Bodlaender.
\newblock Cutwidth {I}: A linear time fixed parameter algorithm.
\newblock \emph{Journal of Algorithms}, 56\penalty0 (1):\penalty0 1–24, July
  2005.
\newblock ISSN 0196-6774.
\newblock \doi{10.1016/j.jalgor.2004.12.001}.
\newblock URL \url{https://doi.org/10.1016/j.jalgor.2004.12.001}.

\bibitem[Vazirani(2001)]{vazirani2001approximation}
Vijay~V Vazirani.
\newblock \emph{Approximation algorithms}, volume~1.
\newblock Springer, 2001.

\bibitem[Verstraete et~al.(2008)Verstraete, Murg, and
  Cirac]{verstraete2008matrix}
F.~Verstraete, V.~Murg, and J.I. Cirac.
\newblock Matrix product states, projected entangled pair states, and
  variational renormalization group methods for quantum spin systems.
\newblock \emph{Advances in Physics}, 57\penalty0 (2):\penalty0 143–224,
  March 2008.
\newblock ISSN 1460-6976.
\newblock \doi{10.1080/14789940801912366}.
\newblock URL \url{https://doi.org/10.1080/14789940801912366}.

\bibitem[Verstraete and Cirac(2004)]{verstraete2004renormalization}
Frank Verstraete and J~Ignacio Cirac.
\newblock Renormalization algorithms for quantum-many body systems in two and
  higher dimensions.
\newblock 2004.
\newblock \doi{10.48550/ARXIV.COND-MAT/0407066}.
\newblock URL \url{https://doi.org/10.48550/ARXIV.COND-MAT/0407066}.

\bibitem[Vidal(2003)]{vidal2003efficient}
Guifré Vidal.
\newblock Efficient classical simulation of slightly entangled quantum
  computations.
\newblock \emph{Physical Review Letters}, 91\penalty0 (14), October 2003.
\newblock ISSN 1079-7114.
\newblock \doi{10.1103/physrevlett.91.147902}.
\newblock URL \url{https://doi.org/10.1103/PhysRevLett.91.147902}.

\bibitem[White(1992)]{white1992density}
Steven~R. White.
\newblock Density matrix formulation for quantum renormalization groups.
\newblock \emph{Physical Review Letters}, 69\penalty0 (19):\penalty0
  2863–2866, November 1992.
\newblock ISSN 0031-9007.
\newblock \doi{10.1103/physrevlett.69.2863}.
\newblock URL \url{https://doi.org/10.1103/PhysRevLett.69.2863}.

\bibitem[Zhang and Solomonik(2020)]{zhang2020stability}
Yifan Zhang and Edgar Solomonik.
\newblock On stability of tensor networks and canonical forms.
\newblock 2020.
\newblock \doi{10.48550/arXiv.2001.01191}.
\newblock URL \url{https://doi.org/10.48550/arXiv.2001.01191}.

\bibitem[Zhou et~al.(2020)Zhou, Stoudenmire, and Waintal]{zhou2020limits}
Yiqing Zhou, E.~Miles Stoudenmire, and Xavier Waintal.
\newblock What limits the simulation of quantum computers?
\newblock \emph{Physical Review X}, 10\penalty0 (4), November 2020.
\newblock ISSN 2160-3308.
\newblock \doi{10.1103/physrevx.10.041038}.
\newblock URL \url{https://doi.org/10.1103/PhysRevX.10.041038}.

\end{thebibliography}

\appendix

\section{Notations}

\begin{table}[!ht]
  \begin{center}
    \renewcommand{\arraystretch}{1.6}
    {
    \begin{tabular}{l|p{10cm}}
\hline
      Notations  & Meanings \\ \hline
      $G =(V,E)$  & The input tensor network graph \\ \hline
      $\Set{V}=\{V_1,\ldots, V_N\}$ & A partitioning of $G$  \\ \hline
$T^{(\Set{V})}$  & A contraction tree over $\Set{V}$  \\ \hline
$G[U]$  & A subgraph of $G$ that only contains vertices $U$  \\ \hline

$\mathcal{E}=\left\{E(V_i,V_j): i,j\in \{1,\ldots,N\}\right\}$ & The set where each element in $\mathcal{E}$ is an edge subset connecting two different partitions \\ \hline
  $(U_s,W_s)$ & 
     The contraction $s$ where the first tensor is the contraction output of vertices $U_s$ and the second tensor is the contraction output of vertices $W_s$
         \\ \hline
  $\mathcal{E}_s=\left\{E'\cap E(U_s\cup W_s): E'\in \Set{E}\right\}$ & 
      The subset of $\mathcal{E}$ that is adjacent to the tensor network represented by $U_s\cup W_s$
         \\ \hline
    $\sigma^{(\Set{E}_s)}$ & A linear ordering of the elements in $\Set{E}_s$ \\ \hline
    $T=\pth\left(T^{(\mathcal{V})}, V_s\right)$ & The sub-contraction path over $V_s$\\ \hline
   $T^{(\mathcal{E}_s)}$ & Constraint tree detailed in \cref{subsec:adjacency} \\ \hline
    \end{tabular}
    }
    \renewcommand{\arraystretch}{1}
  \end{center}
\caption{Notation used throughout the paper.
}
\label{tab:notations}
\end{table}

\section{Additional background}

\subsection{The swap-based algorithm to reorder MPS modes}
\label{subsec:swap}
In the MPS-based automated tensor network contraction algorithms including CATN and SweepContractor,
an important step is to reorder the sites in an MPS. The reordering changes the adjacency relation in the MPS, and is used so that subsequent contractions can be performed with lower cost. The reordering is commonly performed via a sequence of adjacent site swappings. For a given MPS whose sites are denoted as a set $S$ and its input ordering is denoted as an injective mapping $\sigma: S \to \{1,\ldots, |S|\}$, changing it to a different ordering $\tau$ requires at least $\ktdist{\sigma,\tau}$ number of swaps, where $d_{\text{KT}}$ denotes the Kendall-Tau distance defined in \cref{defn:ktdist}.

\begin{definition}\label{defn:ktdist}
    Let $\sigma,\tau$ be two orderings over $S$.
The  Kendall Tau distance between $\sigma,\tau$ is the number of pairs that are ordered
differently in $\sigma,\tau$, and is also
the number of pairwise adjacent transpositions needed to transform
$\sigma $ into $\tau$ (or vise versa), 
\begin{equation}
\ktdist{\sigma,\tau} = 
\sum_{(c,c')\in S} 
\Big|\sigma(c,c')- \tau(c,c')\Big|,
\end{equation}
where $\sigma(c,c') := \mathbbm{1}\Big(\sigma(c) < \sigma(c')\Big)$ indicates if $c$ is ahead of $c'$ in $\sigma$. 
\end{definition}

We illustrate the standard algorithm to swap adjacent MPS sites via a contraction and a low-rank approximation in \cref{fig:swap}. The algorithm first contracts two sites into a single tensor and subsequently performs a low-rank approximation to split the tensor into two parts.
When the uncontracted modes have sizes $x$ and $y$, and the MPS ranks are $a, c$, and $b$, the contraction step has an asymptotic cost of $\Theta(abcxy)$, resulting in a tensor with a size of $abxy$. Without truncation, the output rank of the low-rank approximation operation would be the minimum among $ay, bx, cxy$. In practice, it is common to set an upper bound $\gamma$ for the MPS ranks, which limits the asymptotic cost of the approximation operation to $O(abxy\min(ay,bx,cxy,\gamma))$ when using the cost model in \cref{subsec:cost_model}.
To reduce the truncation error, canonicalization is commonly performed on the MPS to orthogonalize all other sites.

 \begin{figure}[!ht]
\centering
\includegraphics[width=1.\textwidth, keepaspectratio]{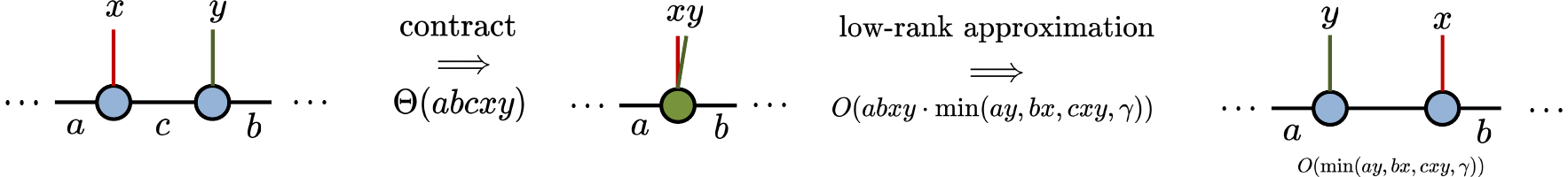}

\caption{
Illustration of the swap operation and the asymptotic computational cost. 
}
\label{fig:swap}
\end{figure}

\subsection{Background on embedding an source graph into a target graph}
\label{subsec:recur_bisection}
Our proposed algorithm uses heuristics from the graph embedding problem.
A graph embedding of a source graph $G_s=(V_s,E_s)$ into a target graph $G_t=(V_t,E_t)$ is a map from vertices of the input graph onto vertices of the output graph, $\phi: V_s\to V_t$, and each edge connecting $u,v$ of $G_s$ is mapped onto a path connecting $\phi(u),\phi(v)$ of $G_t$. 
For each edge $e\in E_t$, we let $\texttt{congestion}(e)$ denote the  number of times $e$ is used as a corresponding path of some edge in $G_s$.
We look at the problem of finding the graph embedding that minimizes the congestion~\cite{hruska2008tree,bienstock1990embedding,matsubayashi2015separator,bezrukov2000congestion,manuel2009exact}. This metric is used since when embedding a tensor network into another graph, low congestion implies that the embedded tensor network has low ranks as well as low memory usage.

For the case where $G_t$ is a line graph and $\phi$ is an injective mapping, finding $\phi$ that minimizes the congestion is the widely-discussed linear ordering problem. When the objective is to minimize $\max_{e\in E_t}\texttt{congestion}(e)$, the problem has been called the minimum cut linear arrangement problem, and the congestion is also called cutwidth in the previous work~\cite{thilikos2005cutwidth}. When the objective is to minimize $\sum_{e\in E_t}\texttt{congestion}(e)$, the problem has been called the minimum linear arrangement problem~\cite{harper1964optimal, diaz2002survey}, and multiple approximation algorithms with bounded complexity have been proposed~\cite{even2000divide,rao2005new,feige2007improved,charikar2010,devanur2006integrality}.

Recursive bisection is a simple yet effective divide-and-conquer heuristic widely adopted in both  linear ordering problems~\cite{dhulipala2016compressing,hansen1989approximation} and balanced graph partitioning~\cite{schlag2023high,simon1997good}.
For the linear ordering problem, the algorithm proceeds via first applying an approximate $1/3$-balanced cut to separate $V_s$ into two parts $S$ and $V_s\setminus S$, then placing all vertices of $S$ before all vertices not in $S$, and then recursing on both $S$ and $V_t\setminus S$. 
Let $n$ denote the number of vertices in the graph,
it is known that if
one has a $\gamma$-approximation algorithm for minimum $1/3$-balanced cut, then both the minimum cut linear arrangement and the minimum linear arrangement problem admit an approximation of $O(
\gamma \log n)$~\cite{hansen1989approximation,vazirani2001approximation}.
The approximation factor for the $1/3$-balanced cut is improved from $\gamma = O(\log n)$~\cite{leighton1989approximate} to $\gamma = O(\sqrt{\log n})$~\cite{arora2009expander}, making the approximation factor of the recursive bisection $O(\log^{1.5} n)$.

In \cref{sec:dm,sec:binary_tree}, we use recursive bisection as a heuristic for other embedding problems where $\phi$ is not necessarily injective, and $G_t$ is a general binary tree rather than a line graph.

\section{Embedding tree definitions}\label{sec:appendix_tree_def}

We formally define the embedding tree with an MPS and a comb structure in \cref{def:embed_mps} and \cref{def:embed_comb}. Both definitions are based on the MPS tree, which is defined in \cref{def:unbalanced}.

\begin{definition}[MPS tree]\label{def:unbalanced}
Consider a set $S$ with a linear ordering $\sigma^S$.
Let $x_i\in S$ denote the element with $\sigma^S(x_i) = i$.
The MPS tree defined on $\sigma^S$ is a full binary tree with the elements of $S$ serving as the tree's leaf nodes.
The MPS tree contains $|S|-1$ non-leaf nodes, where the first non-leaf node is connected to $x_1$ and $x_2$, and the $i$th non-leaf node for $i\in \{2,\ldots, |S|-1\}$ is connected to the $i-1$th non-leaf node and $x_{i+1}$. An example is shown in \cref{contract_subfig:mps_tree}.
\end{definition}

\begin{definition}[Embedding tree with an MPS structure]\label{def:embed_mps}
Consider orderings $\sigma^{(\mathcal{E}_s)}$
and $\sigma^{(E')}$ for $E'\in \mathcal{E}_s$. Let $n_s=\left|\mathcal{E}_s\right|$, and let $E_i$ denote the edgeset with $\sigma^{(\mathcal{E}_s)}(E_i) = i$.
The MPS embedding tree based on $\sigma^{(\mathcal{E}_s)}$, $\{\sigma^{(E')}, E'\in \mathcal{E}_s\}$ is the MPS tree defined on the ordering $\sigma^{(E_1)}\oplus \cdots \oplus \sigma^{(E_{n_s})}$, where we use $\sigma^{S_1}\oplus\sigma^{S_2}$ to denote the concatenation of two orderings $\sigma^{S_1}$ and $\sigma^{S_2}$, so that each $x\in S_1$ is mapped to $\sigma^{S_1}(x)$ and each $x\in S_2$ is mapped to $\sigma^{S_2}(x) + |S_1|$.
\end{definition}

\begin{definition}[Embedding tree with a comb structure]\label{def:embed_comb}
    Consider orderings $\sigma^{(\mathcal{E}_s)}$
and $\sigma^{(E')}$ for $E'\in \mathcal{E}_s$. Let $n_s=\left|\mathcal{E}_s\right|$, and let $E_i$ denote the edgeset with $\sigma^{(\mathcal{E}_s)}(E_i) = i$.
Let $T_i$ denote the MPS tree on top of $\sigma^{(E_i)}$ and let $r_i$ denote the root node of $T_i$. The comb embedding tree based on $\sigma^{(\mathcal{E}_s)}$, $\{\sigma^{(E')}, E'\in \mathcal{E}_s\}$ contains all $T_i$ for $i\in \{1,\ldots,n_s\}$ and another MPS tree $\hat{T}$ used to connect all $T_i$. The MPS tree $\hat{T}$ connects all $r_i$ and is defined on top of the ordering $\hat{\sigma}:\{r_1,\ldots, r_{n_s}\}\to \{1,\ldots, n_s\}$, where $\hat{\sigma}(r_i) = i$.
\end{definition}

\section{Determination of the constraint tree based on the contraction path}\label{subsec:adjacency}

The constraint tree $T^{(\mathcal{E}_s)}$ is constructed based on the sub-contraction path $T$. 
The tree is constructed bottom-up by connecting subsets of edges involved in the contraction path.
This construction is based on the assumption that ordering edges to make earlier rather than later contractions efficient is more important.

Specifically,
we let $U_1,\ldots, U_n$ be the $n$ partitions contracted with $V_s$ in order in the path $T$, let $\mathcal{E}$ be the edge partitions defined in Line~\ref{line:edge_partitions} of \cref{alg:ovw}, and let $\mathcal{E}(U_i) = \{\bar{E}\cap E(U_i): \bar{E}\in \mathcal{E}\}$ be the subset of $\mathcal{E}$ incident on $U_i$.
For each contraction with $U_i$, we use $\hat{\mathcal{E}}_i$ to denote the subset of $\mathcal{E}_s$ that we want to be connected in $T^{(\mathcal{E}_s)}$ based on the contraction.
In particular, $\hat{\mathcal{E}}_1 = \left(\mathcal{E}_s\cap \mathcal{E}(U_1) \right)$ contains all contracted edges $E(V_s,U_1)$. 
For each $i\in \{2, \ldots, n\}$, we want $\left(\mathcal{E}_s\cap \mathcal{E}(U_i) \right)$ along with some $\hat{\mathcal{E}}_j,j<i$ to be adjacent.
Formally speaking, for each $i\in \{1, \ldots, n\}$,
we define 
\begin{equation}
\hat{\mathcal{E}}_i = \left(\mathcal{E}_s\cap \mathcal{E}(U_i) \right) \bigcup_{j\in S_i} \hat{\mathcal{E}}_j,
\end{equation}
where $S_i\subseteq \{1,\ldots, i - 1\}$ is a subset of indices ahead of $i$ such that for each $j\in S_i$, $U_j$ is adjacent to $U_i$.
In \cref{fig:adjacency_tree_example}, we use an example to illustrate the constraint tree construction algorithm, and each $\hat{\mathcal{E}}_i$ is also shown in the figure.

 \begin{figure}[htb]
\centering
\includegraphics[width=.98\textwidth, keepaspectratio]{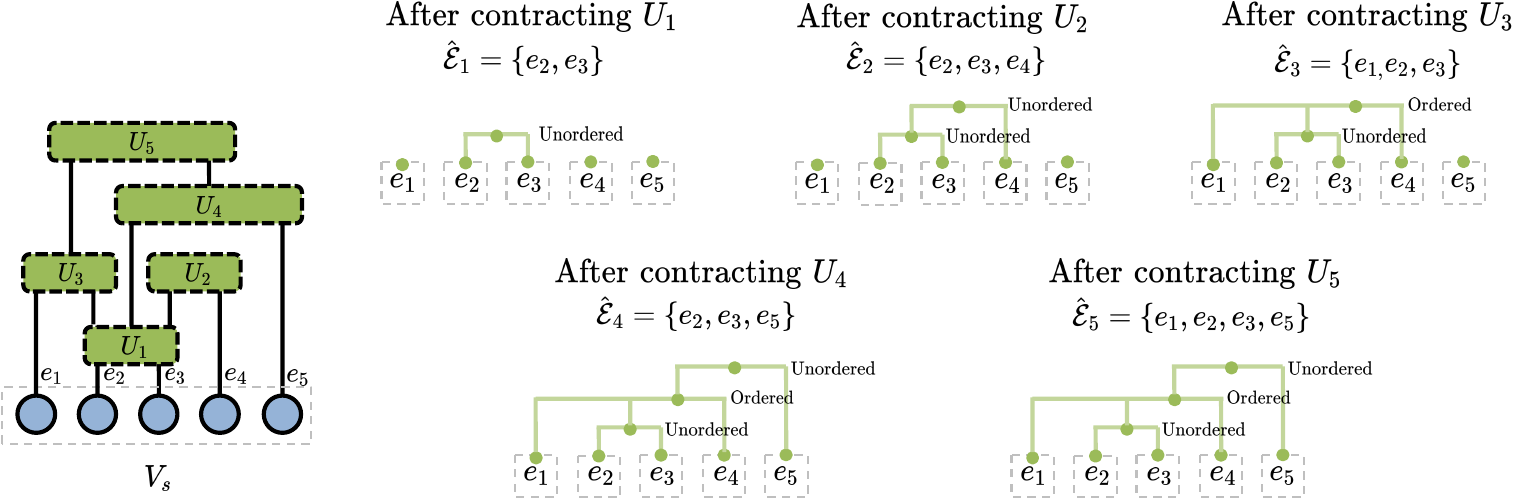}
\caption{
Illustration of the algorithm to construct the constraint tree. The constraint tree is built on top of the uncontracted edgesets of $V_s$, $\Set{E}_s=\{\{e_1\}, \{e_2\}, \{e_3\}, \{e_4\}, \{e_5\}\}$. The partitions $U_1,\ldots,U_5$ are contracted with $V_s$ in order. For the $i$th contraction, we show the value of $\hat{\Set{E}}_i$ and show the constraint tree after that contraction step.
}
\label{fig:adjacency_tree_example}
\end{figure}

In the algorithm, $T^{(\mathcal{E}_s)}$ is initialized to be a disconnected graph with vertices $\mathcal{E}_s$. 
For the $i$th contraction that contracts $U_i$, the algorithm updates the $T^{(\mathcal{E}_s)}$ so that the leaves $\hat{\mathcal{E}}_i$ will be connected.
The rules are as follows.
\begin{enumerate}
    \item If  $\hat{\mathcal{E}}_i$ are already connected in $T^{(\mathcal{E}_s)}$, we just keep the constraint tree unchanged. For example, in \cref{fig:adjacency_tree_example} the constraint tree is unchanged after we consider the fifth contraction, since $\hat{\mathcal{E}}_5$ is already connected. 
    \item If  $\hat{\mathcal{E}}_i$ is the union of multiple connected leaf subsets, then a vertex is added to $T^{(\mathcal{E}_s)}$ whose children are the root vertices of these connected leaf subsets. In addition, this new vertex is labeled as ``unordered". In \cref{fig:adjacency_tree_example}, the constraint trees after both the first and the second contraction belong to this case.
    \item If  $\hat{\mathcal{E}}_i$ is a subset of the union of multiple connected leaves subsets $\bar{\mathcal{E}}$, then there are cases where $\hat{\mathcal{E}}_i$ cannot be adjacent in the tree. For this case, a vertex is added to $T^{(\mathcal{E}_s)}$ whose children are the root vertices of $\bar{\mathcal{E}}$ and the vertex is labeled as ``unordered". In \cref{fig:adjacency_tree_example}, the constraint trees after the fourth contraction belongs to this case.
    For the other cases, we can reorder the constraint tree and label some vertices as ``ordered" to add the adjacency constraints.
    In \cref{fig:adjacency_tree_example}, the constraint trees after the third contraction belongs to this case.
\end{enumerate}

\section{Optimality of the edge set ordering algorithm}\label{sec:order_optimality}

In this section, we prove that the output ordering of \cref{alg:kdt} minimizes the Kendall-Tau distance with the reference ordering under the adjacency constraint.

Let $\mathcal{P}\left(T^{(\mathcal{E}_s)}\right)$ represents the set of orderings 
of the leaves of $T^{(\mathcal{E}_s)}$
constrained by $T^{(\mathcal{E}_s)}$. Each ordering in this set must adhere to all the adjacency relations specified by $T^{(\mathcal{E}_s)}$. In \cref{thm:ordering_under_constraint}, we establish that the output ordering produced by \cref{alg:min_kdt} aims to minimize the Kendall-Tau distance, as defined in \cref{defn:ktdist}, between itself and the reference ordering $\tau$,
\begin{equation}\label{eq:min_ktdist}
    \sigma^{(\mathcal{E}_s)} = \arg\min_{\sigma\in \mathcal{P}\left(T^{(\mathcal{E}_s)}\right)}
\ktdist{\sigma, \tau}.
\end{equation}
Before the presentation of \cref{thm:ordering_under_constraint}, we first present \cref{lem:ktdist_1} that is used in the proof of the theorem. The lemma can be easily proved based on the definition of Kendall-Tau distance in \cref{defn:ktdist}.

\begin{lemma}\label{lem:ktdist_1}
    Consider an ordering $\tau^{(C)}$ over a set  $C = C_1 \cup C_2$, and let $\tau^{(C_1)}$, $\tau^{(C_2)}$ denote the restrictions of the ordering $\tau^{(C)}$ to the subset
    $C_1$, $C_2$, respectively.
    Consider another two orderings $\sigma^{(C_1)}$, $\sigma^{(C_2)}$ over $C_1$, $C_2$, respectively. Then, we have
    \begin{equation}
    \ktdist{\tau^{(C)}, \sigma^{(C_1)}\oplus\sigma^{(C_2)}}
    = \ktdist{\tau^{(C)}, \tau^{(C_1)}\oplus\tau^{(C_2)}}  +
    \ktdist{\tau^{(C_1)}, \sigma^{(C_1)}} 
    + 
    \ktdist{\tau^{(C_2)}, \sigma^{(C_2)}},
    \end{equation}
    where $\tau^{(C_1)}\oplus\tau^{(C_2)}$ denotes the concatenation of $\tau^{(C_1)}$, $\tau^{(C_2)}$.
\end{lemma}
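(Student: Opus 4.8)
The plan is to prove \cref{lem:ktdist_1} by a direct bookkeeping argument on pairs, using the characterization of Kendall-Tau distance from \cref{defn:ktdist} as a sum over unordered pairs of indicator mismatches. First I would fix the notation: for any ordering $\rho$ over a set $S$ and any pair $\{c,c'\}\subseteq S$, write $\rho(c,c')=\mathbbm{1}(\rho(c)<\rho(c'))$, so that $\ktdist{\rho_1,\rho_2}=\sum_{\{c,c'\}}|\rho_1(c,c')-\rho_2(c,c')|$, the sum running over all unordered pairs of distinct elements. The key structural observation is that the set of unordered pairs from $C=C_1\cup C_2$ splits into three disjoint classes: pairs with both elements in $C_1$, pairs with both elements in $C_2$, and \emph{crossing} pairs with one element in each of $C_1,C_2$.

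Next I would evaluate each side of the claimed identity class by class. For pairs inside $C_1$: on the left, $\sigma^{(C_1)}\oplus\sigma^{(C_2)}$ restricted to $C_1$ is exactly $\sigma^{(C_1)}$, so these pairs contribute $\ktdist{\tau^{(C_1)},\sigma^{(C_1)}}$ (since $\tau^{(C)}$ restricted to $C_1$ is $\tau^{(C_1)}$ by definition); on the right, $\tau^{(C_1)}\oplus\tau^{(C_2)}$ restricted to $C_1$ is $\tau^{(C_1)}$, so the first term contributes $0$ on inside-$C_1$ pairs, and the second term contributes exactly $\ktdist{\tau^{(C_1)},\sigma^{(C_1)}}$. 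Symmetrically for inside-$C_2$ pairs. For crossing pairs $\{a,b\}$ with $a\in C_1,\ b\in C_2$: in \emph{any} concatenation $\rho^{(C_1)}\oplus\rho^{(C_2)}$ the element of $C_1$ precedes the element of $C_2$, so the value of the indicator on a crossing pair depends only on which set each element lies in, not on the internal orderings. Hence $\sigma^{(C_1)}\oplus\sigma^{(C_2)}$ and $\tau^{(C_1)}\oplus\tau^{(C_2)}$ agree on every crossing pair, and the crossing contribution to the left side equals the crossing contribution of $\ktdist{\tau^{(C)},\tau^{(C_1)}\oplus\tau^{(C_2)}}$, which is the first term on the right; meanwhile the $\ktdist{\tau^{(C_i)},\sigma^{(C_i)}}$ terms involve no crossing pairs at all. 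Summing the three classes gives the identity.

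I do not anticipate a genuine obstacle here — the statement is essentially a decomposition of a sum over pairs — so the ``main obstacle'' is really just being careful about two easy points: (i) that restricting $\tau^{(C)}$ to $C_i$ genuinely yields $\tau^{(C_i)}$, which is true by the hypothesis that $\tau^{(C_i)}$ is \emph{defined} as that restriction, and (ii) that in a concatenation every element of $C_1$ is placed before every element of $C_2$, which is immediate from the definition of $\oplus$ given in \cref{def:embed_mps} (each $x\in S_2$ is shifted by $|S_1|$). I would write the proof as a single short paragraph organizing the three pair-classes, concluding that the three contributions match term-by-term and hence the totals are equal.
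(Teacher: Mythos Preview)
Your proposal is correct and is exactly the direct argument from \cref{defn:ktdist} that the paper itself points to (the paper does not spell out a proof, stating only that the lemma ``can be easily proved based on the definition of Kendall-Tau distance''). Your pair-class decomposition into inside-$C_1$, inside-$C_2$, and crossing pairs, together with the observation that any concatenation fixes the relative order on crossing pairs, is the natural and complete way to carry this out.
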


\begin{theorem}\label{thm:ordering_under_constraint}
Given a reference ordering $\tau$ and a guide tree $T^{(\mathcal{E}_s)}$,
the output ordering of \cref{alg:min_kdt} is an optimal solution of the optimization problem, $\min_{\sigma\in \mathcal{P}\left(T^{(\mathcal{E}_s)}\right)}\ktdist{\sigma, \tau}$.
\end{theorem}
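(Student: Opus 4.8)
The plan is to prove, by structural induction on the constraint tree $T^{(\mathcal{E}_s)}$ processed from the leaves up — exactly the order in which \cref{alg:min_kdt} visits vertices — a statement stronger than \eqref{eq:min_ktdist}: for \emph{every} vertex $v$ of $T^{(\mathcal{E}_s)}$, the ordering $f(v)$ built by the algorithm lies in $\mathcal{P}_v$ and satisfies $\ktdist{f(v),\tau_v}=\min_{\rho\in\mathcal{P}_v}\ktdist{\rho,\tau_v}$, where $\mathcal{P}_v$ denotes the set of orderings of the leaves below $v$ admitted by the subtree rooted at $v$, and $\tau_v$ is the restriction of $\tau$ to those leaves (so $\tau_v$ is the quantity computed at Line~\ref{line:tauv}). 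Specializing to the root and using $\mathcal{P}\!\left(T^{(\mathcal{E}_s)}\right)=\mathcal{P}_{\mathrm{root}}$ and $\tau_{\mathrm{root}}=\tau$ then yields the theorem. The base case, $v$ a leaf, is immediate since $\mathcal{P}_v$ is a singleton and $\ktdist{\cdot,\cdot}$ then vanishes.

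For the inductive step at a non-leaf vertex $v$ with children $u_1,\dots,u_{n_v}$, the main tool will be a multi-block generalization of \cref{lem:ktdist_1}, obtained by iterating that lemma and using that $\ktdist{\cdot,\cdot}$ is symmetric: for any permutation $\pi$ of $\{1,\dots,n_v\}$ and any orderings $\rho_i$ of the leaf set below $u_i$,
\begin{equation}\label{eq:decomp}
\ktdist{\rho_{\pi(1)}\oplus\cdots\oplus\rho_{\pi(n_v)},\,\tau_v}
= \ktdist{\tau_{u_{\pi(1)}}\oplus\cdots\oplus\tau_{u_{\pi(n_v)}},\,\tau_v}
+ \sum_{i=1}^{n_v}\ktdist{\rho_i,\,\tau_{u_i}},
\end{equation}
where the telescoping of the ``cross'' terms is itself an instance of \cref{lem:ktdist_1} applied with each $\rho_i$ replaced by $\tau_{u_i}$ (so the internal terms vanish). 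The decisive feature of \eqref{eq:decomp} is that the internal sum $\sum_i\ktdist{\rho_i,\tau_{u_i}}$ does not depend on the block arrangement $\pi$. Hence minimizing $\ktdist{\cdot,\tau_v}$ over $\mathcal{P}_v$ decouples into (a) independently minimizing each $\ktdist{\rho_i,\tau_{u_i}}$ over $\rho_i\in\mathcal{P}_{u_i}$, whose optimum is $f(u_i)$ by the induction hypothesis, and (b) separately minimizing the cross term over the admissible block arrangements — all of $S_{n_v}$ when $v$ is labeled \emph{unordered}, only $\{\mathrm{id},\texttt{reverse}\}$ when $v$ is labeled \emph{ordered}. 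This is precisely what \cref{alg:min_kdt} does: it forms $\sigma_1=f(u_1)\oplus\cdots\oplus f(u_{n_v})$ from the already-optimal child orderings and then, in Line~\ref{line:min_kdt}, picks the element of $\mathcal{S}$ — which enumerates exactly the admissible block arrangements of these blocks — minimizing the distance to $\tau_v$. Combining (a) and (b), $\ktdist{f(v),\tau_v}$ equals the cross-term minimum plus $\sum_i\ktdist{f(u_i),\tau_{u_i}}$; by \eqref{eq:decomp} and the induction hypothesis this is a lower bound for $\ktdist{\rho,\tau_v}$ over every $\rho\in\mathcal{P}_v$, and since $f(v)\in\mathcal{P}_v$ the bound is attained, completing the induction.

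I expect the main obstacle to lie not in the estimates — which reduce entirely to the additive decomposition \eqref{eq:decomp} — but in pinning down the correspondence between the algorithm's data structures and the set $\mathcal{P}_v$, especially at \emph{ordered} vertices. Concretely, one must verify that $\mathcal{P}_v$ for an ordered node equals the family of concatenations $\rho_1\oplus\cdots\oplus\rho_{n_v}$ together with their block-reversals $\rho_{n_v}\oplus\cdots\oplus\rho_1$, with each $\rho_i$ ranging \emph{independently} over $\mathcal{P}_{u_i}$ — which in turn uses that every $\mathcal{P}_{u_i}$ is closed under reversal, so the recursive structure is consistent — so that the two-element candidate set $\mathcal{S}=\{\sigma_1,\texttt{reverse}(\sigma_1)\}$ is the correct admissible family; and that $\texttt{reverse}(\sigma_1)$ corresponds to reversing the \emph{block sequence} while leaving each child block at its own optimal ordering $f(u_i)$, which is exactly the form to which \eqref{eq:decomp} applies. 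One also checks the (routine) consistency point that the restriction $\tau_v$ and the restrictions $\tau_{u_i}$ used deeper in the recursion are all restrictions of the single reference ordering $\tau$, so that the induction hypothesis at the children is stated against the right references. Once these structural facts are in place, the ``unordered'' case follows immediately from \eqref{eq:decomp} and no further work is needed.
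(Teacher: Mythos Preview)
Your proposal is correct and takes essentially the same approach as the paper: structural induction on the constraint tree, with \cref{lem:ktdist_1} (iterated, as in your \eqref{eq:decomp}) decoupling the cross-block term from the per-block terms at each non-leaf vertex so that the induction hypothesis handles the latter and the algorithm's search over $\mathcal{S}$ handles the former. You are in fact more thorough than the paper's own proof, which sketches only the two-child case and does not explicitly separate the \emph{ordered} and \emph{unordered} labels; your flagged verification that $\texttt{reverse}(\sigma_1)$ should be read as block-sequence reversal (rather than elementwise reversal) is precisely the reading the paper's proof implicitly adopts when it compares $f(u_1)\oplus f(u_2)$ against $f(u_2)\oplus f(u_1)$.
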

\begin{proof}
For each vertex $v$ in the constraint tree $T^{(\mathcal{E}_s)}$, we let $\texttt{subtree}\left(v, T^{(\mathcal{E}_s)}\right)$ denote the subtree in the constraint tree where the root vertex is $v$. 
In addition, as is defined in Line~\ref{line:tauv} of \cref{alg:min_kdt}, we use $\tau_v$ to denote the restriction of the ordering  $\tau$ to the subset represented by the leaves of $\texttt{subtree}\left(v, T^{(\mathcal{E}_s)}\right)$.
Below we prove that for each $v\in T^{(\mathcal{E}_s)}$, 
\begin{equation}\label{eq:ordering_under_constraint_2}f(v) = \arg\min_{\sigma\in \mathcal{P}\left(\texttt{subtree}\left(v, T^{(\mathcal{E}_s)}\right)\right)}\ktdist{\sigma,\tau_v}, 
\end{equation} 
where $f(v)$ is defined in Line \ref{line:min_kdt} of \cref{alg:min_kdt}. 
Since we output $f\left(r\right)$ with $r = \texttt{root}\left(T^{(\mathcal{E}_s)}\right)$, and 
$\texttt{subtree}\left(r, T^{(\mathcal{E}_s)}\right) = T^{(\mathcal{E}_s)}$, the output ordering satisfies $f(r) = \arg\min_{\sigma\in \mathcal{P}\left(T^{(\mathcal{E}_s)}\right)}\ktdist{\sigma, \tau}$. This finishes the proof.

For the base cases where $v$ is one of the leaf vertices, \eqref{eq:ordering_under_constraint_2} holds since the set to be ordered only contains one element thus the ordering is unique. 

Now consider the case where $v$ is a non-leaf vertex. In the analysis we assume 
$v$ has two children, $u_1$ and $u_2$. Note that the analysis can be easily generalized to the case of more than 2 children for both ``unordered" and ``ordered" labels. 

Assume \eqref{eq:ordering_under_constraint_2} holds for its children, $u_1$ and $u_2$.
Consider the case where 
\begin{equation}\label{eq:ordering_under_constraint_less}
    \ktdist{f(u_1)\oplus f(u_2), \tau_v} < \ktdist{f(u_2)\oplus f(u_1), \tau_v},
\end{equation}
so that Line \ref{line:min_kdt} sets $f(v)$ as $f(u_1)\oplus f(u_2)$. We then have 
\begin{equation}\label{eq:ordering_under_constraint_3}
\begin{split}
    \ktdist{f(v), \tau_v} &= 
    \ktdist{f(u_1)\oplus f(u_2), \tau_v} \\
&\overset{\cref{lem:ktdist_1}}{=} \ktdist{\tau_v, \tau_{u_1} \oplus \tau_{u_2}} + 
 \ktdist{f(u_1), \tau_{u_1}} + 
 \ktdist{f(u_2), \tau_{u_2}}. 
 \end{split}
\end{equation}
The first term in \eqref{eq:ordering_under_constraint_3} reaches the minimum since \eqref{eq:ordering_under_constraint_less} holds. Moreover, the last two terms also reach the minimum since \eqref{eq:ordering_under_constraint_2} holds for $u_1$ and $u_2$. These conditions imply $f(v)$ satisfies
\eqref{eq:ordering_under_constraint_2}. Similar analysis can be applied for the case where $\ktdist{f(u_1)\oplus f(u_2), \tau_v} > \ktdist{f(u_2)\oplus f(u_1), \tau_v}$. 
This along with the based cases finish the proof.

\end{proof}

\section{Lemmas in computational cost analysis}\label{sec:cost_analysis_appendix}

This section provides lemmas for \cref{thm:upper_bound}, which shows that the asymptotic cost of the density matrix algorithm is upper-bounded by
that of the canonicalization-based algorithm.
We demonstrate in \cref{lem:same_cost_tree_approximation} that when the input tensor network has a tree structure, both the density matrix algorithm and the canonicalization-based algorithm exhibit the same asymptotic cost for truncating the bond sizes in the tree tensor network.

The \cref{lem:same_cost} and \cref{lem:step_dm} below are used to prove \cref{lem:same_cost_tree_approximation}.

\begin{lemma}\label{lem:same_cost}
    Consider a tensor network with a tree structure $T=(V_T,E_T)$.
    Assuming that changing a tree tensor network into the canonical form will not change any bond size of the network.
    For two adjacent vertices $z,v$, forming $\texttt{canonical\_form}_{T}(v, z)$ has the same asymptotic cost as forming $\texttt{density\_matrix}_{T}(v, z)$.
\end{lemma}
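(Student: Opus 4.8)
The plan is to run the two constructions along the \emph{same} leaves-to-root traversal of the relevant subtree and to show that their per-vertex costs agree term by term. First I would fix notation: let $S\subseteq V_T$ be the set of vertices on the $v$-side of the edge $(v,z)$, rooted at $v$, so that the parent of $v$ is $z\notin S$ while every other $w\in S$ has its parent $\mathrm{par}(w)$ inside $S$. For $w\in S$ let $p_w$ be the dimension of the edge from $w$ to $\mathrm{par}(w)$, let $q^{(w)}_1,\dots,q^{(w)}_{k_w}$ be the dimensions of the edges from $w$ to its children in $S$, and let $a_w$ be the product of the dimensions of the uncontracted modes incident on $w$; then the tensor $T(w)$ has size $p_w a_w\prod_i q^{(w)}_i$. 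The hypothesis that canonicalization changes no bond size gives $a_w\prod_i q^{(w)}_i\ge p_w$ for every $w\in S$ (otherwise the QR factorization performed at $w$, of a matrix with $a_w\prod_i q^{(w)}_i$ rows and $p_w$ columns, would shrink the parent bond), and it forces the triangular factor produced at $w$ to be the square $p_w\times p_w$ matrix $\mat{R}_w$.

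Next I would spell out the canonicalization of \cref{def:canonical_form} as a post-order sweep over $S$: at each $w$ one first contracts the square factors $\mat{R}_c$ received from the already-processed children $c$ into $T(w)$, each contraction being a matrix multiplication of cost $\Theta\!\big(q^{(w)}_i\, p_w a_w\prod_j q^{(w)}_j\big)$ that leaves the size of $T(w)$ unchanged, and then matricizes $T(w)$ with the parent edge as the column index and computes a QR factorization at cost $\Theta\!\big(a_w p_w^2\prod_i q^{(w)}_i\big)$ (the hypothesis pins the thin dimension at $p_w$), keeping the orthogonal factor at $w$ and sending $\mat{R}_w$ upward. Summing over $S$, the total cost is $\sum_{w\in S}\Theta\!\big((\sum_i q^{(w)}_i)\,p_w a_w\prod_j q^{(w)}_j+a_w p_w^2\prod_i q^{(w)}_i\big)$.

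Then I would describe forming $\texttt{density\_matrix}_{T}(v,z)$ via the recursion \eqref{eq:dm_cache} along the same sweep: at each $w$ one contracts the square $p_c\times p_c$ matrices $\texttt{density\_matrix}_{T}(c,w)$ of the children into the $p_w\times\big(a_w\prod_i q^{(w)}_i\big)$ matricization $\mat{M}^{(w)}$ of $T(w)$ (each a size-preserving matrix multiplication of cost $\Theta\!\big(q^{(w)}_i\, p_w a_w\prod_j q^{(w)}_j\big)$), and then contracts the result against a second copy of $\mat{M}^{(w)}$ over all uncontracted and child modes, a $p_w\times\big(a_w\prod_i q^{(w)}_i\big)$ by $\big(a_w\prod_i q^{(w)}_i\big)\times p_w$ multiplication of cost $\Theta\!\big(p_w^2 a_w\prod_i q^{(w)}_i\big)$, obtaining the $p_w\times p_w$ matrix $\texttt{density\_matrix}_{T}(w,\mathrm{par}(w))$ that is passed upward. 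This is exactly the valid contraction path of \eqref{eq:dm_cache} in which the children's density matrices are absorbed into $T(w)$ before it is ``squared,'' and its total cost is $\sum_{w\in S}\Theta\!\big((\sum_i q^{(w)}_i)\,p_w a_w\prod_j q^{(w)}_j+p_w^2 a_w\prod_i q^{(w)}_i\big)$, which is the canonicalization cost term for term.

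I expect the only delicate points to be (i) invoking the no-bond-change hypothesis at the right moments---to fix the thin dimension of the QR matrix at each vertex and to know that the factors sent up the tree are square of size $p_w\times p_w$, matching the shapes of the corresponding child density matrices---and (ii) choosing, on the density-matrix side, the contraction order that mirrors canonicalization rather than the optimal one used by \cref{alg:dm}, so that the two costs come out literally equal rather than merely related by an inequality. Neither is a genuine obstacle, and the term-by-term equality of the two sums then gives the claim.
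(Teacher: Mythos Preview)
Your proposal is correct and follows essentially the same argument as the paper: both compare the two procedures one vertex at a time and match (i) the cost of absorbing each child's square factor ($\mat{R}_c$ versus $\mat{L}_c$) into the tensor at $w$, and (ii) the cost of the local QR against the cost of the final ``squaring'' contraction, finding both to be $\Theta\!\big(p_w^2\,a_w\prod_i q^{(w)}_i\big)$. The only cosmetic difference is that the paper phrases this as a single inductive step and breaks into cases by $|N(v)\setminus\{z\}|\in\{0,1,2\}$ (since the embedding tree is binary), whereas you sum explicitly over the whole subtree $S$.
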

\begin{proof}
    For each edge set $E'\subseteq E_T$, we let $s(E') = \exp(w(E'))$ denote the bond size of $E'$.
    We also let $\mat{M}_v$ denote the tensor at each vertex $v\in V_T$.

    For the pair of adjacent vertices $v, z$, assume that
$\texttt{canonical\_form}_{T}(u, v)$ already exist for all $u\in N(v)\setminus \{z\}$. Let $\mat{R}_u$ denote the non-orthogonal core of $\texttt{canonical\_form}_{T}(u, v)$.
To construct the form $\texttt{canonical\_form}_{T}(v, z)$,
we first contract $\mat{M}_v$ with $\mat{R}_u$ for each $u\in N(v)\setminus \{z\}$, which yields a cost of $\Theta\left(\sum_{u\in N(v)\setminus \{z\}}s(E_T(v))s(E_T(u,v))\right)$, and then use a QR decomposition to orthogonalize the tensor at $v$, which yields a cost of $\Theta\left(s(E_T(v))s(E_T(v, z))\right)$.
These steps make the overall cost
\begin{equation}
\Theta\left(\sum_{u\in N(v)}s(E_T(v))s(E_T(u,v))\right).
\end{equation}

We now consider the computation of  $\texttt{density\_matrix}_{T}(v, z)$ under the assumption that for all $u\in N(v)\setminus \{z\}$,
$\mat{L}_u=\texttt{density\_matrix}_{T}(u, v)$ already exist. Below we consider the three different cases,
\begin{itemize}
    \item when $N(v)\setminus \{z\} = \emptyset$, the computation involves the contraction $\mat{M}_v\mat{M}_v^T$,
    \item when $N(v)\setminus \{z\} = \{u\}$, the computation involves the contraction $(\mat{M}_v\mat{L}_u)\mat{M}_v^T$,
    \item when $N(v)\setminus \{z\} = \{u_1,u_2\}$, the computation involves the contraction $\mat{M}_v(\mat{L}_{u_1}\otimes \mat{L}_{u_2})\mat{M}_v^T$, which can be efficiently computed by performing the contractions $\mat{M}_v$ with $\mat{L}_{u_1}$ and $\mat{M}_v$ with $\mat{L}_{u_2}$ first, and then contracting the outputs.
\end{itemize}
For all the cases above, the overall cost  is $\Theta\left(\sum_{u\in N(v)}s(E_T(v))s(E_T(u,v))\right)$, which equals the cost of the canonical form.
Since both $\texttt{canonical\_form}_{T}(v, z)$ and $\texttt{density\_matrix}_{T}(v, z)$ have the same recursive relation, computing $\texttt{canonical\_form}_{T}(u, v)$ has the same cost as  that of
 the $\texttt{density\_matrix}_{T}(u, v)$ for $u\in N(v)\setminus \{z\}$. This finishes the proof.

\end{proof}

\begin{lemma}\label{lem:step_dm}
Consider a tensor network with a tree structure $T=(V_T,E_T)$, where each $z\in V_T$ represents a tensor $\mat{M}_z$. Let $v\in V_T$ be a leaf vertex that represents $\mat{M}_v\in \R^{a_v\times b_v}$, where
 $a_v$ denotes the size of the uncontracted modes and $b_v$ denotes the size of the contracted modes incident on $v$, and let $u= \texttt{parent}(T,v)$. Given that $\texttt{density\_matrix}_T(u,v)$ has been computed, computing the orthogonal matrix $\mat{U}_v$ (Line~\ref{line:end_dm} or \ref{line:end_qr_svd} of \cref{alg:dm}) has a cost of $\Theta\left(a_vb_v^2\right)$.
\end{lemma}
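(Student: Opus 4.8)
The plan is to read off the cost of every elementary operation in the two branches of \cref{alg:dm} that produce $\mat{U}_v$ — namely Lines~\ref{line:begin_dm}--\ref{line:end_dm} in one case and Lines~\ref{line:begin_qr_svd}--\ref{line:end_qr_svd} in the other — and to sum them using the cost model of \cref{subsec:cost_model}. First I would fix the shapes of the objects involved: since $v$ is a leaf of $T$, the partition $T(v)$ is the single tensor $\mat{M}_v\in\R^{a_v\times b_v}$, the contracted-edge set $B_v$ incident on $v$ is exactly the tree edge joining $v$ to its parent $u$ and hence has total bond size $b_v$, and the given density matrix $\mat{L}_u=\texttt{density\_matrix}_{T'}(u,v)$ is a $b_v\times b_v$ matrix. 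With $\mat{M}_v$ and $\mat{L}_u$ of these fixed shapes, each branch is a short, fixed sequence of dense matrix multiplications, one QR factorization, and one eigen-/singular-value decomposition, so the proof is essentially bookkeeping.

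On the branch taken when $a_v=O(b_v)$ (written $w(A_v)=O(w(B_v))$ in \cref{alg:dm}), I would form $\mat{L}_v=\mat{M}_v\mat{L}_u\mat{M}_v^T$ as $(\mat{M}_v\mat{L}_u)\mat{M}_v^T$: the first product costs $\Theta(a_vb_v^2)$, the second $\Theta(a_v^2b_v)$, and then the leading eigenvectors of the $a_v\times a_v$ matrix $\mat{L}_v$ are obtained at cost $O(a_v^3)$. Using $a_v=O(b_v)$, both $a_v^2b_v$ and $a_v^3$ are $O(a_vb_v^2)$, so the branch costs $O(a_vb_v^2)$; since the first product alone costs $\Omega(a_vb_v^2)$, the total is $\Theta(a_vb_v^2)$.

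On the branch taken when $b_v=o(a_v)$ (written $w(A_v)\neq O(w(B_v))$ in \cref{alg:dm}), the QR factorization of the tall matrix $\mat{M}_v$ costs $\Theta(a_vb_v\min(a_v,b_v))=\Theta(a_vb_v^2)$ and yields $\mat{Q}_v\in\R^{a_v\times b_v}$ and $\mat{R}_v\in\R^{b_v\times b_v}$; forming $\mat{R}_v\mat{L}_u\mat{R}_v^T$ and taking its leading singular vectors are operations on $b_v\times b_v$ matrices and cost $O(b_v^3)$; and the final product $\mat{U}_v=\mat{Q}_v\hat{\mat{U}}_v$ costs $O(a_vb_v\min(b_v,\chi))=O(a_vb_v^2)$. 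Since $b_v=o(a_v)$, $b_v^3=o(a_vb_v^2)$, so the branch again costs $\Theta(a_vb_v^2)$, with the QR step supplying the lower bound.

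The argument is routine accounting; the only point that needs care — and the reason \cref{alg:dm} branches on the comparison between $w(A_v)$ and $w(B_v)$ — is checking that the decomposition steps are absorbed by $\Theta(a_vb_v^2)$: the $a_v\times a_v$ eigendecomposition is absorbed exactly because $a_v=O(b_v)$ makes $a_v^3=O(a_vb_v^2)$, and the $b_v\times b_v$ SVD is absorbed exactly because $b_v=o(a_v)$ makes $b_v^3=o(a_vb_v^2)$. So the hard part is merely articulating this dichotomy cleanly rather than anything technically deep.
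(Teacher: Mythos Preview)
Your proposal is correct and follows essentially the same route as the paper: split on the branch condition of \cref{alg:dm}, cost each matrix product and factorization using the model of \cref{subsec:cost_model}, and observe that the branch condition is precisely what is needed to absorb the decomposition cost into $\Theta(a_vb_v^2)$. The only cosmetic difference is that the paper phrases the second case as $a_v=\Omega(b_v)$ and tracks the truncation rank $r\le\min(a_v,b_v,\chi)$ explicitly (writing the eigendecomposition cost as $\Theta(a_v^2r)$ and the final product as $\Theta(a_vb_vr)$), whereas you use the coarser bounds $O(a_v^3)$ and $O(a_vb_v^2)$; both collapse to the same $\Theta(a_vb_v^2)$.
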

\begin{proof}
For the case where $a_v=O\left(b_v\right)$, the algorithm first computes $\mat{L}_v = \texttt{density\_matrix}_T(v)$ with a cost of $\Theta\left(a_vb_v^2 + a_v^2b_v\right)$, and then
computes $\mat{U}_v$ via a low-rank factorization on $\mat{L}_v\in \R^{a_v\times a_v}$ with the maximum rank being $r = O\left(a_v\right)$, which costs $\Theta\left(a_v^2r\right)$. The overall cost is
$
\Theta\left(a_vb_v^2 + a_v^2b_v + a_v^2r\right) = \Theta\left(a_vb_v^2\right).
$

For the case where $a_v = \Omega\left(b_v\right)$, the algorithm first performs a QR decomposition of $\mat{M}_v$ into $\mat{U}_v\in \R^{a_v\times b_v},\mat{R}_v\in \R^{b_v\times b_v}$ with a cost of $\Theta\left(a_vb_v^2\right)$, then computes the leading singular vectors of $\mat{R}_v\mat{L}_u$ that is denoted $\hat{\mat{U}}_v\in \R^{b_v\times r}$, which costs $\Theta\left(b_v^3\right)$. Finally, $\mat{U}_v$ is updated as the product $\mat{U}_v\hat{\mat{U}}_v$ with a cost of $\Theta(a_vb_vr)$. Overall the cost is $\Theta\left(a_vb_v^2+b_v^3+a_vb_vr\right)=\Theta\left(a_vb_v^2\right)$. This finishes the proof.
\end{proof}

 \begin{figure}[!ht]
\centering
\includegraphics[width=.9\textwidth, keepaspectratio]{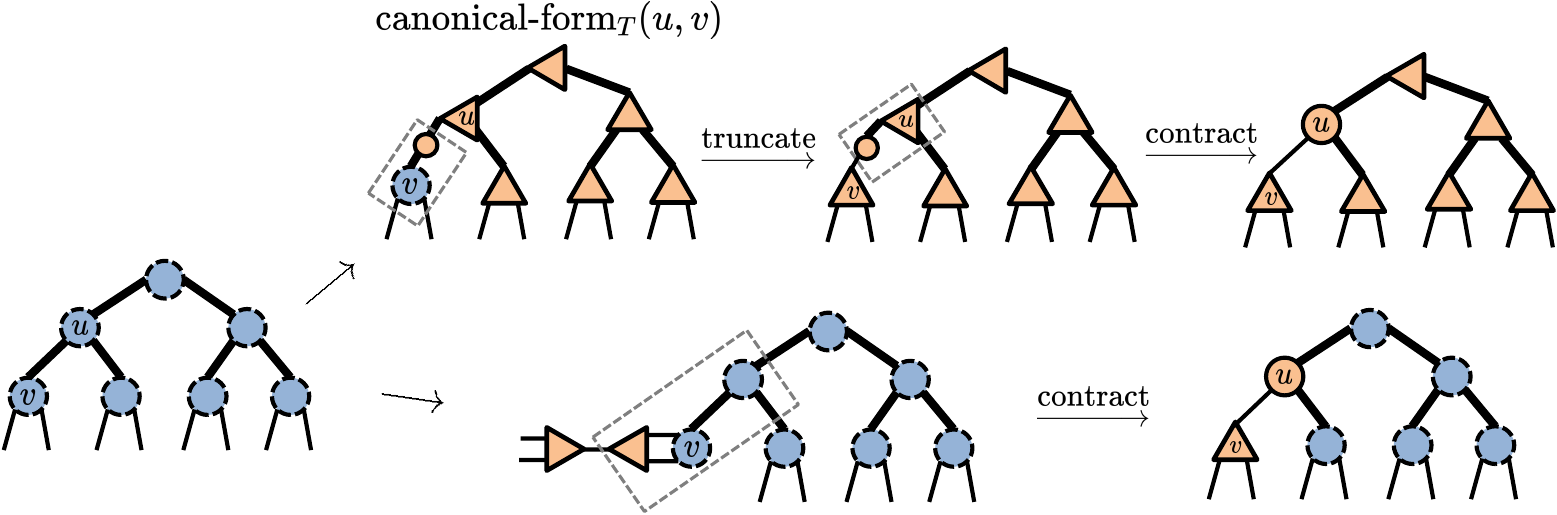}

\caption{
Illustration of the difference between the canonicalization-based algorithm and the density matrix algorithm. The upper path denotes truncating the edge $(u,v)$ using canonicalization, and the lower path uses the density matrix algorithm. In the lower path, the orthogonal matrix is calculated as the leading singular vectors/eigenvectors of the density matrix $\texttt{density\_matrix}_T(z)$.
}
\label{fig:compare}
\end{figure}

\begin{lemma}\label{lem:same_cost_tree_approximation}
Consider a given tree tensor network $T=(V_T,E_T)$.
Let $\sigma : V_T\to \{1,\ldots, |V_T|\}$ be a post-order DFS traversal of $T$ that shows the the tensor update ordering.
Assuming that changing a tree tensor network into its canonical form will not change any bond size of the network, the asymptotic cost of the density matrix algorithm (\cref{alg:dm}) for truncating the modes in $T$ is the same as that of the canonicalization-based algorithm (\cref{alg:canonize})
if both algorithms use the same update ordering $\sigma$, and the same maximum bond size $\chi$.
\end{lemma}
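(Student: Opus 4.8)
The plan is to run the two algorithms in lockstep along the common post-order DFS ordering $\sigma$ and show that the work they perform at each vertex is asymptotically equal, so that the two total costs — each a sum of per-vertex contributions plus the identical final root step (forming $\mat{M}_r$, the contraction of the root tensor) — coincide. Throughout I would use the hypothesis that canonicalization changes no bond size, which is what makes the two per-vertex subproblems literally indexed by the same dimensions.

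Fix a non-root vertex $v$ with parent $u=\texttt{parent}(T,v)$ and isolate the work done ``at step $v$''. On the canonicalization side (\cref{alg:canonize}) this is: (i) bringing the $u$-side into the form $\texttt{canonical\_form}_T(u,v)$ with non-orthogonal core $\mat{R}_u$, which by memoization requires only those $\texttt{canonical\_form}_T(w,w')$ not already produced at earlier steps; and (ii) forming $\mat{M}_v\mat{R}_u$, taking its rank-$\chi$ factorization to get the isometry $\mat{U}_v$ and truncated core $\hat{\mat{R}}_u$, and updating the tensor at $u$ to $\hat{\mat{R}}_u\mat{M}_u$. On the density-matrix side (\cref{alg:dm}) the work is: (i$'$) computing $\texttt{density\_matrix}_{T'}(u,v)$ (Line~\ref{line:dm_2}) or $\texttt{density\_matrix}_{T'}(v)$ (Line~\ref{line:begin_dm}), again memoized; and (ii$'$) extracting $\mat{U}_v$ via the direct eigendecomposition branch or the QR-SVD branch. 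For (i) versus (i$'$): on a rooted binary tree at most three such objects (one per incident directed edge) are ever needed per vertex, both families obey the same recursion over the tree, and \cref{lem:same_cost} states that forming a canonical core around an edge costs the same asymptotically as forming the corresponding density matrix; hence the memoized totals agree term by term. Here I would also record that although \cref{alg:dm} retains each processed subtree (appending $T'(v)$ and $\mat{U}_v^T$ to $T'(u)$) whereas \cref{alg:canonize} discards it into the output, the $\mat{U}_w$ are isometries, so the retained pieces contract to identities inside every later density matrix and contribute no asymptotic overhead — this inertness is exactly what makes the two recursions coincide.

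For (ii) versus (ii$'$): \cref{lem:step_dm} gives that, once $\texttt{density\_matrix}_{T'}(u,v)$ is available, extracting $\mat{U}_v$ costs $\Theta(a_vb_v^2)$, where $a_v$ is the size of the uncontracted modes incident on $v$ (in the reduced tree) and $b_v=\exp(w(E_T(u,v)))$ is the pre-truncation size of $(u,v)$. Under the no-bond-change hypothesis $\mat{R}_u$ is $b_v\times b_v$, so forming $\mat{M}_v\mat{R}_u$ costs $\Theta(a_vb_v^2)$, the rank-revealing QR of the resulting $a_v\times b_v$ matrix costs $\Theta(a_vb_v\min(a_v,b_v))=O(a_vb_v^2)$, and the update $\hat{\mat{R}}_u\mat{M}_u$ charged to step $v$ is dominated similarly; splitting into the cases $a_v=O(b_v)$ and $a_v=\Omega(b_v)$ exactly as in the proof of \cref{lem:step_dm} pins this down to $\Theta(a_vb_v^2)$, matching the density-matrix side in both directions.

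Summing the per-vertex contributions and adding the common final step then yields equal total asymptotic costs, which is the claim. I expect the main obstacle to be making the memoization bookkeeping of part (i)/(i$'$) fully rigorous: one must account for the fact that \cref{alg:dm} never contracts a vertex's (augmented) partition into a single tensor, verify that the isometry property of the already-computed $\mat{U}_w$ really does render the retained subtrees inert so that the recomputed density matrices agree — up to an orthogonal change of basis — with the canonical cores used by \cref{alg:canonize}, and check that the QR-SVD branch is subsumed uniformly with the direct branch in this accounting. Everything else is routine cost arithmetic supplied by \cref{lem:same_cost,lem:step_dm}.
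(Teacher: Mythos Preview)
Your proposal is correct and follows essentially the same strategy as the paper: decompose the work per vertex into (i) forming the canonical form / density matrix around the edge $(u,v)$ and (ii) extracting $\mat{U}_v$ and propagating the truncation to $u$, then invoke \cref{lem:same_cost} for part (i) and \cref{lem:step_dm} together with a direct $\Theta(a_vb_v^2)$ calculation for part (ii). The paper's version is slightly terser --- it splits the per-vertex work into three pieces (separating out the final multiply into $\mat{M}_u$) and does not explicitly discuss the isometry-inertness of the retained $\mat{U}_w$'s that you flag as the main bookkeeping obstacle --- but the argument is the same.
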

\begin{proof}
Consider the step to update the tensor at a given vertex $v\in V_T$.
Let $\mat{M}_v\in \R^{a_v\times b_v}$, where
 $a_v$ denote the size of the uncontracted modes and $b_v$ denote the size of the contracted modes of $\mat{M}_v$.
 Also let $r = \min\left(a_v,b_v,\chi\right)$ and $u= \texttt{parent}(T,v)$.
We break down the cost of \cref{alg:dm} and \cref{alg:canonize} into 3 parts, and show that for each of the three parts, the costs of the two algorithms are asymptotically equal.

In \cref{alg:canonize}, the steps include 1) forming $\texttt{canonical\_form}_T(u,v)$, 2)
multiplying $\mat{M}_v$ with $\mat{R}_u\in \R^{b_v\times b_v}$, the non-orthogonal core of the canonical form, and 3) performing a rank-$\chi$ approximation to get $\mat{U}_v\in \R^{a_v\times r},\hat{\mat{R}}_u\in \R^{r\times b_v}$, and 3) multiplying $\hat{\mat{R}}_u$ with $\mat{M}_u$.

In \cref{alg:dm} with each partition contracted into a tensor, the steps include 1) forming the density matrix $\texttt{density\_matrix}_T(u,v)$, 2) using $\texttt{density\_matrix}_T(u,v)$ and $\mat{M}_v$ to compute $\mat{U}_v\in \R^{a_v\times r}$  and  $\mat{M}_v = \mat{U}_v^T\mat{M}_v$, and 3) multiplying $\mat{M}_v\in \R^{r\times b_v}$ with $\mat{M}_u$.

The comparison between the two algorithms is visualized in \cref{fig:compare}.
It can be seen that
the third step of both algorithms have the same asymptotic cost. For the first step, we show in  \cref{lem:same_cost} that  both algorithms have the same asymptotic cost.
For the second step,
the canonicalization-based algorithm yields a cost of $\Theta\left(a_vb_v^2 + a_vb_vr\right) = \Theta\left(a_vb_v^2\right)$ using the cost model in \cref{subsec:cost_model}.
In addition, we show in \cref{lem:step_dm} that the cost to compute $\mat{U}_v$
in the density matrix algorithm under the assumption that each partition is contracted into a tensor is also $\Theta\left(a_vb_v^2\right)$. Since the multiplication $\mat{U}_v^T\mat{M}_v$ costs $\Theta\left(a_vb_vr\right)=O\left(a_vb_v^2\right)$, the cost equals the cost of the canonicalization-based algorithm, thus finishing the proof.
\end{proof}

\end{document}